\newtheorem{reductionrule}{Rule}
\newtheorem{property}[theorem]{Property}
\Crefname{theorem}{Theorem}{Theorems}
\Crefname{theorem2}{Theorem}{Theorems}
\Crefname{lemma}{Lemma}{Lemmas}
\Crefname{lemma2}{Lemma}{Lemmas}
\Crefname{figure}{Fig.}{Figs.}
\Crefname{algorithm}{Listing}{Listings}
\Crefname{section}{Section}{Sections}
\Crefname{observation}{Observation}{Observations}
\Crefname{property}{Property}{Properties}
\Crefname{lemma}{Lemma}{Lemmas}
\Crefname{claim}{Claim}{Claims}
\Crefname{claimx}{Claim}{Claims}
\Crefname{figure}{Fig.}{Figs.}
\Crefname{subfigure}{Fig.}{Figs.} 
\Crefname{minipage}{Fig.}{Figs.}
\Crefname{enumi}{Condition}{Conditions}
\Crefname{reductionrule}{Rule}{Rules}
\definecolor{realblue}{rgb}{0,0,1}
\definecolor{defblue}{rgb}{0.274,0.392,0.666}
\definecolor{darkerblue}{rgb}{0.094,0.455,0.804}
\definecolor{darkblue}{rgb}{0.063,0.306,0.545}
\definecolor{linkblue}{rgb}{0.098,0.098,0.4392}
\definecolor{red}{rgb}{0.627,0.117,0.156}
\definecolor{green}{RGB}{51, 153, 102}
\definecolor{orange}{rgb}{0.903,0.739,0.382}
\definecolor{realred}{rgb}{1,0,0}
\definecolor{lipicsblue}{rgb}{0.08235294118,0.3098039216,0.537254902}
\let\emph\relax
\DeclareTextFontCommand{\emph}{\color{defblue}\em}
\DeclareTextFontCommand{\bl}{\color{lipicsblue}}
\newcommand{\bridge}{\mathrm{bridge}}
\newcommand{\NP}{\textsf{NP}\xspace}
\newcommand{\wlpF}[1]{{\sc $#1$-Span Weakly leveled planarity}\xspace}
\DeclareMathOperator{\spn}{span}
\newcommand{\att}{\mathrm{att}}
\newcommand{\trim}{\mathrm{trim}}
\author{Michael A. Bekos}{University of Ioannina, Greece}{bekos@uoi.gr}{https://orcid.org/0000-0002-3414-7444}{}
\author{Giordano {Da Lozzo}}{Roma Tre University, Italy}{giordano.dalozzo@uniroma3.it}{http://orcid.org/0000-0003-2396-5174}{}
\author{Fabrizio Frati}{Roma Tre University, Italy}{fabrizio.frati@uniroma3.it}{https://orcid.org/0000-0001-5987-8713}{}
\author{Siddharth Gupta}{BITS Pilani, K K Birla Goa Campus, India}{siddharthg@goa.bits-pilani.ac.in}{https://orcid.org/0000-0003-4671-9822}{}
\author{Philipp Kindermann}{Trier University, Germany}{kindermann@uni-trier.de}{https://orcid.org/0000-0001-5764-7719}{}
\author{Giuseppe Liotta}{Perugia University, Italy}{giuseppe.liotta@unipg.it}{https://orcid.org/0000-0002-2886-9694}{}
\author{Ignaz Rutter}{University of Passau, Germany}{rutter@fim.uni-passau.de}{https://orcid.org/0000-0002-3794-4406}{}
\author{Ioannis G. Tollis}{University of Crete, Greece}{tollis@csd.uoc.gr}{https://orcid.org/0000-0002-5507-7692}{}
\authorrunning{Bekos et al.}
\keywords{Leveled planar graphs, edge span, graph drawing, edge-length ratio}
\title{Weakly Leveled Planarity with Bounded Span}
\begin{document}

\maketitle

\begin{abstract}
This paper studies planar drawings of graphs in which each vertex is represented as a point along a sequence of horizontal lines, called levels, and each edge is either a horizontal segment or a strictly $y$-monotone curve. A graph is $s$-span weakly leveled planar if it admits such a drawing where the edges have span at most $s$; the span of an edge is the number of levels it touches minus one. 
We investigate the problem of computing $s$-span weakly leveled planar drawings from both the computational and the combinatorial perspectives. 
We prove the problem to be para-NP-hard with respect to its natural parameter $s$ and investigate its complexity with respect to widely used structural parameters. We show the existence of a polynomial-size kernel with respect to vertex cover number and prove that the problem is FPT when parameterized by treedepth. We also present upper and lower bounds on the span for various graph classes.
%
%% DECOMMENTA
%, including $2$-outerplanar graphs, series-parallel graphs, and cycle-trees
 Notably, we show that cycle trees, a family of $2$-outerplanar graphs generalizing Halin graphs, are $\Theta(\log n)$-span weakly leveled planar and $4$-span weakly leveled planar when $3$-connected. As a byproduct of these combinatorial results, we obtain improved bounds on the edge-length ratio of the graph families under consideration. 
\end{abstract}

%%%%%%%%%%%%%%%%%%%%%%%%%%%%%%%%%%%%%%%%%%%%%%%%%%%%%%
\section{Introduction}
%%%%%%%%%%%%%%%%%%%%%%%%%%%%%%%%%%%%%%%%%%%%%%%%%%%%%%
Computing crossing-free drawings of planar graphs is at the heart of Graph Drawing. Indeed, since the seminal papers by Fáry~\cite{Far48} and by Tutte~\cite{tutte1963draw} were published, a rich body of literature has been devoted to the study of crossing-free drawings of planar graphs that satisfy a variety of optimization criteria, including  the area~\cite{DBLP:journals/algorithmica/Kant96,DBLP:journals/combinatorica/FraysseixPP90}, the angular resolution~\cite{DBLP:journals/siamcomp/FormannHHKLSWW93,DBLP:journals/siamdm/MalitzP94}, the face convexity~\cite{DBLP:journals/ijcga/ChrobakK97,DBLP:conf/gd/BonichonFM04,DBLP:journals/algorithmica/BonichonFM07}, the total edge length~\cite{DBLP:journals/siamcomp/Tamassia87}, and the edge-length ratio~\cite{DBLP:conf/gd/BorrazzoF19,DBLP:journals/jocg/BorrazzoF20,DBLP:journals/ijcga/BlazjFL21}; see also~\cite{BattistaETT99,Tamassia:13}. 
 
%%%%%%% DECOMMENTA
%The research has focused both on drawings whose edges are straight-line segments and on drawings whose edges are simple Jordan arcs or polylines, in which case additional optimization criteria such as optimizing the number of bends along the edges~\cite{DBLP:journals/siamcomp/Tamassia87,DBLP:journals/talg/BlasiusRW16,DBLP:journals/algorithmica/DidimoKLO23,DBLP:conf/soda/DidimoLOP20} or the edge curvature for curves~\cite{DBLP:journals/jal/GoodrichW00,DBLP:journals/jocg/DuncanEGKLN18,DBLP:journals/algorithmica/BekosFK19} have been considered. 

\begin{figure}[b]
    \centering
    \includegraphics[width=\textwidth]{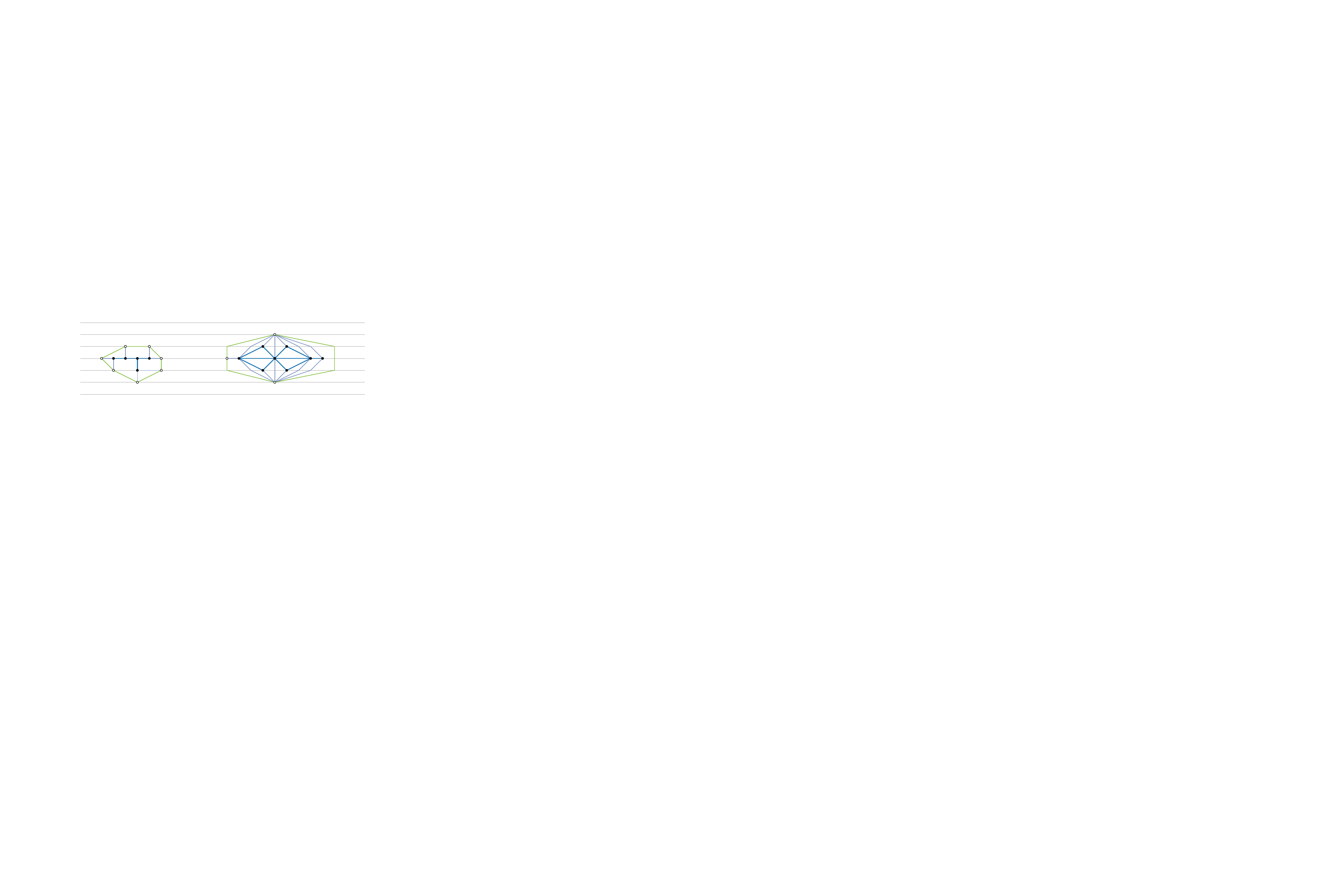}
    \caption{A $1$-span weakly leveled planar drawing of the Frucht graph (left) and a $4$-span weakly leveled planar drawing of the Goldner-Harary graph (right).}
    \label{fig:teaser}
\end{figure}

In this paper, we focus on crossing-free  drawings where the edges are represented as simple Jordan arcs and have the additional constraint of being \emph{$y$-monotone}, that is, traversing each edge from one end-vertex to the other one, the $y$-coordinates never increase or never decrease. 
% DECOMMENTA
%It is easy to see that, in such drawings, the edges that are represented by curves that are only in part horizontal can be modified, so that every edge is either a straight-line horizontal segment or a strictly $y$-monotone curve. 
This leads to a generalization of the well-known \emph{layered drawing} style~\cite{DBLP:journals/algorithmica/DujmovicFKLMNRRWW08,DBLP:conf/gd/BrucknerKM19,DBLP:journals/algorithmica/BrucknerKM22}, where vertices are assigned to horizontal lines, called \emph{levels}, and edges only connect vertices on different levels. We also allow edges between vertices on the same level and seek for drawings of \emph{bounded span}, i.e., in which the edges span few levels. 
%%%%%%% RIMPIAZZA QUELLO SOTTO 
%Concerning the layered drawing style, we recall the seminal work by Heath and Rosenberg~\cite{DBLP:journals/siamcomp/HeathR92}, who study \emph{leveled planar drawings}, i.e., those in which the edges can only connect vertices on consecutive levels and no two edges cross. We also mention the algorithmic framework by Sugiyama et al.~\cite{DBLP:journals/tsmc/SugiyamaTT81}, which yields layered drawings for the so-called hierarchical graphs. In this framework, edges that span more than one level are transformed into paths by inserting a dummy vertex for each level they cross. Hence minimizing the edge span (or equivalently, the number of dummy vertices along the edges) is a relevant~optimization~criterion.
In their seminal work~\cite{DBLP:journals/siamcomp/HeathR92}, Heath and Rosenberg study \emph{leveled planar drawings}, i.e., in which edges only connect vertices on consecutive levels and no two edges cross. We also mention the algorithmic framework by Sugiyama et al.~\cite{DBLP:journals/tsmc/SugiyamaTT81}, which yields layered drawings for the so-called hierarchical graphs. In this framework, edges that span more than one level are transformed into paths by inserting a dummy vertex for each level they cross. Hence minimizing the edge span (or equivalently, the number of dummy vertices along the edges) is a relevant~optimization~criterion.

Inspired by these works, we study \emph{$s$-span weakly leveled planar drawings}, which are crossing-free $y$-monotone drawings in which each edge touches at most $s+1$ levels; see \cref{fig:teaser}. We also study \emph{$s$-span leveled planar drawings}, in which the $y$-coordinates (strictly) increase or (strictly) decrease when traversing each edge from one end-vertex to the other one. 

Note that 1-span weakly leveled planar drawings have been studied in different contexts; for example, Bannister et al.~\cite{DBLP:conf/dagstuhl/BastertM99} prove that graphs that admit such drawings have layered pathwidth at most~two\footnote{Bannister et al. use the term weakly leveled planar drawing to mean 1-span weakly leveled planar drawing. We use a different terminology because we allow edges to span more than one level.}. Felsner et al.~\cite{DBLP:conf/gd/FelsnerLW01,DBLP:journals/jgaa/FelsnerLW03} show that every outerplanar graph has a 1-span  weakly leveled planar drawing and use this to compute a 3D drawing of the graph in linear volume; a similar construction by Dujmović et al.~\cite{DBLP:journals/dmtcs/DujmovicPW04} yields a 2-span leveled planar drawing for every outerplanar graph, which can be used to bound the queue number~\cite{DBLP:journals/siamcomp/HeathR92} of these graphs. In general, our work also relates to track layouts~\cite{DBLP:journals/dmtcs/DujmovicPW04} and to the recently-introduced layered decompositions~\cite{DBLP:journals/jacm/DujmovicJMMUW20}, but in contrast to these research works we insist on planarity.

% A natural generalization of weakly leveled planarity, in which we allow edges to have span larger than one. Inspired by the algorithmic framework by Sugiyama, Tagawa and Toda~\cite{DBLP:journals/tsmc/SugiyamaTT81}, we seek for drawings where every edge has the smallest possible span. The main contributions of this paper can be summarized as follows. 

\subparagraph{Our contributions.} We address the problem of computing a weakly leveled planar drawing with bounded span both from the complexity and from the combinatorial perspectives. Specifically, the {\sc $s$-Span (Weakly) leveled planarity} problem asks whether a graph admits a (weakly) leveled planar drawing where the span of every edge is at most $s$.  
The main contributions of this paper can be summarized as follows.

\begin{itemize}
\item 
In \cref{se:hardness}, we show that the {\sc $s$-Span Weakly leveled planarity} problem is \NP-complete for any fixed $s\ge 1$ (\cref{thm:np-hardness}). Our proof technique implies that {\sc $s$-Span Leveled Planarity} is also \NP-complete. This generalizes the \NP-completeness result by Heath and Rosenberg~\cite{DBLP:journals/siamcomp/HeathR92}  which holds for $s=1$. 

\item 
The para-NP-hardness of {\sc $s$-Span Weakly leveled planarity} parameterized by the span~$s$ motivates the study of FPT approaches with respect to structural parameters of the input graph. In \cref{se:fpt}, we show that the {\sc $s$-Span Weakly leveled planarity} problem has a kernel of polynomial size when parameterized by the vertex cover number (\cref{cor:fpt-vc}) and has a (super-polynomial) kernel when parameterized by the treedepth (\cref{thm:treedepth-FPT}). As also pointed out in~\cite{DBLP:journals/csr/Zehavi22}, designing FPT algorithms parameterized by structural parameters bounded by the vertex cover number, such as the %feedback vertex set number, 
treedepth, pathwidth, and treewidth is a challenging research direction in the context of graph drawing (see, e.g., ~\cite{BalkoCG00V022,DBLP:conf/gd/BhoreLMN21,BhoreLMN23,DBLP:conf/gd/BhoreGMN20,BhoreGMN22,DBLP:conf/gd/BhoreGMN19,BhoreGMN20,HlinenyS19,DBLP:conf/compgeom/ChaplickGFGRS22}). Again, our algorithms can also be adapted to work for {\sc $s$-Span Leveled Planarity}.

%It may be worth remarking that several FPT results in graph drawing are already challenging when the parameter is the vertex cover number (see, e.g.,~\cite{BalkoCG00V022,BhoreLMN23,BhoreGMN22,BhoreGMN20,HlinenyS19,DBLP:journals/csr/Zehavi22,DBLP:journals/corr/abs-2309-01264}); very few are known that use structural parameters bounded by the vertex cover number, such as the treedepth (see, e.g.,~\cite{DBLP:conf/compgeom/ChaplickGFGRS22}). 
\item 
In \cref{se:combinatorial}, we give combinatorial bounds on the span of weakly leveled planar drawings of various graph classes. It is known that outerplanar graphs admit weakly leveled planar drawings with span~$1$~\cite{DBLP:journals/jgaa/FelsnerLW03}. We extend the investigation by considering both graphs with outerplanarity 2 and  graphs with treewidth 2. We prove that some 2-outerplanar graphs require a linear span (\cref{thm:2-outerplanar}). Since Halin graphs (which have outerplanarity 2) admit weakly leveled planar drawings with span~$1$~\cite{DBLP:journals/algorithmica/BannisterDDEW19,digiacomo2023new}, we consider 3-connected cycle-trees~\cite{DBLP:conf/isaac/LozzoDEJ17,DBLP:conf/wads/ChaplickLGLM21}, which also have outerplanarity 2 and include Halin graphs as a subfamily. Indeed, while the Halin graphs are those graphs of polyhedra containing a face that shares an edge with every other face, the 3-connected cycle-trees are the graphs of polyhedra containing a face that shares a vertex with every other face. We show that 3-connected cycle-trees have weakly leveled planar drawings with span~$4$ (\cref{lem:upper-cycle-tree-3conn}), which is necessary in the worst case (\cref{lem:lower-cycle-tree-3conn}). For general cycle-trees, we prove a $\Theta(\log n)$ tight bound on the span (\cref{lem:upper-general-cycle-trees,lem:lower-cycle-tree-general}); such a difference between the $3$-connected and $2$-connected case was somewhat surprising for us. Concerning graphs of treewidth 2, we prove an upper bound of $O(\sqrt n)$ and a lower bound of $2^{\Omega(\sqrt{\log n})}$ on the span of their weakly leveled planar drawings (\cref{lem:series-parallel-upper-bound,lem:series-parallel-lower-bound}).
\end{itemize}

\subparagraph{Remarks.} Dujmovi\'c et al.~\cite{DBLP:journals/algorithmica/DujmovicFKLMNRRWW08} present an FPT algorithm to minimize the number of levels in a leveled planar graph drawing, where the parameter is the total number of levels. They claim that they can similarly get an FPT algorithm that minimizes the span in a leveled planar graph drawing, where the parameter is the span. Our algorithm differs from the one of Dujmovi\'c et al.~\cite{DBLP:journals/algorithmica/DujmovicFKLMNRRWW08} in three directions: (i) We optimize the span of a weakly level planar drawing, which is not necessarily optimized by minimizing the span of a leveled planar drawing; (ii) we consider structural parameters rather than a parameter of the drawing; one common point of our three algorithms is to derive a bound on the span {\em from} the bound on the structural parameter; and (iii) our algorithms perform conceptually simple kernelizations, while the one in~\cite{DBLP:journals/algorithmica/DujmovicFKLMNRRWW08} exploits a sophisticated dynamic programming on a path decomposition of the input graph.

%A key component for our FPT results is proving that the bounds on our parameters also imply a bound on the number of levels in a drawing of bounded span. Interestingly, one of the early results in parameterized graph drawing tackles the problem of testing whether the graph admits a leveled planar drawing within a given number of levels, and provides an FPT algorithm with respect to such a number~\cite{DBLP:journals/algorithmica/DujmovicFKLMNRRWW08}. While one would be tempted to use such an algorithm to solve our problem,  the result in~\cite{DBLP:journals/algorithmica/DujmovicFKLMNRRWW08} provides no guarantee on the span of the computed drawing, which can be as large as the number of levels minus one.

Concerning the combinatorial contribution, a byproduct of our results implies new bounds on the planar edge-length ratio~\cite{DBLP:conf/gd/LazardLL17,DBLP:journals/tcs/LazardLL19,DBLP:conf/cccg/HoffmannKKR14} of families of planar graphs. The planar edge-length ratio of a planar graph is the minimum edge-length ratio (that is, the ratio of the longest to the shortest edge) over all planar straight-line drawings of the graph. Borrazzo and Frati~\cite{DBLP:journals/jocg/BorrazzoF20} have proven that the planar edge-length ratio of an $n$-vertex 2-tree is $O(n^{0.695})$. \cref{lem:series-parallel-upper-bound}, together with a result relating the span of a weakly leveled planar drawing to its edge-length ratio~\cite[Lemma 4]{digiacomo2023new} lowers the upper bound of~\cite{DBLP:journals/jocg/BorrazzoF20} to $O(\sqrt n)$ (\cref{cor:series-edgelength}). We analogously get an upper bound of $9$ on the edge-length ratio of $3$-connected cycle-trees (\cref{co:edge-length-cycle-tree-3conn}).

\section{Preliminaries} \label{se:preliminaries}
In the paper, we only consider simple connected graphs, unless otherwise specified. We use standard terminology in the context of graph theory~\cite{Diestelbook} and graph drawing~\cite{BattistaETT99}.

A \emph{drawing} of a graph maps each vertex to a distinct point in the plane and each edge to a Jordan arc between its endpoints. A drawing is \emph{planar} if no two edges intersect, except at a common end-point. In a planar drawing, a vertex or an edge is \emph{external} if it is incident to the outer face, and \emph{internal} otherwise. 
A graph is \emph{planar} if it admits a planar drawing. 
A \emph{plane graph} is a planar graph together with a \emph{planar embedding}, which is an equivalence class of planar drawings, where two planar drawings of a connected graph are equivalent if they have the same clockwise order of the edges incident to each vertex and order of the vertices along the outer face. 
A graph drawing is \emph{$y$-monotone} if each edge is drawn as a strictly $y$-monotone curve and \emph{weakly $y$-monotone} if each edge is drawn as a horizontal segment or as a strictly $y$-monotone curve. 
A drawing is \emph{straight-line} (\emph{polyline}) if each edge is represented by a straight-line segment (resp.\ by a sequence of straight-line segments).

For a positive integer $k$, we denote by $[k]$ the set $\{1,\dots,k\}$.
A \emph{leveling} of a graph $G=(V,E)$ is a function~$\ell \colon V \to [k]$.
A leveling $\ell$ of $G$ is \emph{proper} if, for any edge $(u,v) \in E$, it holds $|\ell(u) - \ell(v)| = 1$, and it is \emph{weakly proper} if $|\ell(u) - \ell(v)| \leq 1$.
For each $i \in [k]$, we define $V_i = \ell^{-1}(i)$ and call it the \emph{$i$-th level} of $\ell$. The \emph{height} of~$\ell$ is $h(\ell)=\max_{v \in V} \ell(v) - \min_{v\in V} \ell(v)$. 
A \emph{level graph} is a pair~$(G,\ell)$, where $G$ is a graph and $\ell$ is a leveling of $G$.  A \emph{(weakly) level planar drawing} of a level graph $(G,\ell)$ is a planar (weakly) $y$-monotone drawing of~$G$ where each vertex is drawn with $y$-coordinate $\ell(v)$.
A level graph $(G,\ell)$ is \emph{(weakly) level planar} if it admits a (weakly) level planar drawing. The \emph{span} of an edge $(u,v)$ of a level graph~$(G,\ell)$ is $\spn_{\ell}(u,v)=|\ell(u)-\ell(v)|$. The \emph{span} of a leveling~$\ell$ of $G$ is $\spn(\ell) = \max_{(u,v) \in E} \spn_{\ell}(u,v)$.  

%As a (weakly) leveled planar drawing with span at most $s$ of a graph $G$ can be constructed trivially from (weakly) leveled planar drawings with span at most $s$ of the connected components of $G$, \todo{This paragraph has to go the preliminaries of the appendix. FF: Careful guys, we already use connectivity (implicitly) when defining a planar embedding, that is why I put it at the end of the intro.BL: The implicit assumption of connectivity in the definition of planar embedding is sort of standard, but we may be more precise if we feel it is important. I think it is fine to leave this paragraph here; I would not put it at the end of the intro}.

%%%%% FF: I DO NOT THINK WE NEED THE ONE BELOW, EVERY GRAPH IS LEVELED PLANAR
%Also, a graph is \emph{(weakly) leveled planar} if it admits a leveling $\ell$ such that $(G,\ell)$ is (weakly) level planar. 

The following observation rephrases a result of Di Battista and Nardelli~\cite[Lemma~1]{BattistaN88} in the weakly-level planar setting.

\begin{observation}\label{obs:gioproperty}
    Let $(G,\ell)$ be a level graph such that $\ell$ is (weakly) proper.
    For each $i\in[k]$, let $\prec_i$ be a linear ordering  on $\ell^{-1}(i)$. Then, there exists a (weakly) level planar drawing of $(G,\ell)$ that respects $\prec_i$ (i.e., in which the left-to-right ordering of the vertices in $\ell^{-1}(i)$ is $\prec_i$) if and only if:
    \begin{enumerate}
    	\renewcommand{\theenumi}{\roman{enumi}} % Changes the counter to lowercase Roman numerals
    	\renewcommand{\labelenumi}{\textbf{(\theenumi)}} % Adds bold parentheses around the counter
        \item if $(u,v)\in E(G)$ with $\ell(u)=\ell(v)=i$, then $u$ and $v$ are consecutive in $\prec_i$; and
        \item if $(u,v)$ and $(w,x)$ are two independent edges (i.e., $\{u,v\}\cap \{w,x\}=\emptyset$) with $\ell(u)=\ell(w)=i$, $\ell(v)=\ell(x)=i+1$, and $u\prec_{i} w$, then $v\prec_{i+1} x$. 
    \end{enumerate}
\end{observation}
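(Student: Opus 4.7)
The plan is to prove the two implications separately. The forward direction is a direct geometric consequence of weak $y$-monotonicity and planarity; the backward direction will be obtained by exhibiting a concrete straight-line realization of the drawing.

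For the forward implication, I would start from a weakly level planar drawing $\Gamma$ that respects every $\prec_i$. Condition \textbf{(i)} follows because a same-level edge $(u,v)$ with $\ell(u)=\ell(v)=i$ must be drawn as a horizontal segment (by the definition of weak $y$-monotonicity), so any third vertex $w$ with $u\prec_i w\prec_i v$ would be forced to lie in the interior of this segment, contradicting planarity. For condition \textbf{(ii)}, I would argue by contradiction: if independent edges $(u,v)$ and $(w,x)$ as in the statement satisfied $x\prec_{i+1} v$ rather than $v\prec_{i+1} x$, then their images are strictly $y$-monotone Jordan arcs whose four endpoints interleave along levels $i$ and $i+1$, forcing a crossing by a standard Jordan curve argument.

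For the backward implication, I would place every vertex $v\in \ell^{-1}(i)$ at the point $(\pi_i(v), i)$, where $\pi_i(v)$ denotes the rank of $v$ in $\prec_i$, and draw every edge as a straight-line segment. The resulting drawing is weakly $y$-monotone and respects each $\prec_i$ by construction, so only planarity remains to be verified. I would do this by a short case analysis on the possible bad pairs of edges: two horizontal edges on the same level are disjoint by (i); two independent edges between levels $i$ and $i+1$ do not cross by (ii); two edges sharing an endpoint diverge as straight segments from the shared point; and finally a horizontal edge $(u,v)$ on some level $i$ together with an edge $(w,x)$ between levels $i$ and $i+1$ (or between levels $i-1$ and $i$) do not cross because the strictly $y$-monotone edge $(w,x)$ meets $y=i$ only at its endpoint on that level, which by (i) is either $u$, $v$, or a vertex lying outside the interval between them in~$\prec_i$.

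The main obstacle I anticipate is precisely this mixed case in the backward implication: the argument crucially relies on the asymmetry that same-level edges are drawn as horizontal segments while edges between consecutive levels are strictly $y$-monotone, and hence touch each level only at their endpoint. Once this asymmetry is exploited, the remaining cases reduce to arguments essentially identical to those given by Di~Battista and Nardelli~\cite{BattistaN88} for the strictly proper level planar setting, and the weakly proper case follows by the same bookkeeping.
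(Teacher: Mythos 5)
The paper does not actually prove this statement: it is presented as an \emph{observation} that merely rephrases Lemma~1 of Di~Battista and Nardelli~\cite{BattistaN88} in the weakly proper setting, so there is no in-paper argument to compare against. Your self-contained proof is correct and fills exactly the gap that the citation papers over, namely the interaction between horizontal (same-level) edges and strictly $y$-monotone edges. The forward direction is fine: a same-level edge is forced to be a horizontal segment by the definition of weak $y$-monotonicity, so a vertex strictly between its endpoints in $\prec_i$ would lie on the edge, and the interleaving argument for \textbf{(ii)} is the standard Jordan-arc crossing argument. In the backward direction your straight-line placement works, and your identification of the mixed horizontal/monotone case as the only genuinely new one is the right insight: the interior of a non-horizontal edge between levels $i$ and $i+1$ has $y$-coordinates strictly between two consecutive integers, so it avoids all vertices and all horizontal edges, and its endpoint on level $i$ avoids the interior of a horizontal edge there by \textbf{(i)}. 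One sub-case you wave at a little too quickly is that of two \emph{horizontal} edges sharing an endpoint, say $(u,v)$ and $(u,w)$ on level $i$: ``diverging from the shared point'' is not automatic since both segments lie on the line $y=i$, and you need \textbf{(i)} again to conclude that $v$ and $w$, each being consecutive with $u$ in $\prec_i$, lie on opposite sides of $u$, so the segments meet only at $u$. With that half-sentence added, the proof is complete and is a legitimate (and arguably more useful) alternative to the paper's bare citation.
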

      
%        there is no vertex $z$ with $\ell(z)=i$ such that $u\prec_i z\prec_i v$ or $v\prec_i z\prec_i u$ hold (i.e., the drawing of a vertex is disjoint from the drawing of a horizontal edge it is not incident to);
  
%Condition (i) of \cref{obs:gioproperty} implies that, in a weakly leveled planar drawing, the drawing of a vertex is disjoint from the drawing of a horizontal edge it is not incident to. Condition (i) of \cref{obs:gioproperty} implies that, in a weakly leveled planar drawing, 

A \emph{(weakly) leveled planar drawing} of a graph $G$ is a (weakly) level planar drawing of a level graph $(G,\ell)$, for some leveling~$\ell$~of~$G$. An \emph{$s$-span (weakly) leveled planar graph} is a graph that admits a (weakly) leveled planar drawing with span $s$. The $1$-span (weakly) leveled planar graphs are also called (weakly) leveled planar graphs. Given a graph $G$, we consider the problem of finding a leveling~$\ell$ that minimizes $\spn(\ell)$ among all levelings where $(G,\ell)$ is weakly level planar. 
Specifically, given a positive integer $s$, we call {\sc $s$-Span (Weakly) leveled planarity} the problem of testing whether a planar graph $G$ is an $s$-span (weakly) leveled planar graph. 
The {\sc $1$-Span Leveled Planarity} problem has been studied under the name of {\sc Leveled Planar} by Heath and Rosenberg~\cite{DBLP:journals/siamcomp/HeathR92}. The assumption that the input graph is planar is not a loss of generality, since obviously a graph is (weakly) leveled planar graph only if it is planar; from an algorithmic point of view, a preliminary linear-time test can ensure that the input graph is indeed planar, and if not it can conclude that the graph is not an $s$-span (weakly) leveled planar graph, for any value of $s$.

A (weakly) $y$-monotone drawing $\Gamma$ of a graph defines a leveling~$\ell$, called the \emph{associated leveling} of $\Gamma$, where vertices with the same~$y$-coordinate are assigned to the same level and the levels are ordered by increasing $y$-coordinates of the vertices they contain. Thus, the span of an edge $(u,v)$ in~$\Gamma$ is~$\spn_\ell(u,v)$, the span of~$\Gamma$ is~$\spn(\ell)$, and the height of $\Gamma$ is $h(\ell)$.

The following lemma appears implicitly in the proof of Lemma 4 in~\cite{digiacomo2023new}.

\begin{lemma}\label{le:weak-nonweak}
Any graph that admits an $s$-span weakly leveled planar drawing with height $h$ has an $(2s+1)$-span leveled planar drawing with height $2h+1$.
\end{lemma}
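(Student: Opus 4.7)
The plan is to take the given $s$-span weakly leveled planar drawing $\Gamma$ of $G$ on levels $1,\ldots,h+1$ and explicitly produce an $(2s+1)$-span leveled planar drawing on levels $1,\ldots,2h+2$. The approach is to double the $y$-axis so as to create $h+1$ empty intermediate levels and then to locally push one endpoint of each horizontal edge onto such an intermediate level so that every edge becomes strictly $y$-monotone.

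First, I would rescale $\Gamma$ by doubling every $y$-coordinate, obtaining a planar weakly $y$-monotone drawing $\Gamma_1$ whose vertices use only the even levels $2,4,\ldots,2h+2$ and whose span is at most $2s$. By condition (i) of \cref{obs:gioproperty}, the two endpoints of a horizontal edge must be consecutive in their level's left-to-right order; hence on each level $2i$ the horizontal edges form a collection of vertex-disjoint paths, and in particular every vertex has at most two horizontal incidences. For every such path $v_1,v_2,\ldots,v_m$ I would relocate every even-indexed vertex $v_{2j}$ from level $2i$ down to level $2i-1$, leaving all other vertices in place; this places the relocated vertices on the previously empty odd levels $1,3,\ldots,2h+1$.

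The span analysis is then direct. Horizontal edges become strictly $y$-monotone edges of span $1$, while a non-horizontal edge $(u,v)$ with $\ell_\Gamma(u)=i\neq j=\ell_\Gamma(v)$ has each endpoint shifted downward by at most one level in the new leveling, so its new span is at most $2|i-j|+1\le 2s+1$; its endpoints keep strictly different $y$-coordinates, since their $\Gamma_1$-levels differ by at least $2$ and a $\pm 1$ shift leaves them at least $1$ apart, giving the required strict $y$-monotonicity. To realize this re-leveling as an actual planar drawing, I would invoke a standard small-perturbation argument on $\Gamma_1$: pick $\epsilon>0$ smaller than every vertex-to-non-incident-edge distance, shift each relocated vertex vertically by $-\epsilon$, and reroute its incident edges inside an $\epsilon$-disk around its original position; as the disks are pairwise disjoint and contain no non-incident edge, no new crossing is introduced. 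A final monotone $y$-stretch sending $2i-\epsilon \mapsto 2i-1$ (and fixing $2i$) produces integer levels and yields the desired drawing on $2h+2$ levels, i.e., of height $2h+1$.

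The main delicacy is ensuring that the many simultaneous relocations do not interfere, since several edges incident to moved vertices are re-routed at the same time; this is handled by selecting a single $\epsilon$ small enough that all the perturbation disks are pairwise disjoint and avoid every non-incident edge, so that every rerouting is a purely local modification and planarity is preserved globally.
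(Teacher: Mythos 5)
Your combinatorial core is sound and is exactly what the lemma needs: double the levels, observe that the horizontal edges on each level form vertex-disjoint paths (a cycle is impossible since its closing segment would overlap the others), drop the alternate vertices of each such path onto the newly created intermediate level, and check that every edge then joins two \emph{distinct} levels at distance at most $2s+1$ while only levels $1,\dots,2h+2$ are used. (For the record, the paper gives no proof of this lemma; it attributes it to the proof of Lemma~4 of~\cite{digiacomo2023new}, so there is nothing in-paper to compare against.) The arithmetic $2|i-j|\pm 1\le 2s+1$ and the height bound are both correct.

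The genuine gap is in the geometric realization: a leveled planar drawing requires every edge to be a \emph{strictly} $y$-monotone curve, and the $\epsilon$-disk rerouting you describe does not deliver this. If $(v_1,v_2)$ is a horizontal edge and only $v_2$ is relocated, then outside the $\epsilon$-disk around $v_2$ the edge still runs along its original horizontal segment at height $2i$; your final $y$-stretch fixes the level $2i$ pointwise, so this long portion stays horizontal and the edge remains only weakly monotone. Worse, for a non-horizontal edge that leaves a relocated vertex $u$ going \emph{downwards} at a shallow angle, the point where the original curve exits the $\epsilon$-disk lies \emph{above} the new position of $u$, so the rerouted curve first ascends and then descends and is not $y$-monotone at all. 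Your ``main delicacy'' paragraph guards only against crossings, which is the easy part; the actual delicacy is restoring strict monotonicity of every formerly horizontal edge along its entire length and of every edge incident to a relocated vertex. This is repairable --- e.g., subdivide every edge at every level it crosses so the leveling becomes proper, perform the re-levelling combinatorially, use the characterization of \cref{obs:gioproperty} to produce a level planar drawing, and then smooth the subdivision vertices; or redraw each affected edge globally inside a thin tube around its original image with a strictly monotone $y$-parametrization --- but as written your construction does not produce a leveled planar drawing.
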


% \textcolor{red}{{\bf This should be obsolete:}
% Let $L$ be a set of distinct horizontal lines $\ell_0,\dots,\ell_k$ from bottom to top. A \emph{weakly level planar drawing}  of a planar graph $G$ is a planar drawing $\Gamma$ of $G$ such that: 
% \begin{itemize}
% \item  every vertex $v$ of $G$ is assigned to a line $\ell_i\in L$, where $i$ is called the \emph{level of $v$} and is denoted as $\lambda(v)$;
% \item every edge intersects each line $\ell_i\in L$ in at most one point, which might be an end-vertex of the edge.
% \end{itemize}
% %
% is one in which .  Every edge of $\Gamma$ can either connect two vertices on different levels or two vertices that appear consecutive on a same level in a left-to-right or
% The \emph{span} of an edge $(u,v)$ in $\Gamma$ is $|\lambda(u) - \lambda(v)|$.
% }

%\todo[inline]{The text from the next line till Lemma 1 should be rephrases as it has been copy-pasted from a different paper.}

A planar drawing of a graph is \emph{outerplanar} if all the vertices are external, and \emph{$2$-outerplanar} if removing the external vertices yields an outerplanar drawing. A graph is \emph{outerplanar} (\emph{$2$-outerplanar}) if it admits an outerplanar drawing (resp.\ $2$-outerplanar drawing). A \emph{$2$-outerplane} graph is a $2$-outerplanar graph with an associated planar embedding which corresponds to $2$-outerplanar drawings. A \emph{cycle-tree} is a $2$-outerplane graph such that removing the external vertices yields a tree. A \emph{Halin graph} is a $3$-connected plane graph $G$ such that removing the external edges yields a tree whose leaves are exactly the external vertices of $G$  (and whose internal vertices have degree at least $3$). Note that Halin graphs form a subfamily of the cycle-trees. 

%A \emph{$2$-tree} is any graph that can be constructed starting from a $3$-cycle, by repeatedly adding a new vertex connected to the end-vertices of an edge already in the graph. 

%\todo[inline,color=yellow]{Here}

%A $2$-outerplane graph is \emph{simply nested} if its external vertices induce a chordless cycle and its internal vertices induce either a chordless cycle or a tree. As in~\cite{DBLP:conf/isaac/LozzoDEJ17}, we refer to a simply nested graph whose internal vertices induce a chordless cycle or a tree as a \emph{cycle-cycle} or a , respectively. A \emph{Halin graph} is a $3$-connected plane graph $G$ such that, by removing the edges incident to the outer face, one gets a tree whose internal vertices have degree at least $3$ and whose leaves are incident to the outerface of $G$. 

%\section{Complexity and Parameterized Algorithms}

\section{NP-completeness} \label{se:hardness}

%This motivates our results of the next section, where we shall prove that the problem is FPT when parameterized by widely-used structural parameters.  
This section is devoted to the proof of the following result.

\begin{theorem} \label{thm:np-hardness}
For any fixed $s \geq 1$, {\sc $s$-Span Weakly leveled planarity} is \NP-complete. 
\end{theorem}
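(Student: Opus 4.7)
The plan is to establish membership in \NP\ and then prove hardness by reduction from a known \NPH\ problem such as \textsc{Leveled Planarity}.

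For \NP\ membership, I observe that, up to planar isotopy, any $s$-span weakly leveled planar drawing of an $n$-vertex graph is fully determined by the associated leveling $\ell$, whose range has size at most $n$, together with a linear order $\prec_i$ on each level $\ell^{-1}(i)$. This witness has polynomial size and, by \cref{obs:gioproperty}, its realizability as a weakly leveled planar drawing can be certified in polynomial time; checking the bound $\spn(\ell)\le s$ is immediate.

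For hardness, I would reduce from \textsc{Leveled Planarity}, which is \NPH\ by Heath and Rosenberg even in the $s=1$ leveled (non-weak) variant. Starting from a level graph $(H,\ell_H)$ with levels $\{1,\dots,k\}$, the construction maps each level $i$ of $H$ to a ``slab'' of levels $\{s(i-1)+1,\dots,si+1\}$ in the target instance $G_s$, and replaces every edge of $H$ by a path of length $s$ joining the appropriate slab boundaries. To preempt the extra freedom offered by the weak (horizontal-edge permitting) setting and by spans strictly smaller than $s$, $G_s$ is augmented with a \emph{rigidity scaffold}: a planar subgraph attached at designated backbone vertices whose only weakly leveled planar drawings with span at most $s$ impose the intended slab geometry. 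Rigid building blocks such as stacked wheels, $K_4$-subdivisions, or small $3$-connected maximal planar gadgets are natural candidates, since, together with \cref{obs:gioproperty}, their planar embeddings and per-level vertex orderings are essentially unique.

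The forward direction of correctness is immediate by rescaling any leveled planar drawing of $(H,\ell_H)$ and routing each replacement path monotonically through its slab. The backward direction is where the main work lies: given an $s$-span weakly leveled planar drawing $\Gamma$ of $G_s$, one must argue that the scaffold pins each original vertex of $H$ onto its designated slab boundary, in an interleaving compatible with $\ell_H$, and that each length-$s$ replacement path is drawn strictly $y$-monotonically across its slab. Contracting each replacement path back to a single edge and reindexing the slabs then yields a $1$-span leveled planar drawing of $(H,\ell_H)$.

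The main obstacle I expect is the design of the rigidity scaffold: it must simultaneously (i) be itself drawable within a single slab with span at most $s$, (ii) forbid any rearrangement of the backbone vertices across slab boundaries and any spurious horizontal placements of subdivision vertices, and (iii) have size polynomial in $s$ and in $|H|$. Because the same argument should adapt to the leveled (non-weak) setting in order to recover the byproduct mentioned just after the theorem, I would prefer a gadget whose rigidity does not hinge on whether horizontal edges are permitted; instead, it should rely on the purely combinatorial constraints of \cref{obs:gioproperty} applied inside each slab.
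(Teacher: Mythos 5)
Your \NP-membership argument is fine and matches the paper's. The hardness half, however, has a genuine gap: the entire difficulty of the reduction is concentrated in the ``rigidity scaffold,'' and you name it as the main obstacle without constructing it. Everything else in your outline (slabs, rescaling, the forward direction) is routine; the theorem stands or falls with that gadget. Moreover, the candidates you list (stacked wheels, $K_4$-subdivisions, small $3$-connected triangulations) are rigid in the wrong sense. Uniqueness of the planar embedding, or of the per-level orderings via \cref{obs:gioproperty}, constrains the \emph{combinatorial} structure of a drawing but says nothing about which \emph{levels} the vertices occupy; what you need is a planar graph with two designated vertices whose level difference is forced to be exactly $s$ in every weakly leveled planar drawing of span at most $s$, and none of your candidates obviously has this property. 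A related soft spot is the edge-to-path replacement: a length-$s$ path whose edges each have span at most $s$ can fold back, sit horizontally, or stretch across up to $s^2$ levels, so the scaffold must also pin down every internal vertex of every replacement path; you acknowledge this but again defer it to the unconstructed gadget.

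For comparison, the paper does exactly this missing work. For $s=1$ it replaces each edge $(u,v)$ of $H$ by a $K_{2,4}$ with $u,v$ as the size-$2$ class and proves (\cref{prop:drawingOfK24}) that every $1$-span weakly leveled planar drawing forces $|\ell(u)-\ell(v)|=2$ with the four other vertices on the middle level, killing horizontal edges outright. For $s>1$ it keeps the edge $(u,v)$ and attaches a recursively nested gadget $W_{s,s-1}$ built from copies of $K_{2,2s+1}$ plus the pole edge: a pigeonhole argument on the $2s+1$ middle vertices shows that, under span at most $s$, some vertex must lie strictly between the poles of each copy, and nesting $s-1$ copies forces the pole edge to have span at least (hence exactly) $s$. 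This counting argument is the substantive content of the proof, and it is precisely the piece your proposal leaves open; until you exhibit a gadget with a comparable forcing property and verify both that it is drawable with span $s$ and that it admits no low-span degenerate drawing, the backward direction of your reduction does not go through.
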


\begin{proof}
The \NP-membership is trivial. In fact, given a graph $G$, 
%a certificate for the existence of weakly leveled planar  drawing of $G$ with span at most $s$ is completely determined by  (1) a leveling $\ell: V(G) \rightarrow \mathbb{N}$ of the vertices of $G$ to up to $s(n-1)+1$ levels, and  (2) a left-to-right ordering of the vertices of $G$ in $\ell^{-1}(i)$, for each $i \in [s(n-1)+1]$. Since  all possibilities for (1) and (2) can be 
a non-deterministic Turing machine can guess in polynomial time all possible levelings of the vertices of $G$ to up to $s(n-1)+1$ levels.
%non-deterministically guessed in polynomial time, 
Moreover, for each such a leveling $\ell$, in deterministic polynomial time, it is possible to test whether $\spn(\ell) \leq s$ and  whether the level graph $(G,\ell)$ is level planar~\cite{JungerLM98}.

Throughout, given a leveling $\ell_Q$ or a graph $Q$ and a (weakly) level planar drawing $\Gamma_Q$ of the level graph $(Q,\ell_Q)$, we denote by $\prec^Q_i$ the left-to-right order of the vertices of $\ell^{-1}(i)$ in $\Gamma_Q$, with $i \in [h(\ell)+1]$.

To prove the \NP-hardness, we distinguish two cases, based on whether $s=1$ or $s>1$. In both cases, we exploit a linear-time reduction from the {\sc Leveled planar} problem, which was proved to be \NP-complete by Heath and Rosenberg~\cite{DBLP:journals/siamcomp/HeathR92}. 
Recall that, given a planar bipartite graph $H$, the {\sc Leveled Planar} problem asks to determine whether $H$ admits $1$-span leveling $\ell_H$ such that $(H,\ell_H)$ is level planar. In other words, the {\sc Level Planar} problem coincides with {\sc $1$-Span Leveled Planarity}.

{\bf (Case $\mathbf{s=1}$).} Note that a $1$-span leveled planar graph must be bipartite. Starting from a bipartite planar graph $H$, we construct a graph $G$ that is a positive instance of {\sc $1$-Span Weakly Leveled Planarity} if and only if $H$ is a positive instance of {\sc $1$-Span Leveled Planarity}. To this aim, we proceed as follows. We initialize $G=H$. Then, for each edge $(u,v)$ of $H$, we remove $(u,v)$ from $G$, introduce a copy $K(u,v)$ of the complete bipartite planar graph $K_{2,4}$, and identify each of $u$ and $v$ with one of the two vertices in the size-$2$ bipartition class of the vertex set of $K(u,v)$. Clearly, $G$ is planar and bipartite. Moreover, the above reduction can clearly be carried out in polynomial, in fact, linear time. 

In the following, for any edge $(u,v) \in E(H)$, we denote the four vertices of the size-$4$ bipartition class of the vertex set of $K(u,v)$ as $x_{uv}$, $y_{uv}$, $z_{uv}$,~and~$w_{uv}$.

$(\Longleftarrow)$ Suppose first that $H$ admits a leveling $\ell_H$ in $k = h(\ell_H)+1$ levels, with $\spn(\ell_H)=1$, such that $(H,\ell_H)$ is level planar, and let $\Gamma_H$ be a level planar drawing of $(H,\ell_H)$. We show that $G$ admits a leveling $\ell_G$ on $2k$ levels, with $\spn(\ell_G) \leq 1$, such that $(G,\ell_G)$ is weakly level planar, in fact even level planar. The leveling $\ell_G$ is computed as follows.
For each vertex $w \in V(H)$, we set $\ell_G(w) = 2\ell_H(w)$. 
For each vertex $w \in V(G) \setminus V(H)$, i.e., 
$w \in \{x_{uv}, y_{uv}, z_{uv}, w_{uv}\}$ with $(u,v) \in E(H)$, we set $\ell_G(w) = \ell_H(u) + \ell_H(v)$; note that $\ell_G(w) = \min \{\ell_G(u),\ell_G(v)\}+1$, i.e., $w$ is assigned to the level that lies strictly between the levels of $u$ and $v$ in $\ell_G$. 
By construction, $\ell_G$ is proper, and thus $\spn(\ell_G) = 1$,  the vertices of $V(H)$ are assigned to even levels, and the vertices in $V(H) \setminus V(G)$ are assigned to odd levels.

We show that $(G,\ell_G)$ is level planar (and thus weakly level planar) by constructing a level planar drawing $\Gamma_G$ of $(G,\ell_G)$. To this aim, we simply need to define orderings $\prec^G_j$ of the vertices of $G$ at each level $j$, with $j \in [2k]$, in such a way that Conditions (i) and (ii) of \cref{obs:gioproperty} are satisfied.
To obtain $\Gamma_G$, for each $j \in [2k]$, we set $\prec^G_j$ as follows.
If $j$ is even, we set $\prec^G_j = \prec^H_{\frac{j}{2}}$, i.e., the order of the vertices of level $j$ of $\ell_H$ (which are vertices of $G$) is the same as the order of the vertices of level $i$ of $\ell_G$. Instead, if $j$ is odd, we define $\prec^G_j$ as follows. Recall that the $j$-th level of $\ell_G$, with $j$ odd, only contains vertices $q_{u,v}$ with $q \in \{x,y,z,w\}$, $(u,v) \in E(H)$, and $j = \min\{\ell_G(u),\ell_G(v)\}+1$. 
To obtain $\prec^G_j$, we first define a total ordering along level $j$ of all vertices $x_{u,v}$, where $(u,v) \in E(H)$ and $j = \min\{\ell_G(u),\ell_G(v)\}+1$, and then impose that $x_{uv}$, $y_{uv}$, $z_{uv}$, and $w_{uv}$ appear consecutively in $\prec^G_j$ in this left-to-right order. 
The above total ordering is computed as follows:
For every two independent edges $(u,v)$ and $(s,t)$ in $E(H)$ such that $\ell_G(u)=\ell_G(s)=j-1$, $\ell_G(v)=\ell_G(t)=j+1$, and 
$(u,v)$ immediately precedes $(s,t)$ in $\Gamma_H$ in the left-to-right order of the edges between the $(\frac{j-1}{2})$-th and the $(\frac{j+1}{2})$-th level of $\ell_H$,
we impose that $x_{u,v} \prec^G_j x_{s,t}$. 
Next, we show that the leveled drawing defined by $\prec^G_j$, with $j \in [2k]$, is leveled planar. 

\begin{claim}
The drawing $\Gamma_G$ of $(G,\ell_G)$ is level planar.
\end{claim}

\begin{nestedproof}
It suffices to prove that the linear orderings $\prec^G_j$ satisfy  Conditions (i) and (ii) of \cref{obs:gioproperty}.
By the construction of $\ell_G$, we have that no two adjacent vertices of $G$ are assigned to the same level; thus, Condition (i) vacuously holds for $\Gamma_G$. Further, let $(u,a)$ and $(w,b)$ be two independent edges such that $j=\ell_G(u)=\ell_G(w)$ and $\ell_G(a)=\ell_G(b)$. By construction, and up to a simultaneous renaming of $u$ with $a$ and $w$ with $b$, we have that $a \in \{x_{uv}, y_{uv},  z_{uv}, w_{uv}\}$, for some edge $(u,v) \in E(H)$, and $b \in \{x_{wz}, y_{wz},  z_{wz}, w_{wz} \}$, for some edge $(w,z)\in E(H)$. Assume that $\ell_G(a)=j+1$, since the case $\ell_G(a)=j-1$ is symmetric, and assume, without loss of generality, that $u \prec^G_j w$. Then, by the construction of $\Gamma_G$,  we have that $u \prec^H_{\frac{j}{2}} w$. Also, by Condition (ii)  for $\Gamma_H$, we have that $v \prec^H_{\frac{j}{2}+1 } z$. Therefore, by construction of $\Gamma_G$, we have that $a \prec^G_{j+1} b$, which proves Condition (ii) for $\Gamma_G$, and thus concludes the proof.
\end{nestedproof}

% \begin{figure}[h!]
%     \centering
%     \begin{subfigure}{0.48\textwidth}
%     \centering
%         \includegraphics[width = 0.9\textwidth, page=1]{}
%         \caption{}
%         \label{fig:npHardnessA}
%     \end{subfigure}
%     \hfil
%     \begin{subfigure}{0.48\textwidth}
%     \centering
%         \includegraphics[width = 0.9\textwidth, page=2]{Figures/boundedSpan.pdf}
%         \caption{}
%         \label{fig:npHardnessB}
%     \end{subfigure}
%     \caption{Gadget for (a) even $s$, (b) odd $s$.}
%     \label{fig:npHardness}
% \end{figure}

$(\Longrightarrow)$ Suppose now that $G$ admits a leveling $\ell_G$, with $\spn(\ell_G) \leq 1$, such that $(G,\ell_G)$ is weakly level planar. We show that $H$ admits a leveling $\ell_H$, with $\spn(H)=1$, such that $(H,\ell_H)$ is level planar. The following property will be useful.

\begin{property}\label{prop:drawingOfK24}
Consider the complete bipartite graph $K_{2,4}$, let $u$ and $v$ be the vertices of the size-$2$ bipartition class of its vertex set, and let $x$, $y$, $z$, and $w$ be the other four vertices. 
Then the levelings $\ell$ with $\spn(\ell)\leq 1$ for which  $(K_{2,4},\ell)$ is weakly level planar are such that $\ell(u)=\ell(v) \pm 2$ and $\ell(x)=\ell(y)=\ell(z)=\ell(w)=\frac{\ell(u)+\ell(v)}{2}$.
\end{property}

\begin{nestedproof}
%% FF: REMOVED THIS FIRST DIRECTION, THE PROPERTY IS ONE DIRECTION ONLY 
% ($\Rightarrow$)
% Let $\ell$ be a leveling of $K_{2,4}$ such that $\ell(u)=\ell(v) \pm 2$ and $\ell(x)=\ell(y)=\ell(z)=\ell(w)=\frac{\ell(u)+\ell(v)}{2}$. 
% Observe that $\ell$ is proper and that no two independent edges exist whose endpoints connect the same pair of levels. Therefore, any ordering of the vertices of $K_{2,4}$ along the levels defined by $\ell$ satisfies Conditions (i) and (ii) of \cref{obs:gioproperty}, which implies that $(K_{2,4},\ell)$ is 1-span level planar.
% ($\Leftarrow$) 
%
%
Let $\ell$ be a leveling with $\spn(\ell)\leq 1$ such that $(K_{2,4},\ell)$ is weakly level planar and let $\Gamma_{2,4}$ be a weakly level planar drawing of $(K_{2,4},\ell)$. 
We show that (a) $|\ell(u)-\ell(v)| \leq 2$, (b) $|\ell(u)-\ell(v)| \neq 0$, and (c) $|\ell(u)-\ell(v)| \neq 1$, which imply that $\ell(u)=\ell(v) \pm 2$. 
\begin{itemize}
\item If (a) does not hold, then the span of $(u,x)$, of $(x,v)$, or both must be larger than $1$. 
\item Suppose next that (b) does not hold, i.e., $u$ and $v$ are assigned to the same level. Then, by the pigeonhole principle and since $\spn(\ell)\leq 1$, we have that one of the levels $\ell(u)$, $\ell(u)+1$, and $\ell(u)-1$ must contain two vertices in $\{x,y,z,w\}$, say $x$ and $y$. If $\ell(x)=\ell(y)=\ell(u)+1$ or $\ell(x)=\ell(y)=\ell(u)-1$, a contradiction is achieved by the observation that the subgraph of $K_{2,4}$ induced by $u,v,x,y$ is a $4$-cycle for which the leveling $\ell$ is proper, however the graphs that admit a leveled planar drawing on two levels are the forests of caterpillars~\cite{DBLP:conf/acsc/EadesMW86}. If $\ell(x)=\ell(y)=\ell(u)$, then both $x$ and $y$ lie between $u$ and $v$, as otherwise $(u,x)$ would overlap $v$, $(u,y)$ would overlap $v$, $(v,x)$ would overlap $u$, or $(v,y)$ would overlap $u$ in $\Gamma_{2,4}$. If both $x$ and $y$ lie between $u$ and $v$, then $(u,x)$ overlaps $(u,y)$  in $\Gamma_{2,4}$, a contradiction.   
\item Finally, suppose that (c) does not hold, i.e., $u$ and $v$ are assigned to consecutive levels. Since $\spn(\ell)\leq 1$, we have that each of $\{x,y,z,w\}$ is assigned to $\ell(u)$ or $\ell(v)$. If two of $\{x,y,z,w\}$ lie on $\ell(u)$, both to the left (or both to the right) of $u$, then the edges between such vertices and $u$ overlap each other. Similarly, if two of $\{x,y,z,w\}$ lie on $\ell(v)$, both to the left (or both to the right) of $v$, then the edges between such vertices and $v$ overlap each other. It follows that one of $\{x,y,z,w\}$ lies on $\ell(u)$ to the left of $u$, one lies on $\ell(u)$ to the right of $u$, one lies on $\ell(v)$ to the left of $v$, and one lies on $\ell(v)$ to the right of $v$. However, this implies that the edge between $v$ and the vertex on $\ell(u)$ to the left of $u$ and the edge between $u$ and the vertex on $\ell(v)$ to the left of $v$ cross each other. 
\end{itemize}
The proof is concluded by observing that all the vertices $x$, $y$, $w$, and $x$ must be assigned in~$\ell$ to the level $\frac{\ell(u)+\ell(v)}{2}$, given that $\ell(u)=\ell(v) \pm 2$ and that $\spn(\ell)\leq 1$.
\end{nestedproof}

Observe that, by \cref{prop:drawingOfK24}, the levelings $\ell$ for which $(K_{2,4}, \ell)$ is weakly level planar are proper. Thus, any $1$-span weakly leveled planar drawing of $K_{2,4}$ is leveled planar. Also, 
note that every edge of $G$ belongs to $K(u,v)$ for some edge $(u,v) \in E(H)$. Thus, we immediately get the following.

% \begin{observation}\label{obs:k24}
% %Let $\ell_G$ be a leveling of $G$ with $\spn(\ell_G) \leq 1$. Then, $(G,\ell_G)$ is weakly leveled planar only if, 
% For any edge $(u,v) \in E(H)$, it holds that $\ell_G(u)=\ell_G(v)\pm 2$ and $\ell_G(x_{uv})=\ell_G(y_{uv})=\ell_G(z_{uv})=\ell_G(w_{uv})=\frac{\ell_G(u)+\ell_G(v)}{2}$.\hly{Proof? figure maybe?}
% \end{observation}

\begin{observation}\label{obs:weakly-and-leveled}
Any $1$-span weakly leveled planar drawing of $G$ is leveled planar.
\end{observation}

By \cref{obs:weakly-and-leveled}, we have that $\ell_G$ is proper. Also by \cref{obs:weakly-and-leveled} and since $G$ is connected, we have that all and only the vertices of $V(H)$ lie in even levels of $\ell_G$. For every vertex $v$ of $H$, we set $\ell_H(v) = \frac{\ell_G(v)}{2}$. By construction, $\spn(\ell_H)=1$. Let now $\Gamma_G$ be a (weakly) level planar drawing of $(G,\ell_G)$. We show how to construct a level planar drawing $\Gamma_H$ of $(H,\ell_H)$. To construct $\Gamma_H$, we simply set $\prec^H_i = \prec^G_{2i}$, for each $i \in [k]$, where $k$ is the number of levels of $\ell_G$.

\begin{claim}
The drawing $\Gamma_H$ of $(H,\ell_L)$ is level planar.
\end{claim}

\begin{nestedproof}
We prove that the orderings $\prec^H_i$ satisfy Conditions (i) and (ii) of \cref{obs:gioproperty}. By 
\cref{prop:drawingOfK24} and by the construction of $\ell_H$, we have that no two adjacent vertices of $H$ are assigned to the same level; thus, Condition (i) vacuously holds for $\Gamma_H$. Further, let $(u,v)$ and $(w,z)$ be two independent edges such that $i=\ell_H(u)=\ell_H(w)$ and $\ell_H(v)=\ell_H(z)$. Assume that $\ell_H(v) = \ell_H(u)+1$, as the case in which $\ell_H(v) = \ell_H(u)-1$ is symmetric, and assume, without loss of generality, that $u \prec^H_i w$. Then, by construction of $\Gamma_H$, we have that $u \prec^G_{2i} w$. In turn, by Condition (ii) for $\Gamma_G$, we have that $x_{uv} \prec^G_{2i+1} x_{wz}$, which, again by Condition (ii) for $\Gamma_G$, implies that $v \prec^G_{2(i+1)} z$ in $\Gamma_G$. The latter and the construction of $\Gamma_H$ imply that $v \prec^H_{i+1} z$. Thus, Condition (ii) holds for $\Gamma_H$, which concludes the proof.
\end{nestedproof}

\begin{figure}[tb!]
\centering
    \begin{subfigure}{\linewidth}
    \centering
    \includegraphics[]{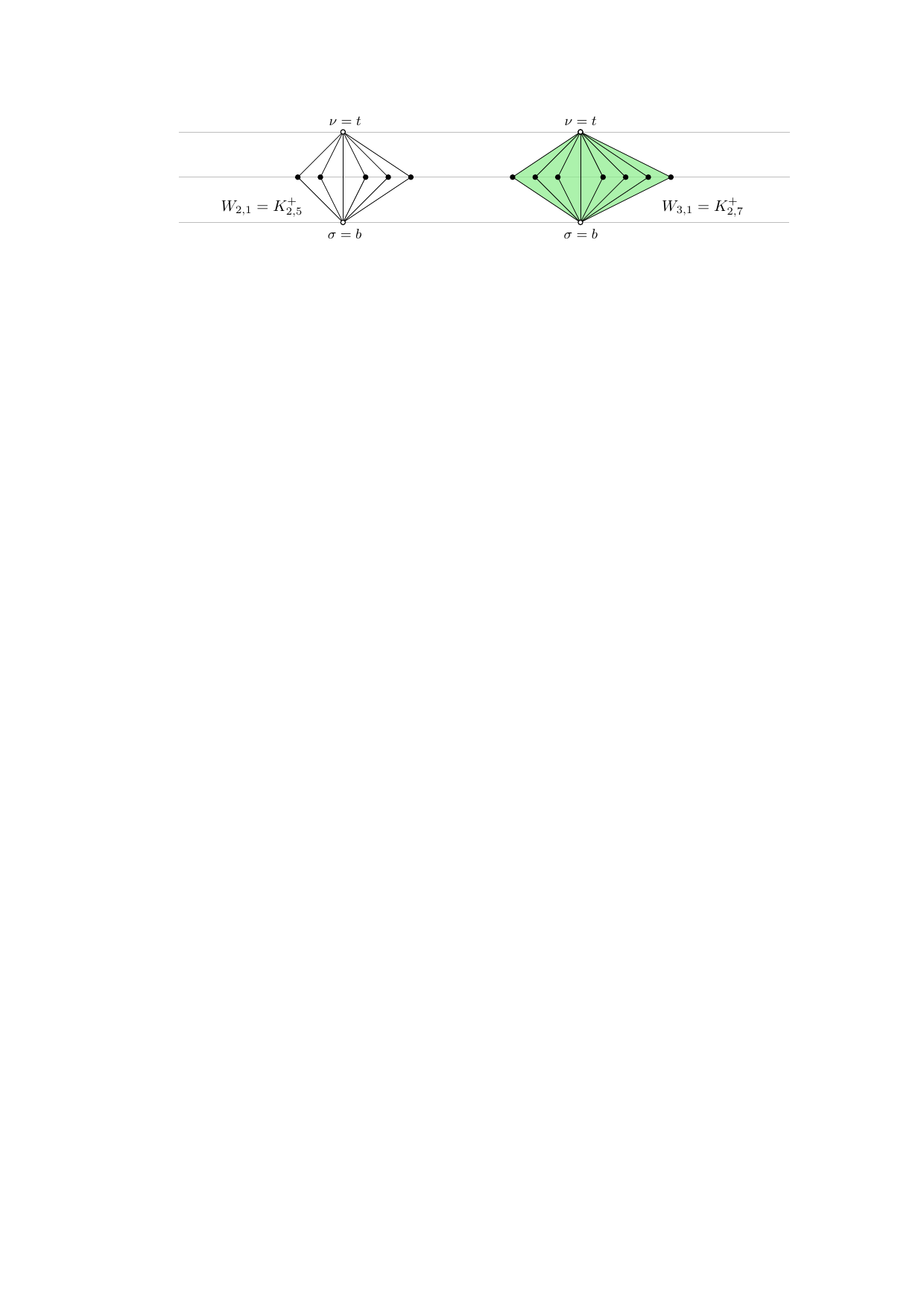}
    \subcaption{The graphs $W_{2,1}$ (left) and $W_{3,1}$ (right).}
    \label{fig:W-base}
    \end{subfigure}
    \begin{subfigure}{\linewidth}
    \centering
    \includegraphics[page=2]{Figures/W.pdf}
    \subcaption{The graph $W_{3,2}$. The seven green shaded regions are copies of $W_{3,1}$.}
    \label{fig:W-inductive}
    \end{subfigure}
    \caption{Illustration for the construction of graphs $W_{i,h}$. Pole vertices are white filled.}
    \label{fig:W-construction}
\end{figure}

{\bf (Case $\mathbf{s>1}$).} In our proof, we will exploit special graphs $W_{i,h}$, where $i$ and $h$ are positive integers such that $h < i$, having two designated vertices $\nu$ and $\sigma$, called \emph{poles}; specifically, $\nu$ is the \emph{north pole} and $\sigma$ is the \emph{south pole} of $W_{i,h}$, see \cref{fig:W-construction}.
In the following, we denote by $K^+_{2,\alpha}$ the graph obtained from the complete bipartite graph $K_{2,\alpha}$ by adding an edge between the two vertices $t$ and $b$ of the size-$2$ bipartition class of the vertex set of $K_{2,\alpha}$. We say that $t$ and $b$ are the \emph{extremes} of $K^+_{2,\alpha}$, and we refer to $t$ and to $b$ as to the \emph{top} and the \emph{bottom extreme}, respectively.
For any $i\geq 2$, the graphs $W_{i,h}$ are defined as follows. 
If $h=1$, the graph $W_{i,1}$ coincides with $K^+_{2,2i+1}$; see \cref{fig:W-base}.
If $h>1$, 
the graph $W_{i,h}$ is obtained from $K^+_{2,2i+1}$ by 
%replacing each edge incident to the top extreme $t$ of $K^+_{2,2i+1}$ and not to the south extreme $s$ of $K^+_{2,2i+1}$ with the graph $W_{i,h-1}$ (i.e., we 
removing each edge $(t,x)$, with $x \neq b$, and by identifying $t$ and $b$ with the north and south pole of a copy of $W_{i,h-1}$, respectively; see \cref{fig:W-inductive}.
We have the following claims.

\begin{claim}\label{claim:drawing-any-s}
For any $i \geq 2$, the graph $W_{i,h}$ admits a leveled planar drawing with span $h+1$ in which the north pole of $W_{i,h}$ lies strictly above all the other vertices of $W_{i,h}$ and the south pole of $W_{i,h}$ lies strictly below all the other vertices of $W_{i,h}$.
\end{claim}

\begin{nestedproof}
\begin{figure}[tb!]
\centering
    \includegraphics[page=3]{Figures/W.pdf}
    \caption{Example for the construction of a drawing of $W_{i,h}$ that satisfies \cref{claim:drawing-any-s}, with $i=3$ and $h=2$. (Left) Initialization of the drawing of $W_{3,2}$ to a straight-line planar drawing of $K^+_{2,7}$. (Right) Replacement of the edges $(\nu,x)$, with $x \neq \sigma$, with a drawing of $W_{3,1}$. For readability purposes, only one edge has been replaced with a drawing of $W_{3,1}$; the other replacements are represented by tailed regions.}
    \label{fig:W-drawings-claim}
\end{figure}
We prove the statement by induction on $h$; refer to \cref{fig:W-drawings-claim}.
In the base case $h=1$. Then $W_{i,1} = K^+_{2,2i+1}$, which admits a level planar drawing with span $h+1=2$ with the desired properties; see \cref{fig:W-base}.
In the inductive case $h>1$. Then, a drawing $\Gamma$ of $W_{i,h}$ with the desired properties can be constructed as follows. We initialize $\Gamma$ to a straight-line planar drawing of $K^+_{2,2i+1}$ in which $\sigma$ is placed on point $(0,0)$, $\nu$ is placed on point $(0,h+1)$, and the remaining $2i+1$ vertices of $K^+_{2,2i+1}$ are placed on points $(j,1)$, for $j = 1,\dots,2i+1$; see \cref{fig:W-drawings-claim}(left).
Observe that, in $\Gamma$, the edge $(\nu,\sigma)$ has span $h+1$, the edges incident to $\sigma$ and not to $\nu$ have span $1$, and the edges incident to $\nu$ and not to $\sigma$ have span $h$. We call the latter edges \emph{long}. 
By induction, the graph $W_{i,h-1}$ admits a leveled planar drawing $\Gamma'$ with span $h$ in which the north pole of $W_{i,h-1}$ lies strictly above all the other vertices of $W_{i,h-1}$ and the south pole of $W_{i,h-1}$ lies strictly below all the other vertices of $W_{i,h-1}$. 
Consider the leveling $\ell'$ of $W_{i,h-1}$ determined by the $y$-coordinates of the vertices in $\Gamma'$.
Without loss of generality, we assume that $\ell'$ assigns the north and the south pole of  $W_{i,h-1}$ to levels $h+1$ and $1$, respectively.
This allows us to replace the drawing of each long edge $(\nu,x)$ (where $x \neq \sigma$ is the south pole of some copy of $W_{i,h-1}$) with a drawing of $W_{i,h-1}$ in which (i) the north pole of $W_{i,h-1}$ lies upon $\nu$ and the south pole of $W_{i,h-1}$ lies upon $x$, and (ii) the vertices of $W_{i,h-1}$ in level $i$ lie arbitrarily close to the intersection point of $(\nu,x)$ and the level $i$, and so that these vertices are consecutive along level $i$ and have the same left-to-right ordering in the resulting drawing as in $\Gamma'$.
This yields a leveled planar drawing $\Gamma$ of $W_{i,h}$ with the desired properties. In particular, the span of $\Gamma$ coincides with the span of the edge $(\sigma,\nu)$, which is $h+1$.
\end{nestedproof}

\begin{claim}\label{cl:uv-long-span}
For any $i \geq 2$, in any weakly leveled planar drawing of $W_{i,i-1}$ with span at most~$i$, the edge connecting the poles of $W_{i,i-1}$ has span $i$.
\end{claim}

\begin{nestedproof}
We start by establishing the following two useful properties of any weakly leveled planar drawing $\Gamma^*$ of $K^+_{2,2i+1}$ with span at most $i$. 

{\bf Property P1:} In $\Gamma^*$, the extremes $b$ and $t$ of $K^+_{2,2i+1}$ lie on different levels. 

{\bf Property P2:} In $\Gamma^*$, there exists a vertex $x$ of $K^+_{2,2i+1}$ that lies strictly between the levels of the extremes $b$ and $t$ of $K^+_{2,2i+1}$.

We prove {\bf property P1}. Suppose, for a contradiction, that $b$ and $t$ are assigned to the same level in $\Gamma^*$. Since  $K^+_{2,2i+1}$ contains the edge $(b,t)$, no vertex of $K^+_{2,2i+1}$ different from $b$ and $t$ is assigned to the same level as $b$ and $t$, as otherwise an edge incident to such a vertex would overlap the edge $(b,t)$. By the pigeonhole principle and since $\spn(\ell)\leq i$, there exist two vertices $a$ and $c$ of $K^+_{2,2i+1}$ that are assigned to a same level, different from the one of $b$~and~$t$. Similarly as in the proof of \cref{prop:drawingOfK24}, this implies that the $4$-cycle $(a,b,c,t)$ crosses itself in $\Gamma^*$, a contradiction. 

\begin{figure}
    \centering
    \begin{subfigure}[t]{0.48\textwidth}
    \centering
        \includegraphics[page=2,scale=.7]{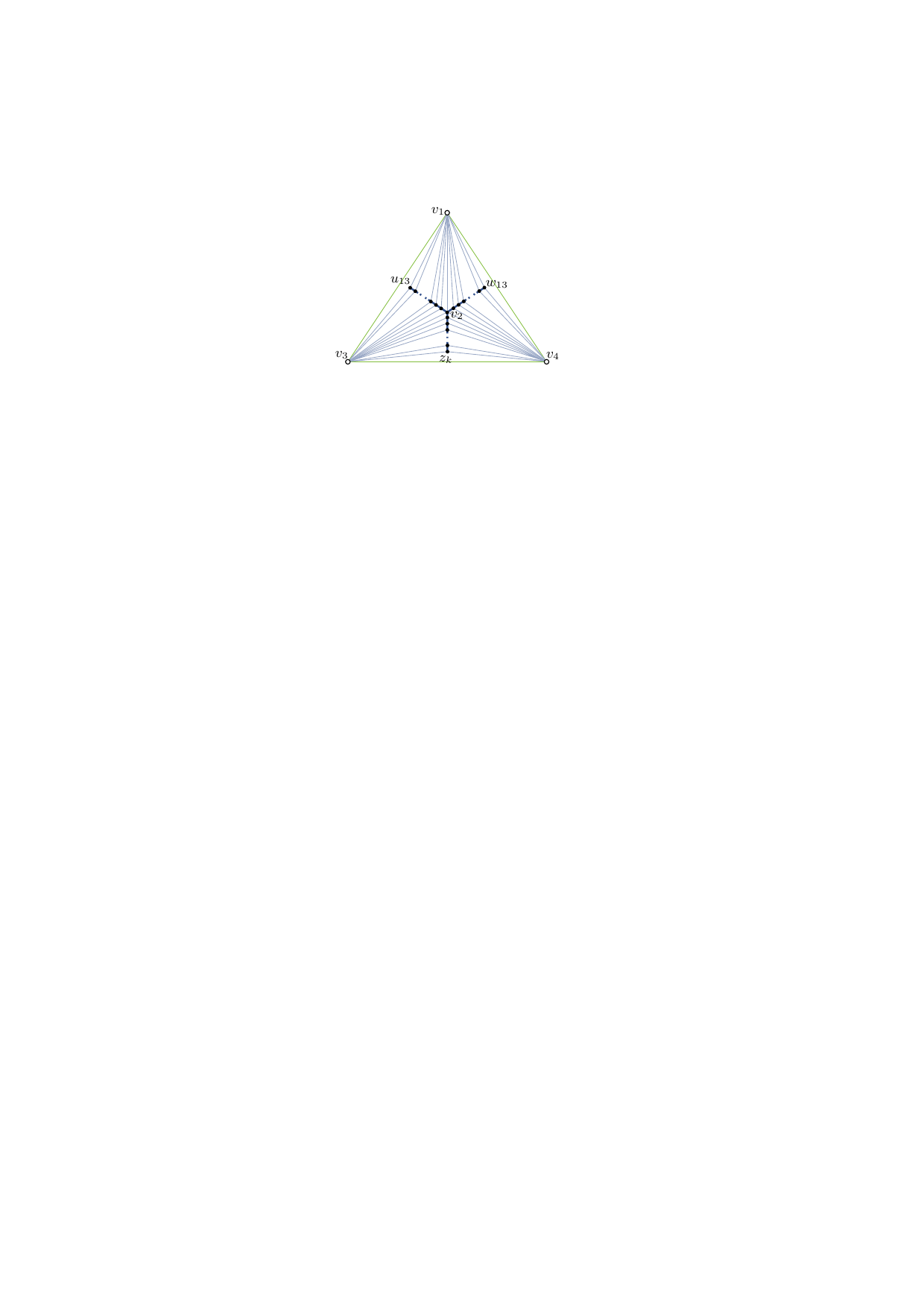}
        \subcaption{}
        \label{fig:more-than-two-level}
    \end{subfigure}
    \begin{subfigure}[t]{0.48\textwidth}
    \centering
        \includegraphics[page=4,scale=.7]{Figures/LB3ConnCycleTree.pdf}
        \subcaption{}
        \label{fig:two-above-one-below}
    \end{subfigure}
    \caption{Illustration for the proof of property P2 in \cref{cl:uv-long-span}. (a) No more than two vertices lie on the same level in $\Gamma^*$. (b) If two vertices lie on the same level in $\Gamma^*$ and such a level is higher than $\ell^*(t)$, then no vertex lies on a level lower than $\ell^*(b)$.}
    \label{fig:p2}
\end{figure}

We prove {\bf property P2}. Suppose, for the sake of a contradiction, that in $\Gamma^*$ no vertex of $K^+_{2,2i+1}$ lies strictly between $b$ and $t$. We call \emph{non-extremal} the vertices of  $K^+_{2,2i+1}$ different from~$b$ and~$t$.  Let $\ell^*$ be the leveling of $K^+_{2,2i+1}$ determined by $\Gamma^*$, and assume, without loss of generality, that $\ell^*(t)>\ell^*(b)$. 

First, by hypothesis, $\ell^*$ assigns each non-extremal vertex $v$ of $K^+_{2,2i+1}$ to a level $\ell^*(v)$ not lower than $\ell^*(t)$ or not higher than $\ell^*(b)$. Second, at most two non-extremal vertices of $K^+_{2,2i+1}$ are assigned to the same level by $\ell^*$, as edges from $b$ and $t$ to three vertices on the same level would define a crossing (see \cref{fig:more-than-two-level}). Third, if two non-extremal vertices of $K^+_{2,2i+1}$ lie on a level not lower than $\ell^*(t)$, then no non-extremal vertex of $K^+_{2,2i+1}$ lies on a level not higher than $\ell^*(b)$, as edges from $b$ and $t$ to such vertices would define a crossing (see \cref{fig:two-above-one-below}). Symmetrically, if two non-extremal vertices of $K^+_{2,2i+1}$ lie on a level not higher than $\ell^*(b)$, then no non-extremal vertex of $K^+_{2,2i+1}$ lies on a level not lower than $\ell^*(t)$. Fourth, since $\ell^*(t)-\ell^*(b) \geq 1$ and since $\Gamma^*$ has span at most~$i$, we have that, for every  non-extremal vertex~$v$ of $K^+_{2,2i+1}$, it holds true that $\ell^*(v)-\ell^*(t) \leq i$ and $\ell^*(b)-\ell^*(v) \leq i$. Hence, at most~$i$ levels not lower than $\ell^*(t)$ and  at most $i$ levels not higher than $\ell^*(b)$ contain non-extremal vertices of $K^+_{2,2i+1}$. 

We are now ready to prove that $\Gamma^*$ contains at most $2i$ non-extremal vertices of $K^+_{2,2i+1}$, which contradicts the fact that there are $2i+1$ such vertices. If every level contains at most one  non-extremal vertex of $K^+_{2,2i+1}$, then the $2i$ upper bound comes from the fact that at most $i$ levels that are not lower than $\ell^*(t)$ and at most $i$ levels that are not higher than $\ell^*(b)$ might contain non-extremal vertices of $K^+_{2,2i+1}$. Also, if a level contains two non-extremal vertices of $K^+_{2,2i+1}$, assume it is not lower than $\ell^*(t)$, as the case in which it is  not higher than $\ell^*(b)$ is analogous. Then at most $i$ levels not lower than $\ell^*(t)$ contain non-extremal vertices of $K^+_{2,2i+1}$ and at most two each. Further, no level not higher than $\ell^*(b)$ might contain non-extremal vertices of $K^+_{2,2i+1}$. This contradiction concludes the proof of property~P2.

Consider a  weakly leveled planar drawing $\Gamma$ of $W_{i,i-1}$ with span at most $i$ and let $\ell$ be the corresponding leveling. 
Since  $\nu$ and $\sigma$ are the extremes of a $K^+_{2,2i+1}$ subgraph of $W_{i,i-1}$, we derive the following. By property P2, we have that in $\Gamma$
%the poles of $\nu$ and $\sigma$ of $W_{i,i-1}$ lie on different levels and that
there exists a vertex $x$ of $W_{i,i-1}$ that lies strictly between the levels of $\nu$ and $\sigma$. In particular, it holds that $|\ell(\nu)-\ell(x)| \geq 1$, that $|\ell(\sigma)-\ell(x)| \geq 1$, and that $|\ell(\nu)-\ell(\sigma)| = |\ell(\nu)-\ell(x)| + |\ell(\sigma)-\ell(x)|$. This implies that  $|\ell(\nu)-\ell(\sigma)| \geq  |\ell(\nu)-\ell(x)| + 1$. 
Observe now that $\nu$ and $x$ are the poles of a subgraph $W':=W_{i,i-2}$ of $W_{i,i-1}$, and thus they are also the extremes of a $K^+_{2,2i+1}$ subgraph of $W'$. Therefore, again by property P2, we have that there exists a vertex $x'$ of $W'$ that lies strictly between the levels of $\nu$ and $x$. Similarly to before, we get that $|\ell(\nu)-\ell(x)| \geq  |\ell(\nu)-\ell(x')| + 1$, which implies that $|\ell(\nu)-\ell(\sigma)| \geq  |\ell(\nu)-\ell(x')| + 2$. The repetition of this argument and the construction of $W_{i,i-1}$ imply that $|\ell(\nu)-\ell(\sigma)| \geq  |\ell(\nu)-\ell(x^*)| + (i-1)$, where $x^*$ is a vertex of the ``innermost'' copy of $K^+_{2,2i+1}$ incident to $\nu$. The fact that $|\ell(\nu)-\ell(x^*)|\geq 1$ implies the statement.
\end{nestedproof}

The reduction for $s > 1$ is similar to the one for $s=1$, but the role of $K_{2,4}$ is now played by $W_{s,s-1}$.
Starting from a bipartite planar graph $H$, we construct a graph $G$ that is a positive instance of {\sc $s$-Span Weakly Leveled Planarity} if and only if $H$ is a positive instance of {\sc $1$-Span Leveled Planarity}. To this aim, we proceed as follows. We initialize $G=H$. Then, for each edge $(u,v)$ of $H$, we remove $(u,v)$ from $G$, introduce a copy of $W_{s,s-1}$, which we denote by $W_s(u,v)$, and identify $u$ and $v$ with the poles $\nu$ and $\sigma$ of $W_{s,s-1}$, respectively. Clearly, the above reduction can be carried out in polynomial time, since the size of $W_{s,s-1}$ depends only on $s$. 

Suppose that $H$ admits a leveling $\ell_H$, with $\spn(\ell_H) = 1$, such that $(H,\ell_H)$ is level planar. We prove that $G$ admits a leveling $\ell_G$ with $\spn(\ell_G) \leq s$. Let $\Gamma_H$ be a level planar drawing of $(H,\ell_H)$; then the leveling $\ell'_H$ of $H$ where $\ell'_H(v)=s\cdot \ell_H(v)$ is clearly such that $(H,\ell'_H)$ is level planar and $\spn(\ell'_H) = s$. In fact, a level planar drawing $\Gamma'_H$ of $(H,\ell'_H)$ can be obtained by simply setting the $y$-coordinate of each vertex $v$ in $\Gamma'_H$ to be $s$ times the $y$-coordinate of $v$ in $\Gamma_H$. Note that every edge of $H$ has span $s$ in $\Gamma'_H$.

By \cref{claim:drawing-any-s}, for each edge $(u,v)$ of $H$, the graph $W_s(u,v)$ has a leveled planar drawing $\Gamma_{(u,v)}$ with span $s$ in which $u$ lies strictly above all the other vertices of $W_s(u,v)$ and $v$ lies strictly below all the other vertices of $W_s(u,v)$. 
By \cref{cl:uv-long-span}, the span of the edge $(u,v)$ in $\Gamma_{(u,v)}$ is exactly $s$. Thus, we can replace the drawing of each edge $(u,v)$ in $\Gamma'_H$ with a drawing $\Gamma_{(u,v)}$ of $W_s(u,v)$ in which (i) the placement of $u$ and $v$ is the same as in $\Gamma'_H$, and (ii) the vertices of $W_s(u,v)$ in level $i$ lie arbitrarily close to the intersection point in $\Gamma'_H$ of the edge $(u,v)$ and level $i$, and so that these vertices are consecutive along level $i$ and have the same left-to-right ordering in the resulting drawing as in $\Gamma_{(u,v)}$. This yields a leveled planar drawing $\Gamma_G$ of $G$ with span $s$. 

%In fact, if we subdivide each edge of $G$ with span larger than $1$ with a dummy vertex at its intersection with a level in $\Gamma_G$, we obtain a drawing and a proper leveling satisfying \cref{obs:gioproperty}).

%Let $\ell_W$ be the associated leveling of $\Gamma_{(u,v)}$. Without loss of generality we assume that $\ell_W(v) = \ell'_H(v)$. By construction, each edge $(u,v)$ of $H$ has span $s$ in $\Gamma'_H$. This and the properties of $\Gamma_{(u,v)}$ listed above allow us to

Suppose now that $G$ admits a leveling $\ell_G$ with $\spn(\ell_G) \leq s$ such that $(G,\ell_G)$ is weakly level planar. We show that $H$ admits a proper leveling $\ell_H$ (hence $\spn(\ell_H)= 1$) such that $(H,\ell_H)$ is level planar. Let $\Gamma_G$ be a level planar drawing of $(G,\ell_G)$. By construction, $H$ is a subgraph of $G$. 
Let $\ell'_H$ be the restriction of $\ell_G$ to the vertices of $H$. Recall that each edge $(u,v)$ of $H$ appears in $G$ as an intra-pole edge of the graph $W_s(u,v)$. By \cref{cl:uv-long-span}, we have that  $\spn_{\ell_G}(u,v)=s$, for any edge $(u,v)$ of $H$. Therefore, the drawing $\Gamma'_H$ of $H$ contained in~$\Gamma_G$ is a level planar drawing of $(H,\ell'_H)$ in which each edge has span exactly $s$. It follows that the leveling $\ell_H$ of $H$ such that $\ell_H(v) = \frac{\ell_G(v)}{s}$ is such that $\ell_H$ is proper and $(H,\ell_H)$ is level planar. In fact, a level planar drawing $\Gamma_H$ of $(H,\ell_H)$ can be obtained from $\Gamma'_H$ by simply rescaling the $y$-coordinates so that the $y$-coordinate of any vertex $v$ in $\Gamma_H$ is equal to the $y$-coordinate of $v$ in $\Gamma'_H$ divided by $s$. 
\end{proof}

% Let $\Gamma_G$ be a straight-line level planar drawing of $(G,\ell_G)$; in this respect, observe that a level planar drawing is $y$-monotone, and thus it can be ``stretched'' into a straight-line planar drawing keeping the $y$-coordinate of the vertices unchanged~\cite{DBLP:conf/gd/Biedl14,DBLP:conf/gd/EadesFL96,DBLP:journals/algorithmica/EadesFLN06,DBLP:journals/jgt/PachT04}.
% Observe that, by construction, $H$ is an induced subgraph of $G$, i.e., $H = G[V(H)]$. 
% Let $\ell'_H$ be the restriction of $\ell_G$ to the vertices of $H$.
% Recall that, each edge $(u,v)$ of $H$ appears in $G$ as the intra-pole edge of the graph $W_s(u,v)$. 
% By \cref{cl:uv-long-span}, we have that  $\ell_G(u,v) = s$ for any edge $(u,v)$ of $H$. 
% Therefore, the drawing $\Gamma'_H$ of $H$ contained in $\Gamma_G$ is a level planar drawing of $(H,\ell'_H)$, in which each edge has span exactly $s$. It follows immediately that the leveling $\ell_H$ of $H$ such that $\ell_H(v) = \frac{\ell_G(v)}{s}$ is such that $\ell_H$ is proper and $(H,\ell_H)$ is level planar. In fact, a straight-line level planar drawing $\Gamma_H$ of $(H,\ell_H)$ can be obtained from $\Gamma'_H$ by simply placing each vertex $v$ at a point whose $x$-coordinate is the $x$-coordinate of $v$ in $\Gamma'_H$ and whose $y$-coordinate is the $y$-coordinate of $v$ in $\Gamma'_H$ divided by $s$. 

The proof of \cref{thm:np-hardness} also shows that, for any fixed $s \geq 1$, deciding whether a graph admits a  (non-weakly) leveled planar drawing with span at most $s$ is \NP-complete, which generalizes the \NP-completeness result by Heath and Rosenberg~\cite{DBLP:journals/siamcomp/HeathR92}, \mbox{which is limited to $s=1$.}

% \begin{proof}[Sketch]
% We reduce from \lp. Let $G$ be the input graph. We replace each edge in $G$ by $K_{2,4}$. Let $H$ be the graph obtained by performing this operation. Then, we prove that $G$ is a \yes-instance of \lp if and only if $H$ is a \yes-instance of \wlp{1}.

% For any span $s \ge 2$, we add the gadget shown in~\cref{fig:npHardnessA} (resp., Figure~\ref{fig:npHardnessB}) to every edge of $H$, if $s$ is even (resp., odd).
% \end{proof}

\section{Parameterized Complexity} \label{se:fpt}

Motivated by the \NP-hardness of the {\sc $s$-Span Weakly leveled planarity} problem (\cref{thm:np-hardness}), we consider the parameterized complexity of the problem. %We begin with a brief sketch of our result for the parameterization by the vertex cover number, which we subsequently generalize to bounded-size modulators and finally to treedepth. 
Recall that a problem~$\mathcal P$ whose input is an $n$-vertex graph $G$ is \emph{fixed-parameter tractable} (for short, \emph{FPT}) with respect to some parameter $k$ if it can be solved via an algorithm with running time $O(f(k) \cdot p(n))$, where $f$ is a computable function and $p$ is a polynomial function. A \emph{kernelization} for~$\mathcal P$  is an algorithm that constructs in polynomial time (in~$n$) an instance~$(G',k')$, called \emph{kernel}, such that: (i) the \emph{size} of the kernel, i.e., the number of vertices in $G'$, is some computable function of $k$; (ii) $(G',k')$ and $(G,k)$ are equivalent instances; and (iii) $k'$ is some computable function of $k$. If $\mathcal P$ admits a kernel with respect to some parameter $k$, then it is FPT with respect to $k$. % (a solution can be searched by brute force in the kernel).

Throughout the section, we are going to use the following combinatorial observation.

\begin{lemma} \label{le:degree-3-vc}
Let $X\subseteq V$ be a set of vertices in a planar graph $G=(V,E)$. The number of vertices in $V\setminus X$ that are connected to at least three vertices in $X$ is at most $2|X|$. Further, the number of pairs $(x,y)$ of vertices in $X$ such that $x$ and $y$ are the neighbors of a degree-$2$ vertex in $V-X$ is at most $3|X|$. 
\end{lemma}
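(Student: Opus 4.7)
My plan is to prove both inequalities via standard planarity counts applied to appropriately chosen subgraphs and minors of $G$.

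For the first inequality, I would let $Y$ denote the set of vertices in $V \setminus X$ with at least three neighbors in $X$, and consider the bipartite subgraph $H$ of $G$ on the parts $X$ and $Y$ containing all $G$-edges between them. Since $H$ is bipartite and planar, it has girth at least $4$, so $|E(H)| \leq 2(|X|+|Y|)-4$. Combined with the lower bound $|E(H)| \geq 3|Y|$, which follows directly from the defining property of $Y$, this yields $|Y| \leq 2|X|-4 \leq 2|X|$, as desired.

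For the second inequality, I would build a planar simple graph on $X$ whose edges are exactly the pairs to be counted. For each counted pair $\{x,y\}$, fix one witness: a degree-$2$ vertex $v \in V \setminus X$ with $N(v)=\{x,y\}$. Because every such $v$ has only two neighbors, different pairs yield different witnesses, so the length-$2$ paths $x$--$v$--$y$ are internally vertex-disjoint in $G$. Smoothing every chosen witness (equivalently, replacing the path $x$--$v$--$y$ by the edge $xy$) produces a topological minor $H'$ of $G$ whose vertex set contains $X$ and whose edge set, restricted to pairs in $X$, is exactly the collection we want to count. Since $H'$ is planar, Euler's formula gives at most $3|X|-6 \leq 3|X|$ edges on $X$, as claimed.

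The only subtlety I anticipate is organizational: one must be explicit that the smoothing produces a simple planar graph on $X$ so that Euler's bound $3n-6$ applies, and must observe that distinct pairs give distinct witnesses (so no parallel edges are created when we smooth). Small-$|X|$ edge cases (say $|X| \leq 2$) are trivial, since then the stronger bounds $2|X|-4$ and $3|X|-6$ may be negative, but the looser claimed bounds $2|X|$ and $3|X|$ hold vacuously.
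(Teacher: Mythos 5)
Your proposal is correct and follows essentially the same route as the paper: the second bound is obtained exactly as in the paper's proof (smoothing one witness per pair to get a simple planar graph on $X$ and applying the $3|X|-6$ edge bound), and for the first bound you simply prove directly, via the bipartite planar edge count $|E(H)|\leq 2(|X|+|Y|)-4$ versus $|E(H)|\geq 3|Y|$, the statement that the paper imports as a citation to Lemma~13.3 of the kernelization textbook. The only cosmetic point is that when invoking Euler's bound you should restrict $H'$ to the subgraph induced on $X$ (or define $H'$ to contain only $X$, the witnesses, and their incident edges), since the counted pairs all lie within $X$.
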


\begin{proof}
From~\cite[Lemma 13.3]{flsz-ktpp-19}, we have that the number of vertices in $V\setminus X$ that are connected to at least three vertices in $X$ is at most $\max\{2|X|-4,0\}\leq 2|X|$. We prove the second statement. Consider the subgraph $H$ of $G$ induced by the vertices in $X$ and by the degree-$2$ vertices in $V-X$ that are neighbors of two vertices in $X$. For each pair $(x,y)$ of vertices in $X$ such that $x$ and $y$ are the neighbors of a degree-$2$ vertex $v_{xy}$ in $V-X$, replace in $H$ the path $(x,v_{xy},y)$ with an edge $(x,y)$ and remove the vertex $v_{xy}$. Further modify $H$ by replacing multiple edges connecting two vertices $(x,y)$ with a single edge $(x,y)$. Now $H$ is a planar graph with $X$ as vertex set and with an edge between two vertices if, in $G$, they are the neighbors of a degree-$2$ vertex in $V-X$. The second statement follows.
\end{proof}

\subsection{Parameterization by Vertex Cover}\label{subse:fpt-vc}

In this section, we show that {\sc $s$-Span Weakly leveled planarity} has a kernel whose size is polynomial in the size of a vertex cover. Hence, the problem is FPT with respect to this size.
Consider an instance $(G,s)$ of {\sc $s$-Span Weakly leveled planarity}, where $G=(V,E)$ is a planar graph and $s\geq 1$ is an integer; recall that the problem asks whether $G$ admits a weakly leveled planar drawing whose span is at most $s$. Let $C$ be a vertex cover, i.e., a set of vertices such that every edge has at least one end-vertex in $C$ and let $k:=|C|$. We start by showing that, if $s$ is sufficiently large (namely, larger than or equal to $6k$), then $(G,s)$ is a positive instance, hence we can assume that $s$ is bounded with respect to $k$.

\begin{lemma} \label{le:drawing-vc}
Suppose that $s\geq 6k$. Then $(G,s)$ is a positive instance of {\sc $s$-Span Weakly leveled planarity}.
\end{lemma}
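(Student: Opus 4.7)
The plan is to construct a weakly leveled planar drawing of $G$ whose span is at most $6k-1 < s$. The strategy is to extract a small \emph{backbone} subgraph $B \subseteq G$ of size at most $6k$, draw $B$ using at most $6k$ distinct levels via Fáry's theorem, and then extend the drawing to the whole of $G$ by placing the remaining vertices of the independent set $I := V \setminus C$ near their backbone ``representatives'' without increasing the span.

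\textbf{Building the backbone.} Partition $I$ by degree in $G$ into $I_1$ (degree $1$), $I_2$ (degree $2$), and $I_{\geq 3}$. By \cref{le:degree-3-vc}, we have $|I_{\geq 3}| \leq 2k$, and the number of unordered pairs $\{c,c'\} \subseteq C$ arising as the neighborhood of some vertex of $I_2$ is at most $3k$. Choose $R_2 \subseteq I_2$ containing exactly one representative per such pair and set $B := G[C \cup I_{\geq 3} \cup R_2]$; then $|V(B)| \leq k + 2k + 3k = 6k$.

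\textbf{Drawing $B$ and extending.} Fix a planar embedding $\mathcal{E}$ of $G$ and let $\mathcal{E}_B$ be the planar embedding of $B$ induced by $\mathcal{E}$. Since $B$ is planar, an embedding-preserving version of Fáry's theorem yields a straight-line planar drawing $\Gamma_B$ of $B$ realizing $\mathcal{E}_B$. After a small perturbation making all $y$-coordinates distinct, and a piecewise-linear monotone rescaling of the $y$-axis, we may assume the $y$-coordinates in $\Gamma_B$ are the integers $1, 2, \dots, |V(B)|$; this rescaling is a plane homeomorphism, so planarity and $y$-monotonicity of edges are preserved, and every edge of $B$ now has span at most $6k - 1$. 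We extend $\Gamma_B$ to a drawing $\Gamma$ of $G$ as follows: for each $v \in I_2 \setminus R_2$ with neighborhood $\{c,c'\}$ and representative $v^* \in R_2$, place $v$ at level $\ell_B(v^*)$ arbitrarily close to $v^*$, in the face of $\mathcal{E}_B$ indicated by $\mathcal{E}$, and draw $(c,v)$ and $(v,c')$ as $y$-monotone curves parallel to the corresponding edges of $v^*$; for each $v \in I_1$ with sole neighbor $c$, place $v$ at level $\ell_B(c)$ or at an adjacent level, in the face of $\mathcal{E}_B$ indicated by $\mathcal{E}$, drawing $(c,v)$ as a horizontal segment or a short strictly $y$-monotone curve. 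By condition (i) of \cref{obs:gioproperty}, at most two $I_1$-pendants attached to $c$ may share $c$'s level, but further pendants can share an adjacent level, since all edges from them to $c$ share the endpoint $c$ and hence cannot pairwise cross.

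\textbf{Span bound and main obstacle.} Every vertex of $G$ ends up on a level in $\{1, \dots, 6k\}$, so every edge of $\Gamma$ spans at most $6k - 1 < 6k \leq s$, and $\Gamma$ is a weakly leveled planar drawing of $G$ of span at most $s$, so $(G,s)$ is a \yes-instance. The chief technical hurdle is verifying planarity of the extension step: one must use the cyclic order around each $C$-vertex prescribed by $\mathcal{E}$ to route the many $I_2$-duplicates as topologically nested bundles inside a common face and to distribute many $I_1$-pendants around a single $C$-vertex without introducing crossings with $\Gamma_B$. This reduces to a routine but case-heavy argument exploiting that parallel $I_2$-paths form nested faces in $\mathcal{E}$ and that edges sharing a $C$-endpoint cannot pairwise cross, so many pendants may safely share a level adjacent to that of their common neighbor.
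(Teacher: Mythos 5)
Your proposal is correct and follows essentially the same strategy as the paper's proof: both invoke \cref{le:degree-3-vc} to extract an $O(k)$-size core containing $C$ and the high-degree independent vertices, draw that core on at most $6k$ levels, and then reinsert the degree-$1$ and degree-$2$ vertices locally without increasing the span beyond $6k\leq s$. The only (minor) divergence is the treatment of degree-$2$ vertices: the paper smooths them into edges, obtaining a $3k$-vertex trimmed graph, and then doubles the number of levels so the smoothed vertices can be reinserted on intermediate levels at edge--level intersections, whereas you keep one representative per neighborhood pair inside a $6k$-vertex backbone and nest the duplicates beside their representative on the same level; both reinsertion arguments are sketched at comparable levels of detail and both are sound.
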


\begin{proof}
We construct a \emph{trimmed graph} $\trim(G)$, by removing all the degree-$1$ vertices of $G$ in~$V \setminus C$ and ``smoothing'' all degree-$2$ vertices  of $G$ in $V \setminus C$ (that is, removing each degree-$2$ vertex and connecting its neighbors with an edge). Note that $\trim(G)$ has at most $3k$ vertices. Indeed, by \Cref{le:degree-3-vc}, we have that $G$ (and hence $\trim(G)$) has at most $2k$ vertices in $V\setminus C$ whose degree is at least three. Further, by construction, $\trim(G)$ has no degree-$1$ and degree-$2$ vertex in $V\setminus C$. Finally, $\trim(G)$ has at most $k$ vertices in~$C$. An arbitrary leveled planar drawing $\Gamma_t$ of $\trim(G)$ without empty levels has height, and hence span, at most $3k$.
A leveled planar drawing $\Gamma_G$ of $G$ can then be constructed from $\Gamma_t$ by inserting a new level between any two consecutive levels of $\Gamma_t$. This at most doubles the span of the edges which is now at most $6k$. The new levels can be used for reinserting the removed degree-$1$ vertices of $G$ (on a level next to the one of their neighbor) and the smoothed degree-$2$ vertices (at the intersection between the edge they have to lie on and the new level cutting that edge). 
\end{proof}

We now proceed by showing how to obtain a kernel for the given instance $(G,s)$ of {\sc $s$-Span Weakly leveled planarity}. This is done by applying the following two reduction rules.

\begin{reductionrule} \label{rule:vc-deg1}
For every vertex $c \in C$, let $V_c$ be the set of degree-$1$ neighbors of $c$ in $V\setminus C$. If $|V_c|>3$, remove $|V_c|-3$ vertices in $V_c$ from $G$, as well as their incident edges.  
\end{reductionrule}

\begin{reductionrule} \label{rule:vc-deg2}
For every pair of vertices $\{c,d\}\in C$, let $V_{cd}$ be the set of degree-$2$ vertices in $V\setminus C$ with
  neighborhood~$\{c,d\}$. If~$|V_{cd}| > 4s+5$, remove $|V_{cd}|-4s-5$ vertices in $V_{cd}$ from $G$, as well as their incident edges.  
\end{reductionrule}

\begin{lemma}
  \label{lem:vc-rules-safe}
The instance obtained by applying \Cref{rule:vc-deg1,rule:vc-deg2} is equivalent to $(G,s)$.
\end{lemma}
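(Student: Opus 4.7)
I will verify that each rule is \emph{safe}: the reduced instance is equivalent to $(G, s)$. One direction is immediate in both cases, since restricting any $s$-span weakly leveled planar drawing of $G$ to the vertex set of the reduced graph yields an $s$-span weakly leveled planar drawing of the reduced graph. For the converse direction, I will treat each rule separately via a local re-insertion argument.

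For \cref{rule:vc-deg1}, I would start from an $s$-span weakly leveled planar drawing $\Gamma$ of the reduced graph. The first observation is that $c$ has at most two neighbors on its own level in $\Gamma$: any neighbor $x$ of $c$ with $\ell(x)=\ell(c)$ is joined to $c$ by a horizontal segment and hence, by condition (i) of \cref{obs:gioproperty}, is immediately to the left or to the right of $c$ on level $\ell(c)$. Consequently, at least one of the three retained leaves of $c$, call it $v$, lies on a level $\ell^\star\neq\ell(c)$, so $(c, v)$ is drawn as a strictly $y$-monotone curve. Since $v$ has degree $1$, a sufficiently small open disk $D$ around $v$ meets no vertex other than $v$ and no edge other than a short initial segment of $(c, v)$. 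I would then place each removed leaf $u$ adjacent to $c$ at a fresh $x$-coordinate on level $\ell^\star$ inside $D$, drawing $(c, u)$ as a $y$-monotone curve alongside $(c, v)$; no crossings are introduced, and $\spn_\ell(c, u)=\spn_\ell(c, v)\leq s$.

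For \cref{rule:vc-deg2}, I would again start from an $s$-span weakly leveled planar drawing $\Gamma$ of the reduced graph. Every retained $v_i\in V_{cd}$ must satisfy $|\ell(v_i)-\ell(c)|\leq s$ and $|\ell(v_i)-\ell(d)|\leq s$, so it lies on one of at most $2s+1$ levels. Since $4s+5 > 2(2s+1)$, pigeonhole produces a level $\ell^\star$ carrying at least three retained vertices $v_1, v_2, v_3\in V_{cd}$, which I would label in left-to-right order on $\ell^\star$. Combining the planarity of the $K_{2,3}$ subgraph on $\{c, d\}\cup\{v_1, v_2, v_3\}$ with iterated applications of \cref{obs:gioproperty} to the levels between $\ell(c)$ (resp.\ $\ell(d)$) and $\ell^\star$, I would argue that the vicinity of $v_2$ along $\ell^\star$ is free of vertices and edges of $\Gamma$ other than the initial segments of $(c, v_2)$ and $(v_2, d)$. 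Each removed $v\in V_{cd}$ is then inserted close to $v_2$ on $\ell^\star$, with the edges $(c, v)$ and $(v, d)$ drawn parallel to $(c, v_2)$ and $(v_2, d)$; this preserves planarity and the span bound.

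The main technical obstacle will be the free-space claim underlying the argument for \cref{rule:vc-deg2}: I need to justify that no edge of $\Gamma$ obstructs the parallel insertion of new $c$-to-$d$ strands near $(c, v_2)$ and $(v_2, d)$. This reduces to a consecutivity statement for the three retained $v_i$'s on $\ell^\star$ -- namely, that no other vertex can lie strictly between $v_1$ and $v_3$ on $\ell^\star$ without forcing a crossing with the edges joining $c$ and $d$ to the $v_i$'s -- which should follow from the planarity of the associated $K_{2,3}$ subgraph together with the ordering conditions of \cref{obs:gioproperty}.
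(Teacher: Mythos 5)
Your overall strategy matches the paper's: the forward direction is the subgraph observation, \cref{rule:vc-deg1} is handled exactly as in the paper (at most two same-level neighbours of $c$ by condition (i) of \cref{obs:gioproperty}, so one of the three retained leaves sits on another level and the removed leaves are re-inserted beside it), and \cref{rule:vc-deg2} is handled by locating one retained member of $V_{cd}$ next to which the path $c$--$v$--$d$ can be duplicated. Your pigeonhole for \cref{rule:vc-deg2} is a legitimate variant: you count the at most $2s+1$ admissible levels and extract a level $\ell^\star$ carrying three members of $V_{cd}$, whereas the paper bounds by $4s+4$ the number of members of $V_{cd}$ lying on levels \emph{not} strictly between $\ell(c)$ and $\ell(d)$ and directly exhibits one member on a level strictly between them.

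The weak point is the claim you single out as the ``main technical obstacle'': the consecutivity statement is false, and fortunately also unnecessary. Nothing prevents other vertices from lying strictly between $v_1$ and $v_3$ on $\ell^\star$ (e.g.\ further members of $V_{cd}$, or a pendant neighbour of $c$ placed on $\ell^\star$ between $v_1$ and $v_2$), so the vicinity of $v_2$ at any positive distance need not be empty. What you actually need is twofold. First, that a level containing three members of $V_{cd}$ must lie strictly between $\ell(c)$ and $\ell(d)$: if $\ell^\star$ equals $\ell(c)$ or $\ell(d)$ this fails by condition (i) of \cref{obs:gioproperty} (only two horizontal neighbours fit around a vertex), and if $\ell^\star$ is weakly above (or below) both, then the theta graph on $\{c,d,v_1,v_2,v_3\}$ forces one $v_i$ into the open bounded region of the cycle through the other two, yet that cycle is confined to a closed half-plane whose boundary line carries all three $v_i$'s, a contradiction. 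Second, that the strictly $y$-monotone path $c$--$v_2$--$d$ can be doubled: this follows simply because $v_2$ has degree two, so a new strand placed in the face incident to that path, infinitesimally close to it, crosses nothing and keeps every new edge within span $s$ --- no emptiness of the neighbourhood of $v_2$ along $\ell^\star$ is required. With the obstacle re-targeted in this way your argument goes through and is equivalent to the paper's.
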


\begin{proof}
Let $G'$ be the instance obtained by applying \Cref{rule:vc-deg1,rule:vc-deg2} to $G$. Clearly, if $(G,s)$ is a positive instance of {\sc $s$-Span Weakly leveled planarity}, then $(G',s)$ is a positive instance too, as $G'$ is a subgraph of $G$. In order to show that, if $(G',s)$ is a positive instance, then $(G,s)$ is a positive instance too, we need to prove that the vertices (and their incident edges) removed by applying \Cref{rule:vc-deg1,rule:vc-deg2} can be inserted in a weakly leveled planar drawing~$\Gamma'$ of~$G'$ with $\spn(\Gamma') \le s$ so that in the resulting drawing $\Gamma$ the span remains bounded by $s$. Throughout the insertion process, we always call $\Gamma'$ the drawing into which we are inserting vertices, although several re-insertions might have been performed already. 

First, consider each vertex $c \in C$ such that $|V_c|>3$. According to \Cref{rule:vc-deg1}, we have that $|V_c|-3$ vertices in $V_c$ were removed from $G$ in order to obtain $G'$. Since at least three vertices from $V_c$ belong to $G'$ and since at most two vertices from $V_c$ lie on the same level as $c$, one to its left and one to its right,  a degree-$1$ neighbor $v_c$ of $c$ lies on a level different from the one of~$c$ in $\Gamma'$. All the removed degree-$1$ neighbors of $c$ can then be placed next to $v_c$.

Second, consider each pair of vertices $\{c,d\}\in C$ such that $|V_{cd}| > 4s+5$. According to \Cref{rule:vc-deg2}, we have that $|V_{cd}|-4s-5$ vertices from $V_{cd}$ were removed from $G$ in order to obtain $G'$. We first prove that, in $\Gamma'$, a degree-$2$ neighbor of $c$ and $d$ lies on a level strictly between the level $\ell(c)$ of $c$ and the level $\ell(d)$ of $d$. Note that, since $\Gamma'$ is weakly leveled planar, each level not lower than $\ell(c)$ and not lower than $\ell(d)$ contains at most two vertices in $V_{cd}$; furthermore, there are at most $s+1$ such levels that are at distance at most $s$ from both $\ell(c)$ and $\ell(d)$. Hence, the number of degree-$2$ neighbors of $c$ and $d$ that lie on a level not lower than $\ell(c)$ and not lower than $\ell(d)$ is at most $2s+2$. Likewise, the number of degree-$2$ neighbors of $c$ and $d$ that lie on a level not higher than $\ell(c)$ and not higher than $\ell(d)$ is at most $2s+2$. Since the number of degree-$2$ neighbors of $c$ and $d$ in the reduced graph is $4s+5$, by the pigeonhole principle, a degree-$2$ neighbor $v_{cd}$ of $c$ and $d$ lies on a level strictly between $\ell(c)$ and $\ell(d)$. All the removed degree-$2$ neighbors of $c$ and $d$ can then be placed next to $v_{cd}$, obtaining a weakly leveled planar drawing of $G$ with span $s$.
\end{proof}

We now prove the existence of a kernel whose size is a polynomial function of $k$ and $s$. 
\begin{theorem} \label{thm:kernel-cover-span}
  Let $(G,s)$ be an instance of the {\sc $s$-Span Weakly leveled planarity} problem with a vertex cover~$C$ of size~$k$. Then there exists a kernelization that applied to $(G,s)$ constructs a kernel of size~$O(k \cdot s)$.
\end{theorem}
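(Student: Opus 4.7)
The plan is to combine \cref{le:drawing-vc} with an exhaustive application of \cref{rule:vc-deg1,rule:vc-deg2}. First, the kernelization checks whether $s \geq 6k$: if so, by \cref{le:drawing-vc} the instance $(G,s)$ is a positive instance, and it suffices to output a fixed positive instance of constant size. Otherwise, $s < 6k$, and the kernelization exhaustively applies \cref{rule:vc-deg1,rule:vc-deg2}, each application deleting only vertices whose neighborhood lies in~$C$. The whole process runs in polynomial time, and by \cref{lem:vc-rules-safe} the resulting instance $(G',s)$ is equivalent to $(G,s)$, with $C$ still a vertex cover of $G'$.

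The remaining task is to bound $|V(G')|$ by $O(k \cdot s)$. The key observation is that, because $C$ is a vertex cover, every vertex of $V(G')\setminus C$ has all its neighbors in $C$, so I can partition $V(G')\setminus C$ by degree and apply \cref{le:degree-3-vc} to two of the parts. The cover itself contributes $k$ vertices. Vertices of degree at least three in $V(G')\setminus C$ are exactly vertices with at least three neighbors in $C$, so by the first part of \cref{le:degree-3-vc} there are at most $2k$ of them. After \cref{rule:vc-deg1}, at most three degree-$1$ vertices in $V(G')\setminus C$ are attached to each vertex of $C$, giving at most $3k$ in total. Finally, each degree-$2$ vertex of $V(G')\setminus C$ is the common neighbor of some pair in~$C$; the second part of \cref{le:degree-3-vc} bounds the number of such pairs by $3k$, and \cref{rule:vc-deg2} leaves at most $4s+5$ common neighbors per pair, for a total of at most $3k(4s+5) = O(k \cdot s)$. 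Summing gives $|V(G')| = O(k \cdot s)$, as required.

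The main subtlety is the initial case distinction on $s$ versus $6k$: without eliminating the regime $s \geq 6k$ via \cref{le:drawing-vc}, the $O(k \cdot s)$ bound alone would not be a function of the parameter $k$ alone and hence would not formally qualify as a kernel size with respect to the vertex cover number. Once that degenerate regime is handled, everything else is a direct combination of the two reduction rules with the planarity-based counting \cref{le:degree-3-vc}, and the argument should compose without further complications.
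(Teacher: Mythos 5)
Your proof is correct and follows essentially the same route as the paper's: exhaustively apply \cref{rule:vc-deg1,rule:vc-deg2}, invoke \cref{lem:vc-rules-safe} for equivalence, and count the surviving vertices of $V'\setminus C$ by degree via \cref{le:degree-3-vc}, giving $k + 2k + 3k + 3k(4s+5) = O(k\cdot s)$. The only difference is your initial case distinction on $s \geq 6k$ using \cref{le:drawing-vc}, which the paper does not include in this proof (it accepts the $O(k\cdot s)$ bound as stated for this intermediate theorem and defers that case distinction to \cref{cor:fpt-vc}, where the $s$-dependence is removed to get an $O(k^2)$ kernel); your addition is harmless and, as you observe, needed only if one insists the kernel size be a function of the vertex cover number alone.
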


\begin{proof}
Let~$(G',s)$ with~$G'=(V'\subseteq V,E'\subseteq E)$ be the instance obtained by applying \Cref{rule:vc-deg1,rule:vc-deg2}. By \Cref{lem:vc-rules-safe}, we have that~$(G,s)$ and~$(G',s)$ are equivalent. To bound the size of~$G'$, we count the number of vertices in~$V' \setminus C$. By \Cref{le:degree-3-vc}, the number of vertices with degree at least~three is in $O(k)$. The use of \Cref{rule:vc-deg1} and \Cref{rule:vc-deg2} allows us to bound the number of degree-$1$ vertices in $V'\setminus C$ to $O(k)$, namely $3$ for each vertex in $C$, and the number of degree-$2$ vertices in $V'\setminus C$ to $O(k \cdot s)$, namely $4s+5$ for each pair of vertices in $C$ that are connected to a degree-$2$ vertex in $V'\setminus C$; the number of such pairs is in $O(k)$ by \Cref{le:degree-3-vc}. Finally, the number of vertices in $C$ is in $O(k)$.
\end{proof}

Finally, we get rid of the dependence on $s$ of the kernel size.

\begin{theorem} \label{cor:fpt-vc}
  Let $(G,s)$ be an instance of {\sc $s$-Span Weakly leveled planarity} with a vertex cover~$C$ of size~$k$. There exists a kernelization that applied to $(G,s)$ constructs a kernel of size~$O(k^2)$. Hence, the problem is FPT with respect to the size of a vertex cover.
\end{theorem}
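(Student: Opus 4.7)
The plan is to combine Lemma~\ref{le:drawing-vc} with Theorem~\ref{thm:kernel-cover-span} via a simple case distinction on the relative sizes of $s$ and $k$. Given an instance $(G,s)$ together with a vertex cover $C$ of $G$ with $|C|=k$, I would first compare $s$ to $6k$.

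If $s \geq 6k$, then by Lemma~\ref{le:drawing-vc} the instance $(G,s)$ is already a \yes-instance of {\sc $s$-Span Weakly leveled planarity}. The kernelization can therefore simply output a trivial \yes-instance of constant size, for example a single-edge graph $G'=(\{u,v\},\{(u,v)\})$ with parameter $s'=1$, which is clearly weakly leveled planar with span~$1$. This has size $O(1)$.

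If instead $s < 6k$, then I would invoke the kernelization of Theorem~\ref{thm:kernel-cover-span}, which in polynomial time produces an equivalent instance $(G',s)$ of size $O(k\cdot s)$. Substituting $s<6k$ yields a kernel of size $O(k \cdot 6k) = O(k^2)$, as desired. In both cases, the resulting parameter is bounded by a computable function of $k$ (either a constant or $s<6k$), and equivalence of the instances follows either from Lemma~\ref{le:drawing-vc} (in the first case) or Lemma~\ref{lem:vc-rules-safe} (in the second case). The existence of a kernel of size polynomial in $k$ immediately implies fixed-parameter tractability with respect to the vertex cover number.

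No serious obstacle is anticipated: the whole argument is a two-line dichotomy that leverages the two previously established results. The only subtlety is ensuring that, in the high-$s$ branch, the output instance is genuinely a valid small \yes-instance of the problem (so that the reduction preserves the answer), which is immediate for any small graph admitting an obvious weakly leveled planar drawing of span at most $s'$.
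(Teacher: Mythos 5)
Your proof is correct and follows essentially the same route as the paper: a dichotomy on whether $s$ exceeds $6k$, invoking \cref{le:drawing-vc} in the large-$s$ case and \cref{thm:kernel-cover-span} (whose $O(k\cdot s)$ bound becomes $O(k^2)$) in the small-$s$ case. The only cosmetic difference is that the paper caps the number of retained degree-$2$ vertices at $\min\{4s+5,24k+5\}$ so that a single uniform construction covers both cases (arguing that when $s>6k$ both $(G,s)$ and its subgraph kernel are positive), whereas you output a trivial constant-size \yes-instance in that branch; both are valid.
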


\begin{proof}
The kernel $(G',s)$ is obtained as in \Cref{thm:kernel-cover-span}, by applying \Cref{rule:vc-deg1,rule:vc-deg2}, however, for each pair of vertices $\{c,d\}\in C$, the number of degree-$2$ vertices of $G$ that are neighbors of $c$ and $d$ and that are kept in the kernel is at most $\min\{4s+5,24k+5\}$. Since this number is in $O(k)$ and since the number of pairs of vertices $\{c,d\}\in C$ such that $c$ and $d$ are the neighbors of a degree-$2$ vertex in $V-C$ is also in $O(k)$ by \Cref{le:degree-3-vc}, the bound on the kernel size follows from the one of \Cref{thm:kernel-cover-span}. Since the kernel can be constructed in polynomial time, it follows that {\sc $s$-Span Weakly leveled planarity} is FPT with respect to $k$.

It remains to prove that $(G',s)$ is equivalent to $(G,s)$. The proof distinguishes two cases. If $s\leq 6k$, then the proof follows from \Cref{thm:kernel-cover-span}, as in this case $(G',s)$ is the same kernel as the one computed for that theorem. On the other hand, if $s> 6k$, by \cref{le:drawing-vc} we have that $(G,s)$ is a positive instance, and hence $(G',s)$ is a positive instance, as well. 
\end{proof}

\subsection{Modulator to Bounded Size Components} \label{subse:fpt-mbsc}

In this section, we extend the results of \Cref{subse:fpt-vc} by providing a parameterization for {\sc $s$-Span Weakly leveled planarity} that is stronger than the one by vertex cover. This comes at the expense of an increase in the size of the kernel (namely, this size is now going to be exponential) with respect to the one of \Cref{cor:fpt-vc}. 

Let $(G,s)$ be an instance of {\sc $s$-Span Weakly leveled planarity} and~$b$ be a parameter. A set~$M$ of vertices of $G$ is a \emph{modulator to components of size $b$} (\emph{$b$-modulator} for short) if every connected component of $G-M$ has size at most~$b$.  Note that a $1$-modulator is a vertex cover.  We show that testing whether a graph with a $b$-modulator of size $k$ admits a weakly leveled planar drawing with span $s$ is FPT with respect to $b+k$.

For a component $C$ of $G-M$, its \emph{attachments}, denoted by $\att(C)$, are the vertices of $M$ adjacent to vertices of $C$.  For a component $C$, we denote by~$\bridge(C)$ the subgraph of $G$ formed by $C \cup \att(C)$ together with all edges between vertices of $C$ and vertices of $\att(C)$. Note that edges between vertices of $\att(C)$ are not part of $\bridge(C)$.

As for vertex cover, the first ingredient consists of showing that, if the span is sufficiently large (namely, larger than $(5b+1)bk$), then $(G,s)$ is a positive instance. This allows us to assume that $s$ is bounded with respect to  $b+k$.

We start with three lemmata about the drawings of the connected components of $G-M$.

%%%% I THINK THIS IS THE VERSION WE NEED

\begin{lemma}\label{le:uv-incident-same-face}
Suppose that $G$ is connected. Let~$\{u,v\} \subseteq M$ be such that there are two distinct connected components~$C,C'$ of $G-M$ with~$\att(C) = \att(C') = \{u,v\}$.  Then in every planar drawing of $G$ we have that $u$ and $v$ are incident to a common face. 
\end{lemma}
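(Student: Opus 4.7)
The plan is to locate, inside any planar drawing $\Gamma$ of $G$, a face whose boundary contains both $u$ and $v$. I will proceed in two stages: I first identify a candidate face in the sub-drawing $\Gamma_H$ induced by $H := \bridge(C) \cup \bridge(C')$, and then I argue that this face survives subdivision by the remaining edges of $G$.

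For the first stage, I consider the induced drawing $\Gamma_{\bridge(C)}$ of $\bridge(C)$. Since $C'$ is connected and shares no vertex with $\bridge(C)$ other than $u$ and $v$, while the edges of $\bridge(C')$ incident to $\bridge(C)$ all go to $u$ or $v$, the entire drawing of $\bridge(C')$ lies inside a single face $f_C$ of $\Gamma_{\bridge(C)}$. Both $u$ and $v$ lie on the boundary of $f_C$, because the edges from $\bridge(C')$ incident to them must enter $f_C$. I then pick one of the two walks along the boundary of $f_C$ from $u$ to $v$ and call it $A$; its interior visits only vertices of $V(C)$ and edges of $\bridge(C)$. In $\Gamma_H$, the face $f_C$ is subdivided by the drawing of $\bridge(C')$, and I let $f_H$ be the sub-face of $\Gamma_H$ incident to the whole walk $A$. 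Such a sub-face is well defined because $\bridge(C')$ meets the boundary of $f_C$ only at the two isolated points $u$ and $v$, so it cannot cut off an internal point of $A$ from a nearby point just inside $f_C$. The boundary of $f_H$ is therefore $A$ together with a walk in $\bridge(C')$ from $v$ back to $u$; in particular, $u$ and $v$ lie on it.

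For the second stage, I analyze the edges of $G$ drawn inside $f_H$. Any vertex $y$ of $G$ drawn strictly inside $f_H$ lies in $V(G) \setminus (V(C) \cup V(C') \cup \{u,v\})$. By planarity, no edge incident to $y$ can cross the boundary of $f_H$, so the boundary-neighbors of $y$ must be vertices of $H$ on the boundary of $f_H$. However, the vertices of $V(C) \cup V(C')$ on the boundary of $f_H$ have all their neighbors inside $V(\bridge(C)) \cup V(\bridge(C')) = V(H)$, because $\att(C) = \att(C') = \{u,v\}$ forces every edge leaving $V(C)$ or $V(C')$ to go to $u$ or $v$. Therefore the only possible boundary-neighbors of $y$ are $u$ and $v$, and in particular no edge of $G \setminus H$ ever touches an interior point of $A$.

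It follows that the face $f$ of $\Gamma$ lying immediately on the interior side of $A$ has the entire walk $A$ on its boundary, so the endpoints $u$ and $v$ of $A$ are both on the boundary of $f$. The main obstacle is the first stage: verifying that the sub-face $f_H$ exists with $u, v$ on its boundary requires careful use of the fact that $\bridge(C')$ attaches to the boundary of $f_C$ only at the two isolated points $u$ and $v$, so it cannot enclose the arc $A$ from within $f_C$.
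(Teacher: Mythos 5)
Your proof is correct, but it takes a genuinely different route from the paper's. The paper argues by contradiction via the standard characterization that $u$ and $v$ fail to share a face of a planar drawing exactly when some cycle $\mathcal O$ of $G$ separates them; since such a cycle avoids $u$ and $v$ and $\att(C)=\att(C')=\{u,v\}$, it must lie entirely within $C$ or entirely within $G-C$, and in either case one of the two components supplies a $u$--$v$ path disjoint from $\mathcal O$ that would have to cross it. You instead construct the common face directly: you locate the face $f_C$ of the sub-drawing of $\bridge(C)$ that hosts $C'$, extract a boundary walk $A$ from $u$ to $v$ through $C$, and then use $\att(C)=\att(C')=\{u,v\}$ to show that no vertex or edge of $G$ outside $H$ can touch an interior point of $A$, so a collar just inside $f_C$ along $A$ survives into a single face of the full drawing $\Gamma$. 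Both arguments are sound; yours essentially re-proves, in this special situation, the direction of the separating-cycle characterization that the paper simply invokes, which is why it is longer and demands more topological bookkeeping (the existence and connectedness of the collar, and the fact that the inserted connected sets meet $\overline{A}$ only at its endpoints~$u,v$ --- you correctly flag this as the delicate step, and it does go through). Two cosmetic points: since $\bridge(C)$ need not be $2$-connected, the boundary of $f_C$ is a closed walk that may visit $u$ or $v$ several times, so rather than ``one of the two walks'' you should take a minimal sub-walk from $u$ to $v$ with no internal occurrence of $u$ or $v$; and in the second stage you should note explicitly that edges of $G$ joining two vertices of $V(H)$ but absent from $H$ (such as a possible edge $(u,v)$ itself, which $\bridge(\cdot)$ excludes) also meet $\partial f_H$ only at $u$ and $v$, so they cannot disconnect the collar from $A$. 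What your approach buys is a slightly stronger conclusion --- an explicit face whose boundary contains an entire $u$--$v$ walk through $C$ --- at the cost of brevity.
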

\begin{proof}
    Assume, for a contradiction, that there is a planar drawing $\Gamma$ of $G$ in which $u$ and $v$ are not incident to a common face. Then there exists a cycle $\mathcal O$ in $G$ that has $u$ inside and $v$ outside in $\Gamma$, without loss of generality up to renaming $u$ with $v$. If $\mathcal O$ belongs to $G-C$, then $C$ crosses $\mathcal O$ in $\Gamma$, a contradiction. If $\mathcal O$ belongs to $C$, then $C'$ crosses $\mathcal O$ in $\Gamma$, also a contradiction.
\end{proof}

%\begin{lemma}\label{le:uv-incident-same-face}Suppose that $G$ is connected. Let~$\{u,v\} \subseteq M$ be such that there are two distinct connected components~$C,C'$ of $G-M$ with~$\att(C) = \att(C') = \{u,v\}$.  Then in every planar drawing of $G$ all vertices of $G-C$ lie in the same face of $C$.
%\end{lemma}
%\begin{proof}
%   Assume, for a contradiction, that there is a planar drawing of $G$ in which there are vertices of $G-C$ in different faces of $C$.  This implies that $G-C$ is disconnected.  However, since $G$ is connected and since $\att(C) = \{u,v\}$, it follows that $u$ and $v$ are in different connected components of $G-C$, but this contradicts the presence of $C'$.
% \end{proof}

\begin{lemma}
    \label{lem:1-att-drawing}
    Let~$X$ be an $n_x$-vertex planar graph and $x$ be a vertex of $X$.  Then $X$ has a leveled planar drawing on~$n_x$ levels where $x$ is the unique vertex on the lowest (or highest, as desired) level.
\end{lemma}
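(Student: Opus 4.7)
My plan is to prove the lemma by induction on $n_x$, with the base case $n_x=1$ being trivial. The symmetric variant with $x$ on the highest level follows by vertically flipping the constructed drawing, so I only discuss the ``lowest level'' version.

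For the inductive step, if $X$ is disconnected, I split $X$ into its connected components. Let $C_x$ denote the component containing $x$, with $n_1=|V(C_x)|$ vertices, and let $X'$ denote the subgraph induced by the remaining $n_2=n_x-n_1$ vertices. By the inductive hypothesis, $C_x$ admits a leveled planar drawing on $n_1$ levels with $x$ uniquely on the lowest level, and $X'$ (with an arbitrarily chosen designated vertex) admits a leveled planar drawing on $n_2$ levels. Stacking the drawing of $X'$ directly on top of the drawing of $C_x$ yields the desired leveled planar drawing of $X$ on exactly $n_x=n_1+n_2$ consecutive levels.

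For connected $X$, my plan is to reduce to a classical drawing construction for planar $st$-graphs. First, I would choose a planar embedding of $X$ in which $x$ lies on the outer face; this is always possible, since any face incident to $x$ can be designated as the outer one. Second, I would augment $X$ with additional edges, preserving planarity and keeping $x$ on the outer face, to obtain a $2$-connected planar graph $X^+$; this is the standard biconnectivity-preserving embedding augmentation. Third, choosing any vertex $t\neq x$ on the outer face of $X^+$, the classical theorem of Lempel, Even, and Cederbaum yields an $st$-numbering of $X^+$ with source $x$ and sink $t$, assigning distinct integers in $[n_x]$ to the vertices, with $x$ receiving $1$. Fourth, applying the polyline drawing algorithm of Di Battista and Tamassia for planar $st$-graphs produces a planar drawing of $X^+$ in which each vertex $v$ lies at $y$-coordinate equal to its $st$-number and each edge is a strictly $y$-monotone curve. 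This is a leveled planar drawing of $X^+$ on exactly $n_x$ levels with $x$ uniquely on the lowest level; deleting the edges added during augmentation yields the desired drawing of $X$.

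The main obstacle is the careful composition of the augmentation and the $st$-drawing steps: one must verify that the biconnectivity augmentation can always be performed while preserving both planarity and $x$'s position on the outer face, and that the resulting $st$-drawing places every vertex on its own level with $x$ strictly below all other vertices. Once the augmentation is in place, however, the $st$-numbering and the associated planar $st$-graph drawing are well-established tools that deliver the conclusion.
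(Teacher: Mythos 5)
Your proposal is correct and follows essentially the same route as the paper: biconnectivity augmentation, an $st$-numbering with source $x$ and a sink sharing a face with $x$, the Di Battista--Tamassia leveled drawing placing vertex $i$ on level $i$, and finally deleting the augmentation edges. The only addition is the induction wrapper for disconnected inputs, which is unnecessary under the paper's standing assumption that all graphs are connected.
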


\begin{proof}
We show how to construct the required drawing with $x$ on the lowest level, the other case is symmetric. Augment $X$ with extra edges so that it becomes a biconnected planar graph $X'$. Compute then an \emph{$st$-numbering}~\cite{lec-aptg-67} of $X'$ with $s:=x$; this is a bijective mapping from $V(X')$ to the set $\{1,\dots,n_x\}$ such that a special vertex $s$ (for us, this is $x$) gets number~$1$, a special vertex $t$ (for us, this is any vertex sharing a face with $x$ in a planar embedding of $X'$) gets number $n_x$, and every other vertex has both a neighbor with lower number and a neighbor with higher number. Then $X'$ admits a leveled planar drawing such that the vertex with number $i$ lies on level $i$~\cite[Theorem 3.5]{DBLP:journals/tcs/BattistaT88}. Restricting such a drawing to the edges of $X$ provides the desired drawing.
\end{proof}

\begin{lemma}
    \label{lem:2-att-drawing}
    Let~$X$ be an $n_x$-vertex planar graph, and let $x$ and $y$ be two vertices of $X$ such that $X$ admits a planar drawing in which $x$ and $y$ are incident to a common face. Then $X$ admits a leveled planar drawing on $n_x$ levels where $x$ is the unique vertex on the lowest (or highest, as desired) level and $y$ is the unique vertex on the highest (resp.\ lowest) level.
\end{lemma}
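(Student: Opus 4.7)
The plan is to mimic the proof of \Cref{lem:1-att-drawing}, but to use an $st$-numbering in which both endpoints are prescribed. First I would fix a planar embedding $\mathcal{E}$ of $X$ in which $x$ and $y$ are incident to a common face $f$. I would then augment $X$ to a biconnected plane graph $X'$ by adding edges inside each face of $\mathcal{E}$ that does not form a simple cycle; this augmentation can be performed so that $x$ and $y$ remain incident to a common face of $X'$ (for instance, choose $f$ as the outer face and only add chords in faces other than $f$, or split $f$ carefully so that at least one resulting face still has both $x$ and $y$ on its boundary). Then I would add the edge $(x,y)$ inside the common face to obtain a biconnected plane graph $X''$ containing the edge $(x,y)$.

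Next I would invoke the classical result of Lempel, Even, and Cederbaum~\cite{lec-aptg-67}: since $X''$ is biconnected and $(x,y)$ is an edge of $X''$, there exists an $st$-numbering of $X''$ with $s=x$ and $t=y$, i.e., a bijection $V(X'')\to\{1,\dots,n_x\}$ assigning $1$ to $x$, $n_x$ to $y$, and such that every other vertex has both a lower-numbered and a higher-numbered neighbor. By \cite[Theorem~3.5]{DBLP:journals/tcs/BattistaT88}, $X''$ admits a leveled planar drawing in which the vertex with number $i$ lies on level $i$. Restricting this drawing to the edges of $X$ yields a leveled planar drawing of $X$ on $n_x$ levels in which $x$ is the unique vertex on the lowest level and $y$ is the unique vertex on the highest level; reflecting vertically gives the opposite configuration.

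The main technical obstacle is the augmentation step: one must argue that $X$ can be augmented to a biconnected planar graph in which $x$ and $y$ still share a face, so that the edge $(x,y)$ can be safely added. This is standard (any face that is not a simple cycle can be triangulated from within without merging $x$ and $y$ into disjoint faces, choosing $f$ as the face in which to place the new edge $(x,y)$), but it is the only nontrivial deviation from the proof of \Cref{lem:1-att-drawing}. The remainder of the argument is a direct reuse of the $st$-numbering machinery.
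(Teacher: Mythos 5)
Your proposal is correct and follows essentially the same route as the paper: add the edge $(x,y)$ (justified by the common-face assumption), augment to a biconnected planar graph, compute an $st$-numbering with $s=x$ and $t=y$, apply \cite[Theorem~3.5]{DBLP:journals/tcs/BattistaT88}, and restrict to $X$. The only difference is that the paper inserts $(x,y)$ \emph{before} the biconnectivity augmentation, which sidesteps the one technical obstacle you flag (preserving the common face through the augmentation), so that complication can be avoided entirely.
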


\begin{proof}
The proof is similar to the one of \Cref{lem:1-att-drawing}. We again only show the construction to make $x$ the lowest vertex. If the edge $(x,y)$ is not in $X$, then add it to $X$. Because of the assumption that $X$ admits a planar drawing in which $x$ and $y$ are incident to a common face, this preserves the planarity of $X$. Further augment $X$ with extra edges so that it becomes a biconnected planar graph $X'$. Compute then an $st$-numbering of $X'$ with $s:=x$ and $t:=y$. Then $X'$ admits a leveled planar drawing in which the vertex with number $i$ lies on level $i$~\cite[Theorem 3.5]{DBLP:journals/tcs/BattistaT88}. Restricting such a drawing to the edges of $X$ provides the desired drawing.
\end{proof}

We now proceed to construct a \emph{trimmed graph} $\trim(G)$. This is done by removing all components~$C$ of $G-M$ with~$|\att(C)| = 1$.  Further, for each pair~$\{u,v\} \subseteq M$ such that there are at least two components $C$,$C'$ with~$\att(C') = \att(C) = \{u,v\}$, we remove all such components and replace them by a single edge~$uv$.  We have the following.

\begin{lemma} \label{le:bm-trim-size}
We have that $\trim(G)$ is planar and has size at most~$(5b+1)k$.
\end{lemma}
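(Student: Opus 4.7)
I would prove the two claims separately.

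For planarity, I would exhibit $\trim(G)$ as a minor of $G$. The singleton-attachment components are simply deleted. For a pair $\{u,v\}\subseteq M$ whose attachment is shared by several components $C_1,\dots,C_t$ with $t\ge 2$, I would delete $C_2,\dots,C_t$, and realize the replacement edge $uv$ by first contracting a spanning tree of $C_1$ so that $C_1$ collapses to a single vertex $x$ adjacent to both $u$ and $v$, and then contracting the edge $(x,u)$, which turns the remaining $(x,v)$ edge into the desired edge~$uv$. Performing these operations for every such pair, and keeping the other components of $G-M$ that survive the trimming untouched, realizes $\trim(G)$ as a minor of the planar graph~$G$, hence $\trim(G)$ is planar.

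For the size bound, I would split $V(\trim(G))$ into the $k$ modulator vertices and the vertices lying in the components of $G-M$ that survive the trimming. By the definition of a $b$-modulator each such kept component has at most $b$ vertices, so it suffices to bound the number of kept components. These are of two possible types: type~A, with $|\att(C)|\ge 3$, or type~B, with $|\att(C)|=2$ and unique among the components sharing that attachment pair (by construction of $\trim(G)$).

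To count both types simultaneously, I would contract each kept component of $G-M$ to a single vertex, obtaining an auxiliary planar graph $H$ on vertex set $M$ enlarged by one contracted vertex per kept component. Type-A components correspond to vertices of $V(H)\setminus M$ with at least three neighbors in~$M$; by the first part of \cref{le:degree-3-vc}, there are at most $2k$ of these. Type-B components correspond to degree-two vertices of $V(H)\setminus M$ whose pairs of neighbors in $M$ are, by construction, pairwise distinct; hence their number is bounded by the number of distinct such neighborhood pairs, which by the second part of \cref{le:degree-3-vc} is at most $3k$. Summing, $|V(\trim(G))|\le k + (2k+3k)\cdot b = (5b+1)k$, as claimed.

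The only subtle step is the planarity statement, because $\trim(G)$ is not literally a subgraph of $G$ due to the new edges introduced between modulator vertices; the minor-based description sidesteps this issue cleanly. The rest is a routine classification-and-counting argument driven by \cref{le:degree-3-vc}.
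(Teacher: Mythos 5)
Your proposal is correct, and the counting half is essentially the paper's argument: split $V(\trim(G))$ into the at most $k$ modulator vertices, the at most $2k$ components with three or more attachments (first part of \cref{le:degree-3-vc}), and the at most $3k$ surviving two-attachment components (one per attachment pair, with the pairs bounded by the second part of \cref{le:degree-3-vc} after contracting each component to a vertex), each component contributing at most $b$ vertices.

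Where you genuinely diverge is the planarity half. The paper justifies the insertion of the replacement edge $(u,v)$ by invoking \cref{le:uv-incident-same-face}, a separately proved topological fact that whenever two distinct components share the attachment pair $\{u,v\}$, the vertices $u$ and $v$ lie on a common face of every planar drawing of $G$, so the edge can be drawn there. You instead realize $\trim(G)$ directly as a minor of $G$: keep one component $C_1$ of the parallel bundle, contract a spanning tree of $C_1$ to a vertex $x$ adjacent to both $u$ and $v$, and contract $(x,u)$ to produce the edge $(u,v)$. Your route is self-contained and avoids \cref{le:uv-incident-same-face} entirely (note that $x$ has no neighbors outside $\{u,v\}$, since components of $G-M$ are pairwise non-adjacent, so the contraction introduces exactly the intended edge and nothing else, and the contractions for distinct pairs act on disjoint components). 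The paper's face-based lemma buys something your argument does not need here but the paper reuses elsewhere: it controls \emph{where} the new edge can be embedded relative to the rest of the drawing, which matters when re-inserting the removed components into a leveled drawing (\cref{lem:bm-span-bound}). For the bare planarity claim of \cref{le:bm-trim-size}, your minor argument is the more economical of the two.
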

\begin{proof}
We first prove the planarity of $\trim(G)$. Actually, this follows from the planarity of~$G$ almost directly, with one caveat. For each pair~$\{u,v\} \subseteq M$ such that there are at least two components $C$,$C'$ with~$\att(C') = \att(C) = \{u,v\}$, \Cref{le:uv-incident-same-face} ensures that the insertion of the edge $(u,v)$ preserves the planarity of $\trim(G)$. Note that if a pair~$\{u,v\} \subseteq M$ is such that there is a unique component $C$ with~$\att(C) = \{u,v\}$, then a statement analogous to the one of \Cref{le:uv-incident-same-face} does not hold, and that is why we cannot replace $C$ with the edge $(u,v)$.

We now prove the bound on the size of $\trim(G)$. By \Cref{le:degree-3-vc}, we have that $\trim(G)$ contains at most $2k$ components with three or more attachments. Each of such components has at most $b$ vertices, for a total of at most $2bk$ vertices. Also, again by \Cref{le:degree-3-vc}, we have that $\trim(G)$ contains at most $3k$ pairs $\{u,v\} \subseteq M$ of vertices such that there exists a component $C$ with $\att(C) = \{u,v\}$ and, by construction, for each such a pair $\{u,v\}$, we have that $\trim(G)$ contains one component $C$ with $\att(C) = \{u,v\}$. This results in $3bk$ additional vertices in such components. Finally, $\trim(G)$ has at most $k$ vertices in $M$. We are now ready to prove the following.
\end{proof}

%\begin{lemma}
%	Let $G$ be a planar graph with a $b$-modulator $M$ of size~$k$.  Then~$\trim(G)$ has size at most~$(5b+1)k$.
%\end{lemma}
%
%\begin{proof}
%	By \Cref{le:degree-3-vc}, $\trim(G)$ contains at most $2k$ components with three or more attachments, which together contribute at most $2kb$ vertices. Consider the simple graph obtained from~$G$ obtained by removing all components with a single attachment or three or more attachments and contracting each component of $G-M$ with two attachments to an edge.  This graph is planar and therefore contains at most~$3k$ edges.  It follows that there at most $3k$ pairs of vertices in $M$ that are the attachments of a component of $G-M$ with two attachments.  Since~$\trim(G)$ by construction contains at most one such component per pair, these contribute at most $3kb$ vertices.  Together with the $k$ vertices from $M$, $\trim(G)$ thus has at most $(5b+1)k$ vertices.
%\end{proof}

\begin{lemma} \label{lem:bm-span-bound}
Suppose that $s\geq (5b+1)bk$. Then $(G,s)$ is a positive instance of {\sc $s$-Span Weakly leveled planarity}.
\end{lemma}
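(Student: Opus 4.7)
The plan is to construct a weakly leveled planar drawing of $G$ with span at most $(5b+1)bk \le s$ by first drawing $\trim(G)$, then vertically stretching this drawing to create free levels, and finally reinserting the components of $G - M$ that were discarded in the construction of $\trim(G)$. By \Cref{le:bm-trim-size}, $\trim(G)$ is a planar graph on at most $(5b+1)k$ vertices, so it admits a leveled planar drawing $\Gamma_\trim$ whose height is at most $(5b+1)k - 1$; such a drawing can be produced, for example, via the $st$-numbering technique already used in the proofs of \Cref{lem:1-att-drawing,lem:2-att-drawing}.

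Starting from $\Gamma_\trim$, I will then vertically stretch the drawing by a factor of $b$, moving each vertex on the original level $i$ to the new level $b(i-1)+1$. This operation multiplies the span of every edge of $\trim(G)$ by $b$, keeping it below $b \cdot ((5b+1)k - 1) < (5b+1)bk$, and creates $b-1$ empty levels between any two consecutive $\trim(G)$-occupied levels. Next, I will reinsert the removed components of $G - M$. For each component $C$ with $\att(C) = \{u\}$, \Cref{lem:1-att-drawing} yields a leveled planar drawing of $C \cup \{u\}$ on $|C|+1 \le b+1$ consecutive levels with $u$ on the lowest (or highest) level; this drawing will be placed so that $u$ stays at its current position and the remaining vertices of $C$ lie on the fresh empty levels directly above (resp.\ below) $u$. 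For each pair $\{u,v\}$ that collected multiple removed components and hence became an edge of $\trim(G)$, \Cref{le:uv-incident-same-face} guarantees that $u$ and $v$ are incident to a common face of $\trim(G)$; hence, \Cref{lem:2-att-drawing} provides, for each such component $C$, a leveled planar drawing of $C \cup \{u,v\}$ on $|C|+2 \le b+2$ consecutive levels with $u$ at the bottom and $v$ at the top, and I will place these drawings side by side within the slab between the levels of $u$ and $v$ in the stretched drawing. Since every reinserted edge lives in such a slab, its span is bounded by that of the corresponding edge of $\trim(G)$, and thus by $(5b+1)bk$.

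The main obstacle will be to verify that the $b-1$ empty levels available between consecutive $\trim(G)$-occupied levels genuinely suffice to host each reinserted component of size up to $b$, and that several components sharing the same attachment pair can be packed side by side without destroying planarity. The tightest case is a component $C$ with $|C| = b$ whose two attachments lie on consecutive levels of $\Gamma_\trim$: only $b-1$ strictly interior levels are then available in the slab, one fewer than what \Cref{lem:2-att-drawing} would strictly require. To handle it, I will exploit the freedom afforded by weakly leveled (as opposed to strictly leveled) drawings, letting one vertex of $C$ share a level with $u$, $v$, or another vertex of $C$ and routing the corresponding edge horizontally along the slab boundary, so as to fit $C$ into the available levels without exceeding the target span bound $(5b+1)bk$.
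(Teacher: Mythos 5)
Your proposal follows the paper's proof essentially step for step: draw $\trim(G)$ (whose size is bounded by \Cref{le:bm-trim-size}) as a leveled planar drawing of bounded height, stretch it vertically to create empty intermediate levels, and reinsert the discarded components of $G-M$ using \Cref{lem:1-att-drawing,lem:2-att-drawing}, with \Cref{le:uv-incident-same-face} guaranteeing that the two attachments of a replaced component share a face. The only divergence is your stretch factor: leaving $b-1$ empty levels per gap is one too few for a size-$b$ component (a single-attachment component needs $b$ levels besides that of its attachment $u$, and a two-attachment component whose attachments sit on consecutive trimmed levels needs $b$ strictly interior levels), which is what forces the level-sharing patch in your last paragraph. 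The paper avoids this entirely by inserting $b$ intermediate levels between any two consecutive levels of the trimmed drawing, so that \Cref{lem:1-att-drawing,lem:2-att-drawing} apply verbatim with $\bridge(C)$ drawn on $b+1$ (resp.\ $b+2$) levels fitting exactly into one slab; the hypothesis $s\geq (5b+1)bk$ is meant to be generous enough to absorb the extra level per gap. Your patch is salvageable, since collapsing one interior level of the single-vertex-per-level drawings produced by the $st$-numbering only turns one edge horizontal, but you would still owe an argument that this preserves planarity, and it is a complication you can simply buy your way out of with one more level.
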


\begin{proof}
    Let~$\Gamma_t$ be an arbitrary leveled planar drawing of~$\trim(G)$ without empty levels.  Since the size of $\trim(G)$ is bounded by $(5b+1)k$, it follows that the height and therefore also the span of~$\Gamma$ is at most $(5b+1)k$.  
%
    %% FF: no need for doubling, the drawing already does not have horizontal edges
    %By doubling the number of levels, we obtain a leveled planar drawing~$\Gamma_2$ of~$\trim(G)$ with~$\spn(\Gamma_2) \le (10b+2)k$ where no edge is horizontal.
%
    To obtain a leveled planar drawing~$\Gamma_G$ of $G$, we insert $b$ intermediate levels between any pair of consecutive levels of $\Gamma_t$, resulting in a total of $(5b+1)bk$ levels.  We use these levels to re-insert the components of $G$ that we removed to obtain~$\trim(G)$. This is done as follows.  For each component~$C$ of $G-M$ with a single attachment~$u$, by \Cref{lem:1-att-drawing} we obtain a drawing of~$\bridge(C)$ on $b+1$ levels in which~$u$ is the unique vertex on the highest level.  We can thus use the level of $u$ and the $b$ intermediate levels below it to merge such a drawing into~$\Gamma_G$ (by identifying the two copies of $u$).  Similarly, for a component $C$ of $G-M$ with $\att(C)=\{u,v\}$, if $C$ is not contained in~$\trim(G)$, then $\trim(G)$ contains the edge~$(u,v)$.   By \Cref{lem:2-att-drawing}, $\bridge(C)$ has a drawing on $b+2$ levels, where $u$ and $v$ are on the highest and lowest level, respectively.  As $u$ and~$v$ are drawn on distinct levels in~$\Gamma_t$, there are at least $b$ intermediate levels between them in $\Gamma_G$, and we can merge the drawing of~$\bridge(C)$ next to $e$ (by merging the two copies of $u$ and the two copies of $v$, respectively).  Since these are the only components of $G-M$ that are not contained in $\trim(G)$, this yields a drawing~$\Gamma_G$ of~$G$ on $(5b+1)bk$ levels.  Therefore~$(G,s)$ is a positive instance of {\sc $s$-Span Weakly leveled planarity}.
\end{proof}

We now proceed to show how to obtain a kernel for the given instance $(G,s)$ of {\sc $s$-Span Weakly leveled planarity}. We define two components of $G-M$ with~$|\att(C)| \le 2$ as \emph{equivalent} if~$\att(C) = \att(C')$ and moreover~$\bridge(C)$ and~$\bridge(C')$ are isomorphic with an isomorphism that keeps the attachments fixed. The following lemma follows from the fact that the number of planar graphs on $b$ vertices is an exponential function of $b$. 

\begin{lemma} \label{le:computable}
    There is a computable function $f \colon \mathbb N \to \mathbb N$ such that, for any planar graph $G$ and any $b$-modulator of size~$k$, there are at most~$f(b)$ equivalence classes of components of~$G-M$ with at most two specified attachments.
\end{lemma}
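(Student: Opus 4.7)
The plan is a direct counting argument based on the fact that all components under consideration have bounded size. First, since $M$ is a $b$-modulator, every connected component $C$ of $G - M$ has at most $b$ vertices, and combining with $|\att(C)| \le 2$ we obtain $|V(\bridge(C))| \le b + 2$. So the bridges under consideration live in a universe whose ``cardinality'' depends only on $b$.

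Next, I would observe that by the definition of the equivalence relation given just before the lemma, the equivalence class of a component is completely determined by the isomorphism type of $\bridge(C)$ as an abstract graph in which the (at most two) attachment vertices are marked. Since the required isomorphism is forced to fix the attachments pointwise, no additional information about the identity of the attachments inside $M$ is relevant for the classification; only the abstract graph-with-marked-vertices structure matters.

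Finally, I would upper bound the number of such isomorphism types by the number of pairs (simple graph on at most $b+2$ labeled vertices, choice of up to two of its vertices as marked attachments). This yields the computable bound
\[
f(b) \;:=\; 2^{\binom{b+2}{2}} \cdot \Bigl(1 + (b+2) + \binom{b+2}{2}\Bigr),
\]
which depends only on $b$, as required. (A tighter bound is available from an enumeration of small planar graphs, but this is unnecessary, since the lemma only requires a computable function.)

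There is no real obstacle in this argument; it is a finiteness statement of the ``only finitely many small graphs'' type. The only point requiring a little care is the reading of the lemma: the bound $f(b)$ must be independent of~$k$, so the right perspective is to classify components by the abstract structure of their bridges together with the marked positions of the attachments, rather than by the specific vertices of $M$ that play the role of attachments. With this understanding, the counting argument goes through immediately.
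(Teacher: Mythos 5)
Your proposal is correct and matches the paper's justification, which is exactly the same ``only finitely many graphs on at most $b+2$ vertices with marked attachments'' counting argument (the paper merely states it in one line, appealing to the fact that the number of planar graphs on $b$ vertices is a function of $b$ alone). The only microscopic quibble is that, since the isomorphism must fix the two attachments pointwise, the marked pair is effectively ordered, so the term $\binom{b+2}{2}$ should be $(b+2)(b+1)$; this changes nothing, as the lemma only requires some computable bound.
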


In order to prove the correctness of the upcoming kernelization, we are going to need two topological lemmata about leveled planar drawings.

\begin{lemma}
\label{lem:block-above-below}
Let $(G,\ell)$ be a leveled planar graph with a leveled planar drawing~$\Gamma$ and let $v$ be a cutvertex of $G$ that is part of $4t$ blocks.  Then at least one of the following two statements is true in $\Gamma$:
\begin{itemize}
    \item A block incident to $v$ is entirely above or entirely below~$v$, except at $v$ itself.
    \item A block incident to $v$ has height larger than or equal to~$t$.
\end{itemize}
\end{lemma}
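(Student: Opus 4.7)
The plan is to analyze the rotation at $v$ in the leveled planar drawing, classify the blocks at $v$ by how their edges at $v$ sit in that rotation, and bound how many blocks can have vertices both strictly above and strictly below $v$. First, since every edge at $v$ is strictly $y$-monotone, the up-edges at $v$ all leave $v$ into the open upper half-plane and form a contiguous arc $A_U$ in the rotation; symmetrically the down-edges form a contiguous arc $A_D$. I classify each block as Type~U (all edges at $v$ lie in $A_U$), Type~D (all lie in $A_D$), or Type~LR (the edges at $v$ include both up- and down-edges). Because $A_U$ and $A_D$ meet along only two angular boundaries, at most two blocks can be of Type~LR, and any Type~LR block already contains neighbors of $v$ both strictly above and strictly below $v$.

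The core step is a topological \emph{wrapping bound}: at most two Type~U blocks can have a vertex strictly below $v$, and symmetrically for Type~D. If a Type~U block $B$ has a vertex below $v$, then its outer face cycle $C_B$ passes through $v$ with both incident edges going up and must still reach or enclose a point with $y<\ell(v)$; therefore $C_B$ must wrap around $v$ past one of the angular boundaries of $A_U$. Near $v$, the other blocks occupy the intervening angular wedges of the rotation, and since distinct blocks at $v$ have interior-disjoint drawings meeting only at $v$, the cycle $C_B$ cannot enclose the drawing of another block at $v$ (that block would sit in the interior of $C_B$, contradicting that it is a separate block). Hence on each side of $A_U$ the wrapping is available only to the Type~U block whose up-edges are extremal on that side, so at most two Type~U blocks qualify. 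The Type~D argument is symmetric.

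Combining the counts, at most $2+2+2=6$ blocks at $v$ can have vertices both strictly above and strictly below $v$. Hence if $t\ge 2$, then $4t\ge 8>6$, so some block has no vertex strictly above $v$ or none strictly below $v$ and is therefore entirely above or entirely below $v$, giving the first conclusion. If $t=1$, then any block with vertices both strictly above and strictly below $v$ already has height at least $2\ge t$, giving the second conclusion. The main obstacle I foresee is rigorously excluding exotic wrapping configurations when the blocks at $v$ are nested in the rotation (i.e., when the edges at $v$ of one block are interleaved with those of another); I would address this by applying the wrapping argument recursively along the laminar tree of nested block attachments at $v$ and using that every nested block is drawn inside a face of a containing block and thus inherits a height bound from that face.
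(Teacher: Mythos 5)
Your core counting step is wrong, and the error is exactly at the point you flag as an ``obstacle.'' You claim that at most two Type~U blocks can have a vertex strictly below $v$, justified by saying that the wrapping cycle $C_B$ cannot enclose the drawing of another block at $v$ because ``that block would sit in the interior of $C_B$, contradicting that it is a separate block.'' There is no such contradiction: in a planar drawing, a block attached at a cutvertex may perfectly well be drawn inside a face of another block attached at the same cutvertex. Worse, even in configurations with no nesting at all, the bound of two is false. Consider $v$ at level $0$ and blocks $B_1,B_2,\dots$, where $B_j$ is a $4$-cycle $(v,a_j,b_j,c_j)$ with both edges at $v$ going up, $b_j$ at level $-1$, and $a_j$ at level $j+1$: each $B_j$ wraps over $B_1,\dots,B_{j-1}$ by climbing one level higher before descending past level $0$ to reach $b_j$. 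All of these are Type~U, all have a vertex strictly below $v$, and $B_j$ has height only about $j+2$. So with heights bounded by $t-1$ one can realize $\Theta(t)$ such blocks, not $6$; the number of blocks meeting both sides of $\ell(v)$ is inherently a function of $t$, which is precisely why the lemma's hypothesis is $4t$ blocks rather than a constant. Your final step ($4t\ge 8>6$) therefore does not go through. A secondary gap: a block that is neither entirely above nor entirely below $v$ need not have vertices on \emph{both} strict sides --- it may have a vertex other than $v$ exactly on level $\ell(v)$ --- so your trichotomy also misses a case.

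The fix is to turn the wrapping phenomenon into a quantitative statement rather than an absolute bound, which is what the paper does. If no block is entirely above or entirely below $v$, then every block's drawing meets the level $\ell(v)$ at some point $p_i\ne v$; take in each block a curve $\pi_i$ from $v$ to $p_i$ whose interior avoids $\ell(v)$. At least $2t$ of the $p_i$ lie on the same side of $v$; ordering them along the level, non-crossingness forces each successive $\pi_{i+1}$ to reach a level strictly above or strictly below everything reached by $\pi_1,\dots,\pi_i$, so after $2t$ curves some block reaches $t$ levels away from $\ell(v)$ and hence has height at least $t$. Your rotation-based classification (contiguity of each block's edges at $v$, the two transition arcs between up-edges and down-edges) is sound as far as it goes, but it only bounds the number of Type~LR blocks by $2$; it cannot bound the Type~U and Type~D blocks that dip past $\ell(v)$, and those are where the real content of the lemma lies.
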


%% FF: THE LEFTMOST CROSSING POINT ARGUMENT DOES NOT REALLY WORK, PREVIOUS \PI_I CAN GO FURTHER TO THE RIGHT BEFORE REACHING THEIR LEFTMOST INTERSECTION POINT

\begin{proof}
    Assume that there is no block incident to $v$ whose vertices different from $v$ all lie above or all lie below $v$.  Then each block intersects the level~$\ell(v)$ at a point different from $v$, where an intersection is either a crossing between the level and an edge of the block or is a vertex of the block that is different from $v$ and that lies on $\ell(v)$. For each block $C_i$, let $\pi_i$ be a curve that is part of the drawing of $C_i$ in $\Gamma$, that connects $v$ with a point $p_i$ different from $v$ on~$\ell(v)$, and that does not contain any point on~$\ell(v)$ in its interior. Assume that at least $2t$ of the points $p_i$ are to the right of~$v$, as the case in which $2t$ of the points $p_i$ are to the left of $v$ is symmetric. Let~$C_1,\dots,C_{2t}$ be $2t$ blocks such that the points $p_1,\dots,p_{2t}$ are to the right of~$v$, where $p_1,\dots,p_{2t}$ are in this left-to-right order along $\ell(v)$. For~$i=1,\dots,2t$, since~$p_{i+1}$ lies right of~$p_i$ and the curves~$\pi_1,\dots,\pi_{2t}$ are pairwise non-crossing, it follows that~$\pi_{i+1}$ intersects a level that is strictly above or strictly below all levels that are intersected by~$\pi_1,\dots,\pi_i$.  Therefore~one of the blocks contains a vertex that is either $t$ levels below or $t$ levels above~$v$, i.e., its height is at least~$t$.
\end{proof}

\begin{lemma}
    \label{lem:split-component-between}
    Let $(G,\ell)$ be a leveled planar graph with a leveled planar drawing~$\Gamma$ and let $\{u,v\}$ be a separating pair $G$ with~$\ell(u) \le \ell(v)$ that is part of $8t$ split components.  Then at least one of the following statements is true in $\Gamma$:
  \begin{itemize}
    \item A split component incident to $u$ and $v$ is strictly between~$\ell(u)$ and~$\ell(v)$, except at $u$ and $v$.
    \item A split component incident to $u$ and $v$ has height larger than or equal to~$t$.
\end{itemize}  
\end{lemma}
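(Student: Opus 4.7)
The plan is to mirror the proof of \cref{lem:block-above-below}, adapted by a double pigeonhole argument over the two extremal levels $\ell(u)$ and $\ell(v)$. Assume for contradiction that the first alternative fails, so that each of the $8t$ split components incident to $u$ and $v$ contains a vertex other than $u,v$ at level $\leq\ell(u)$ or $\geq\ell(v)$; I call such a split component \emph{lower-extending} or \emph{upper-extending}, respectively. By the pigeonhole principle, at least $4t$ of them are of one kind, and without loss of generality I assume they are upper-extending; denote them by $C_1,\dots,C_{4t}$.

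For each $C_i$, I would then extract a simple arc $\pi_i$ that lies entirely in the closed half-plane $\{y\geq\ell(v)\}$, is contained in the drawing of $C_i$, starts at $v$, and ends at a point $p_i\neq v$ on the line $y=\ell(v)$. The point $p_i$ exists because, if $C_i$ has a witness vertex at level exactly $\ell(v)$ other than $v$, then this vertex provides $p_i$; otherwise the witness lies strictly above $\ell(v)$ and any path in $C_i$ from it to $u$ is drawn as a sequence of $y$-monotone edges, so this path crosses $\ell(v)$ at a point different from $v$ (since, assuming first $\ell(u)<\ell(v)$, $u$ is strictly below $\ell(v)$). Tracing back from $p_i$ through the portion of the drawing of $C_i$ sitting in $\{y\geq\ell(v)\}$ and joining it to $v$ (a tiny topological perturbation inside $C_i$ near $v$ is needed when $v$ has no neighbour of $C_i$ at level $\geq\ell(v)$) gives the desired arc $\pi_i$. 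Since distinct split components share only $u$ and $v$ and $u$ lies strictly below $\ell(v)$, the arcs $\pi_i$ meet pairwise only at their common endpoint $v$. A second pigeonhole on the side of $v$ yields, after relabeling, $2t$ such arcs $\pi_1,\dots,\pi_{2t}$ whose endpoints $p_1<p_2<\dots<p_{2t}$ appear in this left-to-right order to the right of $v$ on $\ell(v)$; the left-side case is symmetric.

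The closing step is the planarity argument of \cref{lem:block-above-below}: by induction on $i$, since the arcs are pairwise non-crossing, lie in the closed upper half-plane, and share the endpoint $v$, the arc $\pi_{i+1}$ must reach a level strictly higher than the maximum level attained by $\pi_1,\dots,\pi_i$, otherwise it would be forced to cross a previous arc in order to reach $p_{i+1}$ from $v$. Because the maxima are attained at vertices (edges are $y$-monotone and the arcs traverse whole edges), the heights grow by at least one integer level per step, so $\pi_{2t}$ reaches level at least $\ell(v)+2t-1\geq\ell(v)+t$, exhibiting a vertex of $C_{2t}$ at level $\geq\ell(v)+t$. Together with $u$ at level $\ell(u)\leq\ell(v)$, this gives that $C_{2t}$ has height at least $t$. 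The main obstacle I anticipate is producing the arcs $\pi_i$ cleanly in the sub-case where $v$ has no neighbour of $C_i$ at level $\geq\ell(v)$, which requires the small perturbation mentioned above to keep the arc interiors pairwise disjoint across different split components; the degenerate case $\ell(u)=\ell(v)$ collapses the open strip to the empty set and is handled by the same dichotomy, with the additional care of treating the segment $[v,u]$ on $\ell(v)$ as a side to pigeonhole upon.
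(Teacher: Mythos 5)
Your overall strategy coincides with the paper's: argue by contradiction, pigeonhole the $8t$ split components into those extending weakly above $\ell(v)$ and those extending weakly below $\ell(u)$, extract for each of the surviving $4t$ components an arc inside its drawing from $v$ to a point $p_i\neq v$ on the line $y=\ell(v)$, pigeonhole again on the side of $v$, and run the nesting argument of \cref{lem:block-above-below} on $2t$ arcs. The paper's proof is indeed declared to be ``the same as in \cref{lem:block-above-below}'' after the first pigeonhole, so up to that point you are on track.

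The genuine gap is the one you yourself flagged: you require each arc $\pi_i$ to lie entirely in the closed half-plane $\{y\ge\ell(v)\}$ while being contained in the drawing of $C_i$, and this is not always achievable. Consider a split component that is a path $v,a,w,u$ drawn with $\ell(u)<\ell(a)<\ell(v)<\ell(w)$: the component is upper-extending (witness $w$), but the only drawing element of $C_i$ incident to $v$ leaves $v$ downwards, so the portion of the drawing of $C_i$ lying in $\{y\ge\ell(v)\}$ has $v$ as an isolated point, disconnected from the part containing $w$. No ``tiny perturbation near $v$'' inside $C_i$ can then produce an arc from $v$ to a point of $\ell(v)$ that stays weakly above $\ell(v)$ and inside the drawing of $C_i$; the obstruction is global, not local, and leaving the drawing of $C_i$ would forfeit the pairwise non-crossing property of the arcs. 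The paper sidesteps this entirely: it only requires that $\pi_i$ have no interior point on $\ell(v)$, so each arc automatically lies in one closed half-plane, but possibly the lower one (in the example above, the natural arc goes from $v$ through $a$ to the first return of the edge $(a,w)$ to $\ell(v)$, and lies below). The nesting step then concludes that $\pi_{i+1}$ reaches a level strictly above \emph{or} strictly below all levels met by $\pi_1,\dots,\pi_i$, and either alternative forces some component to contain a vertex at distance at least $t$ from $\ell(v)$, hence to have height at least $t$. If you drop the upper-half-plane constraint and carry this disjunction through your induction, your argument closes; as written, the arc-construction step fails.
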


\begin{proof}
Assume that there is no split component whose vertices different from $u$ and $v$ all lie strictly between~$\ell(u)$ and~$\ell(v)$.  Then each split component intersects the level~$\ell(u)$ at a point different from $u$ or intersects the level~$\ell(v)$ at a point different from $v$. Assume that there are at least $4t$ split components that intersect~$\ell(v)$ at a point different from $v$, as the other case is symmetric. The rest of the proof is the same as in \Cref{lem:block-above-below}. For each split component $C_i$ that intersect~$\ell(v)$ at a point different from $v$, let $\pi_i$ be a curve that is part of the drawing of $C_i$ in $\Gamma$, that connects $v$ with a point $p_i$ different from $v$ on~$\ell(v)$, and that does not contain any point on~$\ell(v)$ in its interior. Assume that at least $2t$ of the points $p_i$ are to the right of $v$, as the other case is symmetric. Let~$C_1,\dots,C_{2t}$ be $2t$ split components such that the points $p_1,\dots,p_{2t}$ are to the right of $v$, where $p_1,\dots,p_{2t}$ are in this left-to-right order along $\ell(v)$. For~$i=1,\dots,2t-1$, since~$p_{i+1}$ lies right of~$p_i$ and the curves~$\pi_1,\dots,\pi_{2t}$ are pairwise non-crossing, it follows that~$\pi_{i+1}$ intersects a level that is strictly above or strictly below all levels that are intersected by~$\pi_1,\dots,\pi_i$.  Therefore~one of the split components contains contain a vertex that is either $t$ levels below or $t$ levels above~$v$, i.e., its height is at least~$t$.
\end{proof}

We are now ready to show the two reduction rules that allow us to get a kernel from the given instance $(G,s)$ with a $b$-modulator $M$ of size $k$.

\begin{reductionrule} \label{rule:bm-deg1}
For every vertex $v \in M$, consider every maximal set~$\mathcal C_{v}$ of equivalent components~$C$ of $G-M$ such that $\att(C) = \{v\}$.  If $|\mathcal C_v| > (4s+4)b$, then remove all but $(4s+4)b$ of these components from $G$.
\end{reductionrule}

\begin{reductionrule} \label{rule:bm-deg2}
    For every pair of vertices $\{u,v\}\in M$, consider every maximal set~$\mathcal C_{uv}$ of equivalent components~$C$ of $G-M$ with~$\att(C) = \{u,v\}$.  If~$|\mathcal C_{uv}| > (8s+8)(b+1)$, then remove all but $(8s+8)(b+1)$ such components from $G$.
\end{reductionrule}

\begin{lemma}
  \label{lem:bm-rules-safe}
The instance obtained by applying \Cref{rule:bm-deg1,rule:bm-deg2} is equivalent to $(G,s)$.
\end{lemma}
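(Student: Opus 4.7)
My plan is to prove the two directions separately; the forward direction is immediate because $G'$ is a subgraph of $G$ (both \Cref{rule:bm-deg1} and \Cref{rule:bm-deg2} only delete components of $G-M$ together with their incident edges), so the restriction of any weakly leveled planar drawing of $G$ with span at most $s$ is a valid drawing of $G'$ with span at most $s$. The main work lies in the converse: given a weakly leveled planar drawing $\Gamma'$ of $G'$ with span at most $s$, I need to reinsert the deleted components without increasing the span or creating crossings. The overall strategy mirrors the proof of \Cref{lem:vc-rules-safe}, except that the elementary pigeonhole used there is replaced by \Cref{lem:block-above-below} (for \Cref{rule:bm-deg1}) and \Cref{lem:split-component-between} (for \Cref{rule:bm-deg2}); the thresholds $(4s+4)b$ and $(8s+8)(b+1)$ in the rules are chosen precisely to satisfy the hypotheses of those lemmas.

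For \Cref{rule:bm-deg1}, fix $v \in M$ and a surviving equivalence class $\mathcal C_v$ of $(4s+4)b$ equivalent components, each $\bridge(C_i)$ of which contributes at least one block through $v$ in $G'$ (since $v$ has a neighbor in $C_i$). I will invoke \Cref{lem:block-above-below} with $t = (s+1)b$, so the hypothesis $4t = (4s+4)b$ holds. Every block within any $\bridge(C_i)$ has at most $|C_i|+1 \leq b+1$ vertices and hence height at most $b < t$, which rules out the tall-block alternative and forces some block $B \subseteq \bridge(C_{i_0})$ to lie entirely above or below $v$. Using $C_{i_0}$ as a template, I will draw copies of $\bridge(C_{i_0})$ adjacent to the existing one inside the face incident to $v$ containing $B$, assigning each new vertex the level of its counterpart; since every removed component is equivalent to $C_{i_0}$, this reinserts all of them while preserving planarity (the copies share only $v$ with the rest of $\Gamma'$) and span (no new levels are needed). \Cref{rule:bm-deg2} is handled analogously: I apply \Cref{lem:split-component-between} with $t = (s+1)(b+1)$ to the $8t = (8s+8)(b+1)$ split components of $\{u,v\}$ in $\mathcal C_{uv}$; since each $\bridge(C_i)$ has at most $|C_i|+2 \leq b+2$ vertices and hence height at most $b+1 < t$, some $\bridge(C_{i_0})$ lies strictly between $\ell(u)$ and $\ell(v)$, and duplicates can then be inserted in that strip adjacent to the existing copy, sharing only $u$ and $v$ with the rest of $\Gamma'$.

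The main obstacle I foresee is the careful geometric justification that the local insertion procedure does not introduce crossings with edges outside $\bridge(C_{i_0})$. The key point I will rely on is that the ``private region'' available around the chosen $\bridge(C_{i_0})$---a face incident to $v$ in the deg-$1$ case or a horizontal strip between $\ell(u)$ and $\ell(v)$ bounded by $\bridge(C_{i_0})$ in the deg-$2$ case---can be repeatedly subdivided to host arbitrarily many pairwise-disjoint isomorphic copies, all of which share only $v$ (respectively $\{u,v\}$) with the rest of the drawing and all of which reuse levels already present in $\Gamma'$.
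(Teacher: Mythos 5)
Your proposal follows essentially the same route as the paper: the forward direction by subgraph restriction, and the converse by applying \Cref{lem:block-above-below} with $t=(s+1)b$ and \Cref{lem:split-component-between} with $t=(s+1)(b+1)$ to locate a surviving equivalent component drawn entirely above/below $v$ (resp.\ strictly between $\ell(u)$ and $\ell(v)$), next to which the deleted copies are reinserted. Two remarks. First, a slip: a connected graph on at most $b+1$ vertices in a drawing of span $s$ can have height up to $sb$, not $b$ (edges may cross levels with no vertex on them), so ``at most $b+1$ vertices and hence height at most $b$'' is false for $s>1$; the tall-block alternative should instead be excluded by the contrapositive the paper uses, namely that height at least $(s+1)b$ would force an edge of span at least $s+1$ along a spanning path with at most $b$ edges. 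Your conclusion still stands since $sb<(s+1)b=t$, but the stated bound is wrong. Second, the ``private region'' step you flag as an obstacle is precisely where the paper does nontrivial work: the new copy of $C'$ need not fit $y$-monotonically into the face immediately to the left of the leftmost vertex of $C'$ on its highest level, and the paper resolves this by a global planar redraw that keeps every vertex on its level and every edge weakly $y$-monotone (in the spirit of \Cref{obs:gioproperty}, using that the clockwise outer boundary of $C'$ from that vertex down to $v$ stays strictly between their levels); your sketch asserts the existence of room for the copies rather than establishing it, so this step would still need to be filled in.
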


\begin{proof}
Let $G'$ be the instance obtained by applying \Cref{rule:bm-deg1,rule:bm-deg2} to $G$. Clearly, if $(G,s)$ is a positive instance of {\sc $s$-Span Weakly leveled planarity}, then $(G',s)$ is a positive instance too, as $G'$ is a subgraph of $G$. In order to show that, if $(G',s)$ is a positive instance, then $(G,s)$ is a positive instance too, we need to prove that the components removed by applying \Cref{rule:bm-deg1,rule:bm-deg2} can be inserted in a weakly leveled planar drawing $\Gamma'$ of $G'$ with $\spn(\Gamma') \le s$ so that in the resulting drawing $\Gamma$ the span remains bounded by $s$. We reinsert one component $C$ at a time; we always call $\Gamma'$ the drawing into which we are inserting $C$, although several components previously removed by \Cref{rule:bm-deg1,rule:bm-deg2} might have been reinserted already. We distinguish two cases based on whether~$|\att(C)| = 1$ or~$|\att(C)|=2$. 

%Let~$C$ be a component of $G-M$ that is removed by \Cref{rule:bm-deg1} or \Cref{rule:bm-deg2}.  It suffices to show that $G$ admits a leveled planar drawing with span~$s$ if and only if~$G' = G-C$ admits a leveled planar drawing with span~$s$. Clearly, if~$G$ admits a level-planar drawing~$\Gamma$ with $\spn(\Gamma) \le s$, then $\Gamma-C$ is a drawing of~$G'$ with span at most~$s$.  For the converse, assume that~$\Gamma'$ is a level-planar drawing of $G'$ with $\spn(\Gamma) \le s$.  

We start with the case~$|\att(C)|=1$.  Let~$\att(C) = \{v\}$.  As $C$ was removed by \Cref{rule:bm-deg1}, we have that $G'-M$ contains $4(s+1)b$ components equivalent to~$C$. If one of them, say~$C'$, is such that $\bridge(C')$ were drawn with height at least $(s+1)b$, then $\bridge(C')$ would contain an edge whose span is at least~$s+1$, given that $\bridge(C')$ is connected and has at most $b+1$ vertices. However, this would contradict $\spn(\Gamma') \le s$. It hence follows from \Cref{lem:block-above-below}, applied with $t=(s+1)b$, that~$G'-M$ contains a component~$C'$ equivalent to~$C$ that is entirely drawn above (or below)~$v$ in~$\Gamma'$; assume the former, as the latter case is symmetric. 

\begin{figure}
    \centering
    \begin{subfigure}[t]{0.48\textwidth}
    \centering
        \includegraphics[page=1]{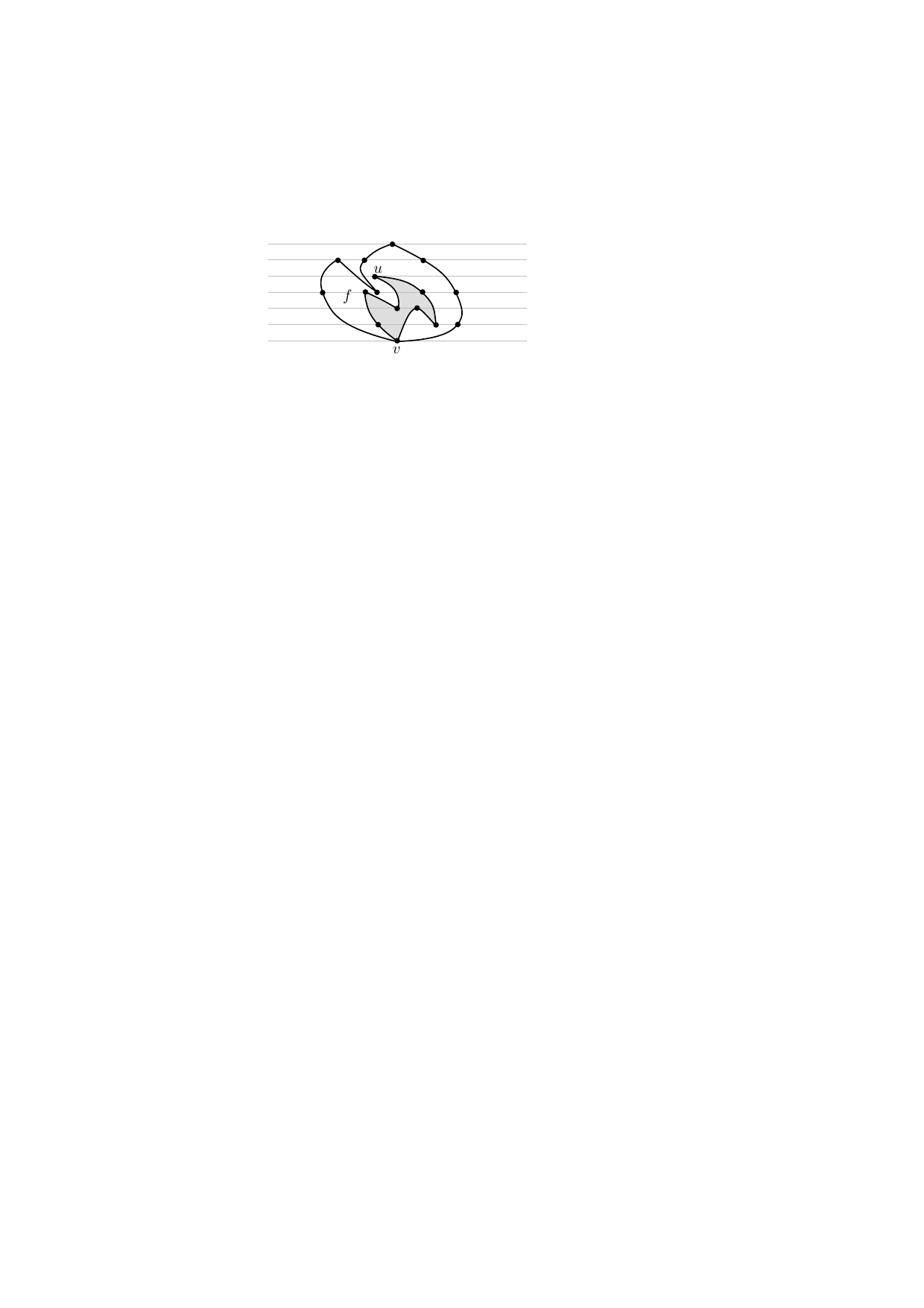}
        \subcaption{}
        \label{fig:ModularoModification-a}
    \end{subfigure}
    \begin{subfigure}[t]{0.48\textwidth}
    \centering
        \includegraphics[page=2]{Figures/ModulatorModification.pdf}
        \subcaption{}
        \label{fig:ModularoModification-b}
    \end{subfigure}
    \caption{(a) The component $C'$ (shaded gray) in $\Gamma'$. Other than $C'$, only the boundary of the face $f$ is shown. (b) Drawing a new copy of $C'$ next to $C'$.}
    \label{fig:ModularoModification}
\end{figure}

Consider the leftmost vertex $u$ of $C'$ on the highest level intersected by $C'$ and let $f$ be the face of $\Gamma'$ that is immediately to the left of $u$; see~\Cref{fig:ModularoModification-a}. We insert another copy of $C'$ in $f$, with the two copies identified at $v$. It might be the case that new copy of $C'$ cannot be drawn monotonically inside $f$ in~$\Gamma'$, however, since all the vertices in the path that is encountered when walking in clockwise direction along the outer face of $C'$ from $u$ to $v$ lie on levels strictly between those of $u$ and $v$, the graph, together with the new copy of $C'$, can be planarly redrawn so that each vertex remains on the same level and each edge is $y$-monotone; see~\Cref{fig:ModularoModification-b}. Since~$C$ is equivalent to~$C'$, we can interpret the new copy of $C'$ as a drawing of~$C$. This completes the reinsertion of~$C$ in $\Gamma'$.

%insertion of the new copy of $C'$ maintains the graph leveled planar. Then the 

%straight-line segment~\cite{DBLP:conf/gd/Biedl14,DBLP:journals/algorithmica/EadesFLN06,DBLP:journals/jgt/PachT04}, thus 

%By following the outer face of~$C'$ in~$\Gamma'$, we can non-monotonely draw a placeholder edge from~$u$ to a new vertex~$v$ on the highest level that contains a vertex of~$C'$ without crossing~$\Gamma'$ and so that besides the endpoints~$u$ and~$v$, it only crosses levels that lie strictly between~$\ell(u)$ and~$\ell(v)$.  It is well-known that, by suitably changing the orders of vertices/crossings within the levels, the drawing can be modified so that the placeholder edge~$uv$ can be inserted in a monotone fashion~\cite{}.  To insert a drawing of~$C$, we can then take a copy of~$\Gamma'[C']$ and insert it close to the placeholder edge, which we remove afterwards.  Since~$C$ is equivalent to~$C'$, this results in a drawing~$\Gamma$ of~$G$ with~$\spn(G) \le s$.

The case~$|\att(C)|=2$ works similarly.  Let~$\att(C) = \{u,v\}$.  Without loss of generality,  assume~$\ell(u) \le \ell(v)$.  As~$C$ was removed by \Cref{rule:bm-deg2}, we have that $G'-M$ contains $8(s+1)(b+1)$ components equivalent to~$C$. If one of such components, say~$C'$, is such that $\bridge(C')$ were drawn with height at least $(s+1)(b+1)$ then, since $\bridge(C')$ is connected and has at most $b+2$ vertices, $\bridge(C')$ would contain an edge whose span is at least~$s+1$, contradicting $\spn(\Gamma') \le s$. It hence follows from \Cref{lem:split-component-between}, applied with $t=(s+1)(b+1)$, that~$G'-M$ contains a component~$C'$ equivalent to~$C$ that is drawn strictly between~$\ell(u)$ and~$\ell(v)$ in~$\Gamma'$. Again, we can insert a new copy of $C'$ immediately to the left of $C'$, where the two copies are identified at $u$ and $v$, and redraw the entire graph with straight-line segments. By interpreting the new copy of $C'$ as $C$, we obtain a drawing of $G'\cup C$ with span at most $s$.
%By following the left and the right contour of~$C'$ in~$\Gamma'$ from~$u$ to~$v$, we can non-monotonely draw a left and a right countour edge from~$u$ to~$v$ that, besides the endpoints~$u$ and~$v$ crosses only levels that lie strictly between~$\ell(u)$ and~$\ell(v)$.  As before, the orders of vertices and crossings within the levels can be modified so that the placeholder edges can be drawn in a monotone fashion while keeping the embedding of the drawing~\cite{}.  Hence, afterwards~$\Gamma'[C']$ is enclosed between the two edge monotonely drawn countour edges connecting~$u$ and~$v$.  We can thus simply duplicate~$\Gamma'[C']$ to insert a drawing of~$C$ since~$C$ and~$C'$ are equivalent.\todo{add some pictures?}
\end{proof}

%\todo[inline,color=yellow]{Fabrizio stopped here, I will add a picture to the above process}

We are now ready to prove the following.

 \begin{theorem} \label{thm:kernel-modulator-span}
  Let $(G,s)$ be an instance of the {\sc $s$-Span Weakly leveled planarity} problem and a $b$-modulator of size~$k$. Then there exists a kernelization that applied to $(G,s)$ constructs a kernel of size~$O(f(b)\cdot k \cdot b^2 \cdot s)$.
\end{theorem}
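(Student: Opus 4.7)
The plan is to mirror the argument of \cref{thm:kernel-cover-span}, but with the equivalence-class machinery of \cref{le:computable} replacing the simpler counting used for the vertex cover case. Concretely, I would apply \cref{rule:bm-deg1,rule:bm-deg2} exhaustively to $(G,s)$ to obtain a reduced instance $(G',s)$, invoke \cref{lem:bm-rules-safe} to conclude that the two instances are equivalent, and then bound the number of vertices of $G'$ by carefully splitting them into four groups: (a) the vertices of the modulator $M$; (b) vertices lying in components of $G'-M$ with at least three attachments in $M$; (c) vertices lying in components with exactly one attachment; and (d) vertices lying in components with exactly two attachments.

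The sizes of groups (a)--(d) are bounded as follows. Group (a) contributes at most $k$ vertices by assumption. For group (b), \cref{le:degree-3-vc} (applied with $X=M$) gives that at most $2k$ components have three or more attachments, and each such component has size at most $b$, for a total of at most $2bk$ vertices. For group (c), for every $v \in M$ \cref{le:computable} guarantees at most $f(b)$ equivalence classes of $1$-attachment components incident to $v$, and \cref{rule:bm-deg1} keeps at most $(4s+4)b$ components per class, each of size at most $b$; summing over the $k$ choices of $v$ yields $O(f(b)\cdot k\cdot b^2\cdot s)$ vertices. For group (d), \cref{le:degree-3-vc} provides at most $3k$ pairs $\{u,v\}\subseteq M$ that are the attachment set of some $2$-attachment component; for each such pair, \cref{le:computable} gives at most $f(b)$ equivalence classes and \cref{rule:bm-deg2} keeps at most $(8s+8)(b+1)$ components per class, each of size at most $b$. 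This contributes another $O(f(b)\cdot k\cdot b^2\cdot s)$ vertices, and the four bounds combine to the claimed $O(f(b)\cdot k\cdot b^2\cdot s)$.

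The main thing to verify is that the kernelization itself runs in polynomial time: we need to compute the modulator-induced components of $G-M$, determine their attachment sets, and then group them by the equivalence relation of \cref{le:computable}. Since each component has size at most $b$ and $b$ is a parameter, testing whether two components with the same attachments are isomorphic via an attachment-preserving isomorphism can be done by brute force in time depending only on $b$, so the overall running time is polynomial in $n$ (with a multiplicative factor depending on $b$ absorbed into the kernel construction). I expect no obstacle here beyond routine bookkeeping; the nontrivial content has already been packaged into \cref{le:computable} and \cref{lem:bm-rules-safe}, so the proof reduces to assembling the above counting arguments.
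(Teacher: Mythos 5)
Your proposal is correct and follows essentially the same route as the paper: apply \cref{rule:bm-deg1,rule:bm-deg2}, use \cref{lem:bm-rules-safe} for equivalence, and bound the kernel size by splitting vertices into the modulator, components with three or more attachments (via \cref{le:degree-3-vc}), and one- or two-attachment components counted through the $f(b)$ equivalence classes of \cref{le:computable}. The only cosmetic difference is the polynomial-time equivalence test: the paper reduces attachment-preserving isomorphism to plain planar-graph isomorphism by attaching distinguishing paths of lengths $b+1$ and $b+2$ to the attachments, whereas your brute-force check over all bijections in time depending only on $b$ is equally valid.
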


\begin{proof}
First, we prove that testing whether two components~$C,C'$ of~$G-M$ with~$\att(C)=\att(C')$ and $|\att(C)| \le 2$ are equivalent can be done in polynomial time. This is done as follows. Recall that each of $C$ and $C'$ has at most $b$ vertices. Assume that $|\att(C)| =2$, as the case $|\att(C)| =1$ is easier to discuss. Let $\att(C)=\att(C')=\{u,v\}$. Augment each of $C$ and $C'$ with a path of length $b+1$ attached to $u$ and with a path of length $b+2$ attached to~$v$. Let $C_{+}$ and $C'_{+}$ be the resulting graphs, respectively. Now any isomorphism between $C_{+}$ and $C'_{+}$ is an isomorphism between $C$ and $C'$ that maps to one another the two copies of $u$ and maps to one another the two copies of $v$. Thus, testing the equivalence of $C$ and $C'$ amounts to testing the isomorphism of $C_{+}$ and $C'_{+}$, which can be done in $O(b)$ time~\cite{DBLP:conf/stoc/HopcroftW74}.  

Since there are $O(n)$ components in~$G-M$ that have to be pairwise tested for equivalence, it follows that the kernelization of~\Cref{rule:bm-deg1,rule:bm-deg2} can be performed in polynomial time, resulting in an instance~$(G',s)$ which, by~\Cref{lem:bm-rules-safe}, is equivalent to~$(G,s)$. 

It remains to bound the size of~$G'$.  By \Cref{le:degree-3-vc}, there are at most~$2k$ components of~$G'-M$ with three or more attachments, which together contribute with~$O(k\cdot b)$ vertices to the size of $G'$. Further, the number of vertices in $M$ is in $O(k)$. We thus only need to bound the number of vertices in the components of~$G'-M$ that have one or two attachments.  
    
For a fixed vertex~$u \in M$, there are at most~$f(b)$ equivalence classes of components of $G'-M$ with sole attachment~$u$. Due to \Cref{rule:bm-deg1}, each such equivalence class contains~$O(b\cdot s)$ components. Since each component has at most $b$ vertices, the components whose attachment is a fixed vertex~$u \in M$ contribute with~$O(f(b)\cdot b^2 \cdot s)$ vertices to $G'$, and thus all components with a single attachment contribute with~$O(f(b)\cdot k\cdot b^2 \cdot s)$ vertices in total.

Similarly, for a fixed pair of vertices~$\{u,v\}\in M$, there are at most~$f(b)$ equivalence classes of components of $G'-M$ with attachments~$\{u,v\}$. Due to \Cref{rule:bm-deg2}, each such equivalence class contains~$O(b\cdot s)$ components. Since each component has at most $b$ vertices, the components whose attachments are a fixed pair of vertices~$\{u,v\}\in M$ contribute with~$O(f(b)\cdot b^2 \cdot s)$ vertices to $G'$, and thus all components with two attachments contribute with~$O(f(b)\cdot k\cdot b^2 \cdot s)$ vertices in total, as there are $O(k)$ such components, by \Cref{le:degree-3-vc}.
\end{proof}

%    Recall that there are at most~$3k$ pairs of vertices that occur as attachments of bridges with exactly two attachments.  For a fixed pair~$\{u,v\}$, there are~$f(b)$ equivalence classes of such bridges, and due to \Cref{rule:bm-deg2}, each equivalence class contains at most~$(4s+4)\cdot b$ components, which together contribute at most~$f(b) \cdot (4s+4) \cdot b^2$ vertices.  Hence all pairs together contribute at most~$3k \cdot f(b) \cdot (4s+4) \cdot b^2$ vertices.

By \Cref{lem:bm-span-bound}, we obtain the following.

% We can now sketch a proof of the following theorem.

 \begin{theorem}\label{cor:kernel-modulator-span}
  Let $(G,s)$ be an instance of {\sc $s$-Span Weakly leveled planarity} with a $b$-modulator of size~$k$. There exists a kernelization that applied to $(G,s)$ constructs a kernel of size~$O(f(b)\cdot k^2 \cdot b^4)$. Hence, the problem is FPT with respect to $k+b$.
\end{theorem}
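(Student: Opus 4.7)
The plan is to adapt the argument used in \Cref{cor:fpt-vc}, substituting the role of \Cref{le:drawing-vc} with \Cref{lem:bm-span-bound}. Set the threshold $s^{\star} := (5b+1)bk$, which is the value beyond which \Cref{lem:bm-span-bound} forces $(G,s)$ to be a positive instance.

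First, I would strengthen \Cref{rule:bm-deg1} and \Cref{rule:bm-deg2} by capping the number of components retained in each equivalence class: for \Cref{rule:bm-deg1}, at most $\min\{(4s+4)b,\,(4s^{\star}+4)b\}$; for \Cref{rule:bm-deg2}, at most $\min\{(8s+8)(b+1),\,(8s^{\star}+8)(b+1)\}$. With this modification, each equivalence class contributes $O(b \cdot s^{\star}) = O(b^{3} k)$ components regardless of the value of $s$, which in turn yields $O(b^{4} k)$ vertices per equivalence class.

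To establish that the reduced instance $(G',s)$ is equivalent to $(G,s)$, I would split on the value of $s$. If $s < s^{\star}$, the modified rules coincide with \Cref{rule:bm-deg1,rule:bm-deg2}, so the conclusion follows from \Cref{lem:bm-rules-safe}. If $s \geq s^{\star}$, \Cref{lem:bm-span-bound} ensures that $(G,s)$ is a positive instance; since $G'$ is a subgraph of $G$, restricting a weakly leveled planar drawing of $G$ of span at most $s$ to the vertices of $G'$ yields a weakly leveled planar drawing of $G'$ of span at most $s$, so $(G',s)$ is positive as well.

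For the size bound, the counting from the proof of \Cref{thm:kernel-modulator-span} carries over almost verbatim, replacing the per-class bound $O(b \cdot s)$ with $O(b \cdot s^{\star}) = O(b^{3} k)$. Combined with $O(f(b))$ equivalence classes per (singleton or pair) attachment set, $O(k)$ relevant attachment sets (via \Cref{le:degree-3-vc}), at most $b$ vertices per component, and the $O(kb)$ vertices contributed by components with three or more attachments together with the $O(k)$ vertices of the modulator itself, this yields the claimed $O(f(b) \cdot k^{2} \cdot b^{4})$ kernel size. The kernelization remains polynomial-time (equivalence of two components is testable in $O(b)$ time, as argued in \Cref{thm:kernel-modulator-span}), so FPT tractability with respect to $k+b$ follows. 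The main subtlety, rather than a serious obstacle, is the case analysis handling the large-$s$ regime, which reduces to the monotonicity of the problem under vertex deletion.
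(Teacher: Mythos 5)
Your proposal is correct and follows essentially the same route as the paper: cap the number of retained components per equivalence class at $\min\{(4s+4)b,(4(5b+1)bk+4)b\}$ (resp.\ the analogous bound for pairs), split on whether $s$ exceeds the threshold $(5b+1)bk$, invoke \Cref{lem:bm-rules-safe} in the small-$s$ case and \Cref{lem:bm-span-bound} in the large-$s$ case, and reuse the counting from \Cref{thm:kernel-modulator-span}. The only difference is cosmetic: you correctly route the small-$s$ case through \Cref{lem:bm-rules-safe}, whereas the paper's text cites \Cref{thm:kernel-cover-span} where it evidently means \Cref{thm:kernel-modulator-span}.
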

% \begin{proofsketch}
%     As mentioned above, the number of components of~$G-M$ with three or more attachments is at most~$2k$.  \Cref{rule:bm-deg1,rule:bm-deg2} bound the number of equivalent components with one and two attachments.  The fact that the number of equivalence classes is bounded by $f(b)$ then yields a kernel of size~$O(f(b)\cdot k \cdot b^2 \cdot s)$.  Note that testing whether two components are equivalent can be reduced to an ordinary planar graph isomorphism problem~\cite{DBLP:conf/stoc/HopcroftW74}, by connecting each attachment to a sufficiently long path, which forces the isomorphism to leave the attachments fixed.  Hence, the described reduction can be performed in polynomial time.

%     To drop the dependence on $s$ of the kernel size, we prove that every planar graph admits a leveled planar drawing whose height (and hence span) is at most $(5b+1)bk$.  To this end, we use a trim operation that removes all components of~$G-M$ with a single attachment and replaces multiple components of~$G-M$ with attachments~$\{u,v\}$ by a single edge~$(u,v)$. The size of the trimmed graph is at most~$(5b+1)k$, therefore there exists a leveled planar drawing of the trimmed graph whose span is bounded by the same function. Then the removed components, which have size at most~$b$, can be reinserted by introducing~$b$ new levels below each level, i.e., the span increases by a linear factor in~$b$. This yields the desired leveled planar drawing with span at most $(5b+1)bk$ and thus a kernel of the claimed size.
% \end{proofsketch}

\begin{proof}
The kernel $(G',s)$ is obtained as in \Cref{thm:kernel-modulator-span}, by applying \Cref{rule:bm-deg1,rule:bm-deg2}, however:
\begin{itemize}
\item for each vertex $u\in M$, the number of components with attachment $\{u\}$ that belong to each equivalence class and that are kept in the kernel is $\min\{(4s+4)b,(4((5b+1)bk)+4)b\}$; and
\item for each pair of vertices $\{u,v\}\in M$, the number of components with attachments $\{u,v\}$ that belong to each equivalence class and that are kept in the kernel is $\min\{(8s+8)(b+1),(8((5b+1)bk)+8)(b+1)\}$. 
\end{itemize}

Since the number of equivalence classes is at most $f(b)$, the number of vertices in each component is at most $b$, and the number of pairs of vertices $\{u,v\}\in M$ that are attachment to some component of $G-M$ is in $O(k)$ by \Cref{le:degree-3-vc}, it follows that the size of the kernel is $O(f(b)\cdot k^2 \cdot b^4)$. Since the kernel can be constructed in polynomial time, as proved in \Cref{thm:kernel-modulator-span}, it follows that {\sc $s$-Span Weakly leveled planarity} is FPT with respect to $k$.

It remains to prove that $(G',s)$ is equivalent to $(G,s)$. The proof distinguishes two cases. If $s\leq (5b+1)\cdot b \cdot k$, then the proof follows from \Cref{thm:kernel-cover-span}, as in this case $(G',s)$ is the same kernel as the one computed for that theorem. On the other hand, if $s> (5b+1)\cdot b \cdot k$, by \Cref{lem:bm-span-bound}, we have that $(G,s)$ is a positive instance, and hence $(G',s)$ is a positive instance, as well. 
\end{proof}

\subsection{Treedepth}\label{sse:treedepth}

We now move to treedepth. A \emph{treedepth decomposition} of a graph $G=(V,E)$ is a tree~$T$ on vertex set~$V$ with the property that every edge of $G$ connects a pair of vertices that have an ancestor-descendant relationship in $T$. The \emph{treedepth} of $G$ is the minimum depth (i.e., maximum number of vertices in any root-to-leaf path) of a treedepth decomposition $T$ of $G$.

Let $td$ be the treedepth of $G$ and $T$ be a treedepth decomposition of $G$ with depth $td$. Let~$r$ be the root of $T$. For a vertex~$u \in V$, we denote by~$T_u$ the subtree of $T$ rooted at~$u$, by~$V_u$ the vertex set of $T_u$, by $d(u)$ the depth of $u$ (where $d(u)=1$ if $u$ is a leaf and $d(u)=td$ if $u=r$), and by~$R(u)$ the set of vertices on the path from~$u$ to~$r$ (end-vertices included). 

As in~\Cref{le:drawing-vc,lem:bm-span-bound}, we can show that the instance $(G,s)$ is positive if $s$ is sufficiently large, namely larger than or equal to $((5td)^{td}+1)^{td}$. Hence, we can assume that $s$ is bounded with respect to $td$. This is proved in the following.

%To obtain fixed-parameter tractability with respect to treedepth, it remains to bound $s$ as a function of~$td$.

\begin{lemma} \label{le:drawing-treedepth}
Suppose that $s\geq ((5td)^{td}+1)^{td}$. Then $(G,s)$ is a positive instance of {\sc $s$-Span Weakly leveled planarity}.
\end{lemma}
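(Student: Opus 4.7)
The plan is to proceed by induction on the treedepth $td$. The base case $td=1$ is trivial: $G$ is a single vertex and span $0$ suffices. For the inductive step, I would take a depth-$td$ treedepth decomposition $T$ of $G$ rooted at $r$, and consider the connected components $G_1,\dots,G_p$ of $G-r$. Each $G_i$ equals $G[V_{c_i}]$ for some child $c_i$ of $r$ in $T$, and crucially, every edge of $G$ either has both endpoints in the same $V_{c_i}$ or has $r$ as one endpoint, because edges in $G$ respect the ancestor--descendant relation in $T$. Via $T_{c_i}$, each $G_i$ has treedepth at most $td-1$, so by the inductive hypothesis it admits a weakly leveled planar drawing $\Gamma_i$ of span at most $S(td-1):=((5(td-1))^{td-1}+1)^{td-1}$, which, up to inserting empty levels, also controls its height.

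I would then build a drawing of $G$ by viewing $\{r\}$ as a $1$-modulator and combining the $\Gamma_i$'s with $r$ in the spirit of \Cref{lem:bm-span-bound}. Since the components $G_i$ can be arbitrarily large in vertex count, \Cref{lem:bm-span-bound} does not directly apply; instead, I would treat each $\Gamma_i$ as a compact ``block'' whose height, rather than vertex count, plays the role of the size parameter $b$. Two components will be declared \emph{equivalent} when they admit isomorphic weakly leveled planar drawings respecting their attachment to $r$. Using \Cref{le:degree-3-vc} to bound the number of distinct attachment structures in a planar graph, together with a counting of equivalence classes in the spirit of \Cref{le:computable}, the ``trimmed'' collection contains only $O((5td)^{td})$ representative components. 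I would assemble the trimmed drawing by placing $r$ at a new top level, stacking representative blocks below, and reinserting the removed components after introducing $O(S(td-1))$ intermediate levels between consecutive levels of the trimmed structure, following the reinsertion arguments of \Cref{lem:bm-rules-safe}. This yields the recurrence $S(td)\le ((5td)^{td}+1)\cdot S(td-1)$, which unrolls to the claimed bound $((5td)^{td}+1)^{td}$.

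The main obstacle will be ensuring that edges from $r$ to \emph{deep} vertices of the component drawings $\Gamma_i$ can be routed planarly and $y$-monotonically. To address this, I would strengthen the inductive hypothesis so that in each $\Gamma_i$ the root $c_i$ of $T_{c_i}$ lies at the topmost level and every vertex of $V_{c_i}$ adjacent to $r$ in $G$ is accessible from above via a $y$-monotone curve of bounded span. Enforcing this visibility property may require adjusting each $\Gamma_i$ by inserting additional intermediate levels so that the relevant neighbors of $r$ become visible from above. Planarity of $G$ ensures that the number of such ``special'' vertices per component is controlled by \Cref{le:degree-3-vc}, keeping the span overhead within the claimed bound and completing the induction.
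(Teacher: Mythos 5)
Your overall shape (peel off a modulator, recurse, reinsert with extra intermediate levels, multiply the number of levels) resembles the paper's, but there is a genuine gap at the point you yourself flag as the ``main obstacle'': the edges from $r$ into the component drawings. Your inductive hypothesis only bounds the \emph{span} of each $\Gamma_i$, which says nothing about its \emph{height} nor about which levels the neighbors of $r$ occupy; an edge from $r$ to a vertex lying $h$ levels away has span $h$ no matter how it is routed, so no visibility or $y$-monotone routing argument can repair this, and inserting additional intermediate levels only makes spans larger. Relatedly, your equivalence-class counting does not go through: \cref{le:computable} counts isomorphism types of \emph{bounded-size} components, whereas the components of $G-r$ can be arbitrarily large, so there are unboundedly many classes and the claimed $O((5td)^{td})$ bound on representatives is unsupported. (Note also that every component of $G-r$ has the single attachment $r$, so the two-attachment machinery is not even in play at the top level of your recursion.)

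The paper closes exactly this gap by bounding component \emph{size} rather than drawing span. It processes the treedepth decomposition bottom-up and, at each vertex $v$, deletes every component of $G-R(v)$ hanging below $v$ that has one attachment and replaces each family of two or more components sharing the same two attachments by a single edge; by \cref{le:degree-3-vc} this caps the outdegree of every processed vertex at $5td$, so after all $td$ rounds the surviving graph is a single vertex and every removed component has at most $(5td)^{td}$ vertices. Reinsertion then proceeds top-down: each such small component is drawn on at most $(5td)^{td}+2$ levels via the $st$-numbering constructions of \cref{lem:1-att-drawing,lem:2-att-drawing} and placed into $(5td)^{td}$ fresh levels inserted below each existing level, which yields the height recurrence $h \mapsto h\left((5td)^{td}+1\right)$ and hence the stated bound. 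To rescue your root-removal induction you would need to strengthen the hypothesis to control the height (not just the span) of the component drawings and to pin the neighbors of all ancestors to prescribed levels; that is essentially what the paper's size bound delivers for free.
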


% \begin{proofsketch}
%     We apply a trimming operation similar to the one described in~\Cref{cor:kernel-modulator-span}. The effect of this operation, when applied to a vertex $v$, is to bound the number of children of $v$ in $T$ by~$5td$. Assume that every vertex in $V_v\setminus \{v\}$ has at most~$5td$ children.  Then~$R(v)$ is a vertex set of size at most~$td$ and, due to the degree bound, any connected component of $G-R(v)$ whose vertices are in~$V_v$ has size bounded by $(5td)^{td}$.  Similarly as for modulators, we can remove such components with a single attachment and replace multiple components with the same two attachments by an edge.  This bounds the degree of~$v$ in $T$ to $5td$ ($2td$ for components with three or more attachments and $3td$ for components with two attachments).

%     We apply this trimming operation in batches. The first batch processes all leaves (which has no effect); each next batch consists of all the vertices whose children are already processed. After at most~$td$ batches the whole tree is processed (and in fact reduced to a single vertex by processing the root). We then undo these steps while maintaining a leveled planar drawing.  The key point here is that, due to the degree bound, each of the removed components contains at most~$(5td)^{td}$ vertices, and we can simultaneously reinsert all removed components at the cost of inserting~$(5td)^{td}$ levels below each existing level in the current drawing.  Therefore, the number of levels multiplies by $(5td)^{td}+1$ per batch.
% \end{proofsketch}

\begin{proof}
We first define a trimmed graph from $G$ (and a corresponding treedepth decomposition from $T$). The trimming operation we define in order to get the trimmed graph is similar to the one defined in \Cref{subse:fpt-mbsc}. However, it is applied $td+1$ times, namely for increasing values of $i$ from $1$ to $td$, it is applied to all the vertices of $G$ that have depth $i$ in~$T$. The main feature of the trimming operation is that, for each processed vertex $v$ of $G$, the outdegree (that is, the number of children) of $v$ in $T$ is bounded by $5td$.

Vertices at depth~$1$ already satisfy this property initially. Let~$v$ be a non-processed vertex at depth~$j$ whose children $u_1,\dots,u_k$ have been already processed. We process $v$ as follows. By the defining property of a treedepth decomposition, each connected component of $G-R(v)$ is such that its vertex set is either part of $V_{u_i}$, for some child $u_i$ of $v$ with $i\in \{1,\dots,k\}$, or is disjoint from $V_v$. Let~$\mathcal C$ denote the set of connected components of~$G-R(v)$ whose vertex sets are contained in the sets $V_{u_1},\dots,V_{u_k}$. We remove all components in $\mathcal C$ with one attachment and, if there are two or more distinct components in~$\mathcal C$ with the same two attachments~$\{s,t\}$, we replace all such components with a single edge~$(s,t)$. By \Cref{le:degree-3-vc}, there are at most~$3td$ pairs $\{s,t\}$ such that $\mathcal C$ contains a component whose attachment is $\{s,t\}$.  Furthermore, again by \Cref{le:degree-3-vc}, there are at most~$2td$ components in $\mathcal C$ with three or more attachments in~$R(v)$. Henceforth, the number of components in $\mathcal C$ that survive the trimming operation is at most $5td$. This is also an upper bound on the size of the outdegree of $v$ in the corresponding treedepth decomposition. 

Let~$(G_0,T_0) := (G,T)$.  For~$j=1,\dots,td$, we define~$(G_j,T_j)$ to be the graph obtained from~$(G_{j-1},T_{j-1})$ by processing all vertices at depth~$j$ as described above.  We claim that in~$(G_j,T_j)$ all vertices at depth~$j$ or less have outdegree at most~$5td$ in $T_j$.  This clearly holds for~$(G_0,T_0)$, since leaves have outdegree~$0$. Furthermore, the described processing of vertices at depth $j$ ensures that the claim holds for $(G_j,T_j)$ given that it holds for $(G_{j-1},T_{j-1})$. 

Note that, by construction, the graph $G_{td}$ consists of a single vertex, given that $R(r)$ consists just of the root $r$ and no component of $G-R(r)$ with one attachment is kept in $G_{td}$ by the trimming operation. Therefore, $G_{td}$ has a leveled planar drawing with height at most~$1$ (in fact, such an height is $0$, but for technical reasons it is more convenient to use the weaker upper bound).  Similarly to \Cref{lem:bm-span-bound}, we show that, given a leveled planar drawing of~$G_i$ with height~$h$, we can construct a leveled planar drawing of~$G_{i-1}$ of height~$h((5td)^{td}+1)$. This then implies a drawing of~$G=G_0$ of height~$((5td)^{td}+1)^{td}$, as desired.

Let~$\Gamma_i$ be a leveled planar drawing of~$G_i$ with height~$h$.  To obtain a leveled planar drawing $\Gamma_{i-1}$ of~$G_{i-1}$, we insert $(5td)^{td}$ levels immediately below every level of~$\Gamma_i$ (thus, the number of levels of $\Gamma_{i-1}$ is $h((5td)^{td}+1)$). We use these levels to re-insert the components of $G_{i-1}$ that were removed to obtain~$G_i$.  This is done as follows.  Let~$C$ be one of such components and let~$v$ be the vertex whose processing (in~$T_{i-1}$) led to the removal of~$C$.  By assumption, all vertices in~$V_v \setminus \{v\}$ have outdegree at most~$5td$ in~$T_{i-1}$ and therefore the number of vertices in $C$ is at most $(5td)^{td}$.  If~$C$ has a single attachment~$x \in R(v)$, by \Cref{lem:1-att-drawing}, we obtain a leveled planar drawing $\Gamma_C$ of~$C$ together with its attachment on~$(5td)^{td}+1$ levels in which $x$ is the unique vertex on the highest level. Note that $x$ belongs to $\Gamma_i$. We can thus use the level of~$x$ and the $(5td)^{td}$ new levels immediately below it to merge $\Gamma_C$ into~$\Gamma_i$ (by merging the two copies of~$x$).  If~$C$ has two attachments~$x,y \in R(v)$, then~$G_i$ contains the edge~$(x,y)$.  By \Cref{lem:2-att-drawing}, we have that $C$ together with~$x,y$ have a leveled planar drawing $\Gamma_C$ on~$(5td)^td+2$ levels, where~$x$ and~$y$ are the unique vertices on the highest and the lowest level, respectively. Note that $x$ and $y$ belong to $\Gamma_i$.  As~$x$ and~$y$ are drawn on distinct levels in~$\Gamma_i$, there are $(5td)^td$ new intermediate levels between them, and we can merge the drawing of~$C \cup \{x,y\}$ into~$\Gamma_i$ next to~$(x,y)$ (by merging the two copies of~$x$ and the two copies of~$y$, respectively).  By reinserting all components in this way, we obtain a leveled planar drawing~$\Gamma_{i-1}$ of~$G_{i-1}$ with height~$h ((5td)^{td}+1)$.
\end{proof}

We now proceed to devise our kernelization for an instance $(G,s)$ of {\sc $s$-Span Weakly leveled planarity} parameterized by the treedepth of $G$. As a first step, the algorithm checks whether $s>((5td)^{td}+1)^{td}$. In the positive case, the kernel consists of any graph that admits a leveled planar drawing with span at most $s$ (say, the graph consists of a single vertex). Indeed, a leveled planar drawing of $G$ with height, and hence span, at most $((5td)^{td}+1)^{td}<s$ can be constructed as in \Cref{le:drawing-treedepth}, hence the instance is positive (and the kernel is a positive instance, too). In the following, we hence assume that $s\leq ((5td)^{td}+1)^{td}$. 

Recall that $f(b)$ denotes the number of equivalence classes of planar graphs with $b+1$ vertices, one of which is fixed,  or with $b+2$ vertices, two of which are fixed (see \Cref{subse:fpt-mbsc}). We also define a computable function~$g \colon \mathbb N \times \mathbb N \to \mathbb N$ via the following recurrence: 
\[g(i,s) = 
\begin{cases}
    1 & \text{if } i=1\\
    \left((12s+12) \cdot (g(i-1,s)+1)\cdot 3td \cdot f(g(i-1,s))  + 2td\right)\cdot g(i-1,s) + 1 & \text{if } i > 1
\end{cases}
\]
Our goal is to process $T$ in a bottom-up fashion, one vertex at a time, so that, at any point of the traversal, the graph $G$ and the tree $T$ have been modified so that the following invariants are satisfied:
\begin{itemize}
\item[(I1)] the current instance $(G,s)$ is equivalent to the original instance $(G,s)$;
\item[(I2)] $T$ is still a treedepth decomposition of $G$; and 
\item[(I3)] for each processed vertex~$v$, we have $|V_v|\leq g(d(v),s)$.  
\end{itemize}
The bottom line is that, when the entire tree has been processed, the above invariants guarantee that we have an instance equivalent to the original one of size at most~$g(td,s)$. Since $s$ is bounded by a function of $td$, this is indeed a kernel with respect to $td$.

For a leaf $v$ of $T$, we do not need to perform any action. This ensures that invariants (I1) and (I2) are trivially satisfied after all the leaves of $T$ have been processed. Note that the size of $T_v$ is $1$, which coincides with $g(d(v),s)$, given that $d(v)=1$ and by the definition of the function $g$. Hence, invariant (I3) is also satisfied. 

Consider now an unprocessed vertex~$v$ of $T$ with children~$u_1,\dots,u_k$ that have been already processed.  The key intuition that we are going to use is that $R(v)$ is a modulator of size at most~$td$ to components of size at most~$g(d(v)-1,s)$ for the subgraph $G'$ of $G$ induced by the vertices in $V_v\cup R(v)$. Hence, we can apply the machinery introduced in \Cref{subse:fpt-mbsc} in order to reduce the size of $G'$ (and, consequently, of $T_v$). We now justify and describe in detail this strategy.

First, by invariant (I3), we have that $T_{u_i}$ has size at most~$g(d(v)-1,s)$, for $i=1,\dots,k$. Also, the set~$R(v)$ has size at most~$td$, as the depth of $T$ is $td$. By the defining property of a treedepth decomposition, each connected component of $G-R(v)$ is such that its vertex set is either part of $V_{u_i}$, for some $i\in \{1,\dots,k\}$, or is disjoint from $V_v$. Let~$\mathcal C$ denote the set of connected components of~$G-R(v)$ whose vertex sets are contained in the sets $V_{u_1},\dots,V_{u_k}$.   By \Cref{le:degree-3-vc}, there are at most~$2td$ such components that have three or more attachments in~$R(v)$. We reduce the number of components in~$\mathcal C$ with one and two attachments in $R(v)$, by applying \Cref{rule:bm-deg1,rule:bm-deg2}. By \Cref{le:degree-3-vc}, there are at most~$3td$ pairs $\{x,y\}$ of vertices in~$R(v)$ such that $\mathcal C$ contains a component whose attachments are $x$ and $y$. By \Cref{le:computable}, there at most~$f(g(d(v)-1,s))$ equivalence classes for the components in~$\mathcal C$ that have at most two attachments in $R(v)$ and for which these attachments are fixed. Hence, the number of components in~$\mathcal C$ with one and two attachments in $R(v)$ that are kept in $G$ is at most~$(4s+4)\cdot g(d(v)-1,s)\cdot td \cdot f(g(d(v)-1,s))$ and $(8s+8) \cdot (g(d(v)-1,s)+1)\cdot 3td\cdot f(g(d(v)-1,s))$, respectively. Recall that each of such components has at most $g(d(v)-1,s)$ vertices, as its vertex set is part of $V_{u_i}$, for some $i\in \{1,\dots,k\}$.
\begin{itemize}
\item By \Cref{lem:bm-rules-safe}, the described reduction yields an equivalent instance\footnote{The correctness of the reduction actually does not directly descend from \Cref{lem:bm-rules-safe}, as in that case \Cref{rule:bm-deg1,rule:bm-deg2} apply to all the components obtained by removing the modulator from the graph. In our case, \Cref{rule:bm-deg1,rule:bm-deg2} only apply to the components in $\mathcal C$, which are just part of the components obtained by removing the modulator $R(v)$ from $G$. However, the proof of \Cref{lem:bm-rules-safe} is not affected by the presence of components that are ignored by \Cref{rule:bm-deg1,rule:bm-deg2}, hence the very same proof of correctness applies to the current reduction, as well.}, hence satisfying Invariant (I1). 
\item The new treedepth decomposition $T$ of the reduced graph $G$ is simply obtained from the previous one by removing all vertices belonging to components in $\mathcal C$ that have been removed from $G$, and by then connecting each vertex in $T_v$ whose parent has been removed from $T$ to the lowest ancestor (in the previous tree) which still belongs to $T$. Since a pair of vertices are in an ancestor-descendant relationship in the new tree if and only if they were in the previous tree, it follows that $T$ is still a treedepth decomposition of the reduced graph $G$, thus  satisfying Invariant (I2). 
\item Finally, after the reduction, the size of~$T_v$ is at most $2td\cdot g(d(v)-1,s)$ (for the components in $C$ with three attachments in $R(v)$) plus $(12s+12)\cdot (g(d(v)-1,s)+1)\cdot 3td\cdot f(g(d(v)-1,s)) \cdot g(d(v)-1,s)$  (for the components in $C$ with one or two attachments in $R(v)$) plus one (for $v$ itself). This coincides with $g(d(v),s)$, hence satisfying Invariant (I3). 
\end{itemize}

Altogether we obtain the following theorem.
% We thus obtain the following.

\begin{theorem} \label{thm:treedepth-FPT}
  Let $(G,s)$ be an instance of {\sc $s$-Span Weakly leveled planarity} with treedepth~$td$. There exists a kernelization that applied to $(G,s)$ constructs a kernel whose size is a computable function of~$td$. Hence, the problem is FPT with respect to the treedepth.
\end{theorem}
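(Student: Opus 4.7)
The plan is to combine \Cref{le:drawing-treedepth} with a bottom-up application of the kernelization of \Cref{subse:fpt-mbsc} along a treedepth decomposition $T$ of $G$ of depth $td$. First I would check whether $s > ((5td)^{td}+1)^{td}$; in that case \Cref{le:drawing-treedepth} guarantees that $(G,s)$ is a positive instance, so the kernelization can output any fixed positive instance (for example, a single vertex). From now on one may assume $s \leq ((5td)^{td}+1)^{td}$, so that $s$ is itself bounded by a computable function of $td$ and it suffices to bound $|V(G)|$ by a computable function of $td$ and $s$.

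The core of the argument processes the vertices of $T$ in post-order, maintaining three invariants at all times: (I1) the current instance is equivalent to the original one; (I2) the current $T$ remains a valid treedepth decomposition of the current graph; and (I3) for every already processed vertex $v$, $|V_v| \leq g(d(v),s)$ for the function $g$ recursively defined in the excerpt. Leaves trivially satisfy all three since $|V_v|=1=g(1,s)$. When processing an internal vertex $v$ with already processed children $u_1,\dots,u_k$, the key observation is that $R(v)$ is a modulator for the subgraph induced by $V_v \cup R(v)$: by the defining property of treedepth decompositions, every connected component $C$ of $G - R(v)$ whose vertex set lies in $V_v$ is entirely contained in some $V_{u_i}$, and thus by (I3) has at most $g(d(v)-1,s)$ vertices. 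Hence $R(v)$ is a $g(d(v)-1,s)$-modulator of size at most $td$ for the relevant portion of $G$.

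With this modulator in hand I would apply \Cref{rule:bm-deg1} and \Cref{rule:bm-deg2} to the components of $G-R(v)$ sitting inside $V_v$, leaving the rest of $G$ untouched. By \Cref{le:degree-3-vc}, at most $2td$ such components have three or more attachments in $R(v)$, and at most $3td$ pairs of vertices of $R(v)$ arise as the attachment set of a two-attachment component; by \Cref{le:computable} there are at most $f(g(d(v)-1,s))$ equivalence classes per attachment or pair of attachments; and after the rules fire, only $O(s \cdot g(d(v)-1,s))$ components per equivalence class survive, each of size at most $g(d(v)-1,s)$. A direct count matches exactly the recurrence for $g(d(v),s)$, which establishes (I3). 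Invariant (I1) follows from \Cref{lem:bm-rules-safe}, whose proof is local to each reinserted component and hence unaffected by the parts of $G$ that the rules ignore. Invariant (I2) follows by updating $T$ in the obvious way: delete the vertices of removed components, and for each pair $\{x,y\}$ whose duplicate components collapse to the single edge $(x,y)$, note that $x$ and $y$ already have an ancestor–descendant relationship via $R(v)$, so the edge is consistent with $T$.

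Once the root $r$ has been processed, (I3) applied to $r$ yields $|V(G)|=|V_r|\leq g(td,s)$, and since $s$ is bounded by a computable function of $td$, so is $g(td,s)$. The whole procedure runs in polynomial time because, as in \Cref{thm:kernel-modulator-span}, each application of the modulator-based rules is polynomial and equivalence of components can be tested in polynomial time. The main obstacle I anticipate is verifying the correctness of the rules when they are applied only to components inside $V_v$ rather than to all components outside a global modulator, and in particular ensuring that the edge $(x,y)$ introduced when collapsing equivalent two-attachment components remains compatible with the fixed planar embedding forced by the surviving portion of $G$ outside $V_v$; this should follow from \Cref{le:uv-incident-same-face} applied within the local subgraph and from the locality of the reinsertion argument in \Cref{lem:bm-rules-safe}.
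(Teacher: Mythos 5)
Your proposal follows essentially the same route as the paper: the large-$s$ shortcut via \Cref{le:drawing-treedepth}, then a bottom-up traversal of the treedepth decomposition maintaining invariants (I1)--(I3), using $R(v)$ as a $g(d(v)-1,s)$-modulator of size at most $td$ for the subgraph induced by $V_v\cup R(v)$ and applying \Cref{rule:bm-deg1,rule:bm-deg2} locally, with the component count matching the recurrence defining $g$. One small correction to your final paragraph: the kernelization rules only delete excess components and never replace a family of two-attachment components by an edge $(x,y)$ (that replacement occurs only in the trimming used to prove \Cref{le:drawing-treedepth}), so the embedding-compatibility concern does not arise; the only subtlety the paper flags is that \Cref{lem:bm-rules-safe} remains valid when the rules act solely on the components inside $V_v$, which holds because its reinsertion argument is unaffected by the components the rules ignore.
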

  
%  {\sc $s$-Span Weakly leveled planarity} is FPT with respect to the treedepth.

% \begin{proofsketch}
%     We perform a kernelization with respect to $td + s$. We then apply~\Cref{le:drawing-treedepth} to bound the span in terms of treedepth. We use a strategy similar to the one in~\Cref{le:drawing-treedepth}  for processing the vertices in $td$ batches.  However, instead of the trimming operation used there, we use~\Cref{rule:bm-deg1,rule:bm-deg2} to bound, for each vertex~$v$, the number of components of~$G-R(v)$ whose vertices are in~$V_v$ by a function $g(td,s)$.  Eventually, we obtain an equivalent instance in which the vertices have their degree bounded by~$g(td,s)$ in $T$. This is the desired kernel.
% \end{proofsketch}

\begin{proof}
Consider an instance $(G,s)$ of the {\sc $s$-Span Weakly leveled planarity} problem, where $G$ has treedepth $td$. We distinguish two cases. 

If $s\geq ((5td)^{td}+1)^{td}$, the algorithm directly concludes that $(G,s)$ is a positive instance. Indeed, a leveled planar drawing of $G$ with height, and hence span, at most $((5td)^{td}+1)^{td}\leq s$ can be constructed as in \Cref{le:drawing-treedepth}. 

By Invariants (I1)--(I3), the reduction described before the theorem constructs an instance equivalent to $(G,s)$ whose size is a function of $td$ and $s$ only. If $s< ((5td)^{td}+1)^{td}$, such a size is a function of $td$ only. Hence, we can search by brute force whether the given instance has a solution and find one, if it exists. 
\end{proof}

\section{Upper and Lower Bounds} \label{se:combinatorial}

% \begin{toappendix}
% In this section, we provide detailed proofs that were omitted from \cref{se:combinatorial} due to space restrictions.
% \end{toappendix}

In this section, we establish upper and lower bounds on the span of weakly leveled planar~drawings of certain graph classes. 

The $n$-vertex {\em nested-triangle graphs}, commonly used to prove lower bounds for planar grid drawings~\cite{dlt-pepg-84,DBLP:conf/gd/FratiP07}, are easily shown to require $\Omega(n)$ span in any weakly leveled planar drawing. In contrast, outerplanar graphs admit $1$-span weakly leveled planar drawings~\cite{DBLP:journals/jgaa/FelsnerLW03}. We extend the investigation by considering two graph classes that include the outerplanar graphs, namely, graphs with outerplanarity 2 and graphs with treewidth 2. \cref{thm:2-outerplanar} proves that graphs in the former class may require linear~span.

%On the contrary, {\em trees} are easily shown to admit $1$-span leveled planar drawings. This seems to indicate that restricting the {\em outerplanarity} of the considered graphs allows us to upper-bound the span. This intuition is further confirmed by a result by Felsner et al., who proved that {\em outerplanar graphs} admit $1$-span weakly leveled planar drawings~\cite{DBLP:journals/jgaa/FelsnerLW03}. However, the following theorem proves that graphs of outerplanarity $2$ might already require a linear span. 

%
\begin{figure}
    \centering
        \includegraphics{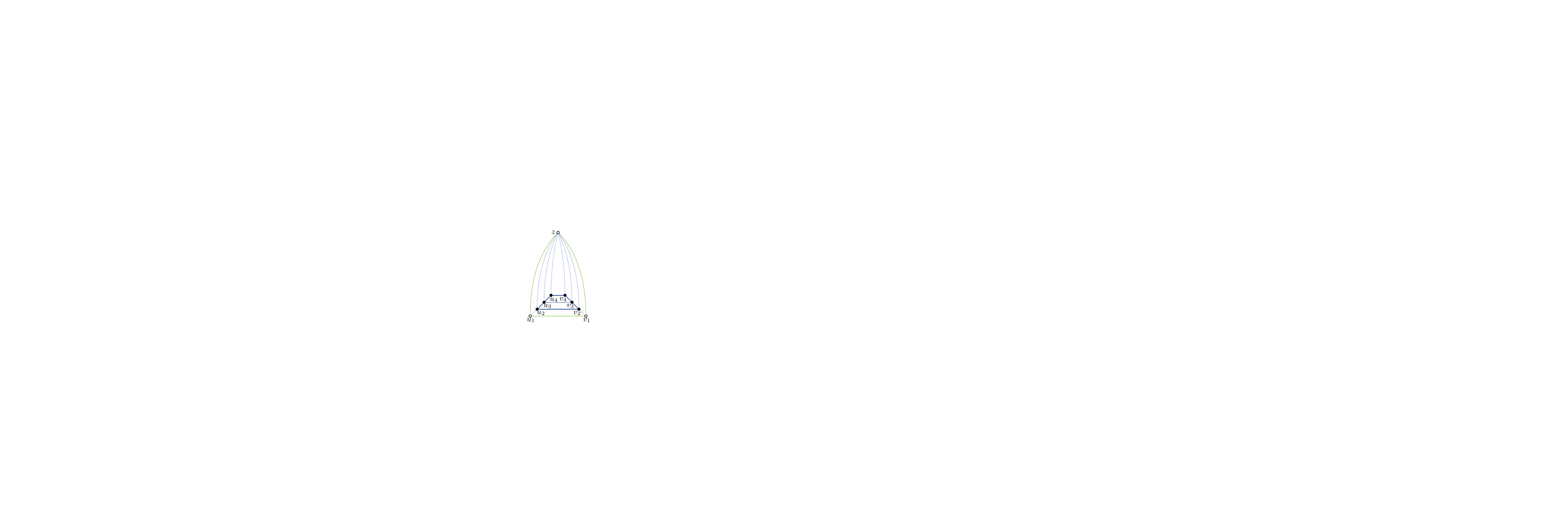}
    \caption{A $n$-vertex $2$-outerplanar graph requiring $\Omega(n)$ span in every weakly leveled planar drawing.}
    \label{fig:lower-2Outerplanar}
\end{figure}

\begin{theorem}\label{thm:2-outerplanar}
There exists an $n$-vertex $2$-outerplanar graph such that every weakly leveled planar drawing of it has span in $\Omega(n)$.
\end{theorem}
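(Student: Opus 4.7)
I propose to construct, for infinitely many $n$, an $n$-vertex $2$-outerplanar graph $G_n$ for which every weakly leveled planar drawing has span $\Omega(n)$.

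\textbf{Construction.} A natural candidate consists of an outer cycle $C$ of length $\Theta(n)$ together with an internal outerplanar subgraph $H$ attached to $C$ by carefully placed chord-like edges. The attachments are chosen so that, in every planar embedding of $G_n$, there is a chain of $\Omega(n)$ topologically nested closed curves $\mathcal{Z}_1 \supset \mathcal{Z}_2 \supset \cdots \supset \mathcal{Z}_k$, each composed of edges of $G_n$ and each strictly enclosed by the previous. Since $H$ itself contains no cycle nesting (once $C$ is removed, what remains is outerplanar by design), $G_n$ is $2$-outerplanar. The required rigidity can be achieved, for instance, by letting $H$ be a path $P = (u_1, \dots, u_k)$ and attaching each $u_i$ to a suitable pair of vertices of $C$ so that the $u_i$'s are forced to appear in a ``radial'' order inside $C$, with each attached pair of edges together with an arc of $C$ cutting off all subsequent $u_j$'s.

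\textbf{Proof strategy.} Fix any weakly leveled planar drawing $\Gamma$ of $G_n$. First, I would argue that the chord-like attachments of $H$ to $C$ pin down the planar embedding of $G_n$ up to reflection, ensuring that the chain $\mathcal{Z}_1 \supset \cdots \supset \mathcal{Z}_k$ is present in $\Gamma$. Second, I would exploit the $y$-monotonicity of the edges composing each $\mathcal{Z}_i$ to show that the vertical range spanned by $\mathcal{Z}_i$ in $\Gamma$ strictly contains the vertical range spanned by $\mathcal{Z}_{i+1}$; that is, $\mathcal{Z}_i$ reaches a strictly higher (or strictly lower) level than any point of $\mathcal{Z}_{i+1}$. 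Iterating this observation yields that $\mathcal{Z}_1$ spans $\Omega(k) = \Omega(n)$ levels, so some edge of $G_n$ has span $\Omega(n)$.

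\textbf{Main obstacle.} The delicate step is transferring topological nesting to vertical separation. In weakly leveled planar drawings, horizontal (zero-span) edges are permitted, so a closed curve enclosing another need not, a priori, reach strictly higher or lower $y$-coordinates. Overcoming this requires exploiting structural features of the $\mathcal{Z}_i$'s: specifically, one must show that each $\mathcal{Z}_i$ contains a pair of $y$-monotone edges whose endpoints on $C$ force $\mathcal{Z}_i$ to strictly dominate $\mathcal{Z}_{i+1}$ in the $y$-direction. This is where the precise choice of attachments of $H$ to $C$ in the construction matters, and where a careful case analysis on which endpoints of the defining edges of $\mathcal{Z}_i$ lie above versus below the enclosed curve will be needed.
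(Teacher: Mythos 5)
Your high-level idea is the same as the paper's: build a family of $\Omega(n)$ topologically nested closed curves and argue that, under weak $y$-monotonicity, each level of nesting forces the vertical extent to grow by at least one, so some edge gets span $\Omega(n)$. However, as written the proposal has two genuine gaps. First, the construction is not actually specified, and the shape you gesture at is problematic: if each $\mathcal{Z}_i$ consists of the two chords at $u_i$ together with an arc of the single outer cycle $C$, then consecutive curves $\mathcal{Z}_i$ and $\mathcal{Z}_{i+1}$ share a long sub-arc of $C$. For nested curves with substantial shared boundary, the claim ``the vertical range of $\mathcal{Z}_i$ strictly contains that of $\mathcal{Z}_{i+1}$'' is simply false in general: the shared arc can already realize the full vertical extent of both curves, so no per-step growth is forced, and it is then unclear that the drawing cannot be completed with bounded span. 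Second, the one step that makes the whole argument work --- converting strict topological enclosure into a gain of at least one level --- is exactly the step you defer to ``a careful case analysis''; it is the crux, not a detail, and you give no construction-specific reason why your $\mathcal{Z}_i$'s would contain the required pair of $y$-monotone edges. The claim that the embedding is ``pinned down up to reflection'' is likewise asserted without justification.

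The paper's construction is chosen precisely to avoid the shared-boundary issue: it takes $k=\lfloor (n-1)/2\rfloor$ triangles $\mathcal{C}_i=(u_i,v_i,z)$ all sharing a single apex $z$, joined by the paths $(u_1,\dots,u_k)$ and $(v_1,\dots,v_k)$. In any embedding (after restricting to a nested half of the triangles, which handles the outer-face freedom), $\mathcal{C}_k$ encloses the longest edge $e$ of $\mathcal{C}_{k-1}$, meeting it in at most the one vertex $z$; since the extremal endpoint of $e$ other than (possibly) $z$ lies strictly inside $\mathcal{C}_k$, weak $y$-monotonicity forces $\mathcal{C}_k$ to touch one more level than $e$, and the longest edge of a triangle realizes its full vertical extent. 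That single-shared-vertex property is what makes the ``$+1$ level per nesting'' induction go through in one line. To repair your proposal you would need either to make your nested curves pairwise (almost) disjoint in this sense, or to prove a substantially stronger separation lemma for nested curves with shared arcs --- and the latter is not true without additional hypotheses.
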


\begin{proof}
Suppose first that $n$ is odd. The lower bound is provided by the graph $G_k$ composed of $k:=\lfloor (n-1)/2\rfloor$ ``$1$-fused stacked cycles'' (see~\cref{fig:lower-2Outerplanar}): Start from $k$ cycles $\mathcal C_i:=(u_i,v_i,z_i)$, identify vertices $z_1,\dots,z_k$ into a unique vertex $z$, and insert the edges of the paths  $(u_1,\dots,u_k)$ and $(v_1,\dots,v_k)$. Biedl~\cite{DBLP:journals/dcg/Biedl11} introduced $G_k$ as an example of a $2$-outerplanar graph requiring $\Omega(n)$ width and height in any planar polyline grid drawing. We can use the same inductive proof as hers to show that one of the edges of $\mathcal C_k$ has a span $k\in \Omega(n)$ in any weakly leveled planar drawing $\Gamma_k$ of $G_k$ in which the outer face is delimited by $\mathcal C_k$. 

If $k=1$, the claim is trivial. If $k>1$, consider the subgraph $G_{k-1}$ of $G_k$ induced by $\mathcal C_1,\dots,\mathcal C_{k-1}$ and let $\Gamma_{k-1}$ be the restriction of $\Gamma_k$ to $G_k$. By induction, in $\Gamma_{k-1}$ one of the edges of $\mathcal C_{k-1}$, say $e$, has span at least $k-1$. Since $\mathcal C_{k}$ surrounds $e$ in its interior in $\Gamma_k$, except possibly for the vertex $z$ that might be shared by $e$ and $\mathcal C_{k}$, it follows that $\mathcal C_{k}$ spans at least one more level than $e$, hence the longest edge of $\mathcal C_{k}$ has span at least $k$ in $\Gamma_k$. Two observations conclude the proof. First, in any weakly leveled planar drawing of $G_k$, there is a subgraph with at least $\lceil k/2 \rceil\in \Omega(n)$ $1$-fused stacked cycles that is drawn so that these cycles are one nested into the other (this removes the assumption that the outer face has to be delimited by $\mathcal C_k$). Second, if $n$ is even, we can add to $G_k$ a vertex adjacent to $u_1$ and $v_1$; this maintains the graph $2$-outerplanar.
\end{proof}

There is a well-studied graph family, the Halin graphs, which have outerplanarity 2 and admit $1$-span weakly leveled planar drawings~\cite{DBLP:journals/algorithmica/BannisterDDEW19,digiacomo2023new}. This motivates the study of cycle-trees~\cite{DBLP:conf/isaac/LozzoDEJ17}, a superclass of Halin graphs still having outerplanarity 2. We first consider 3-connected cycle-trees and show a constant tight bound on the span. We then extend the study to general cycle-trees, for which we prove an asymptotically optimal logarithmic bound on the span. The approach for 3-connected cycle-trees relies on removing an edge from the external face so to obtain a graph for which we construct a suitable decomposition tree. We conclude by discussing the span of weakly leveled planar drawings of planar graphs with treewidth 2. We start by introducing some necessary definitions.

\subparagraph{Path-Trees.}
 A \emph{path-tree} is a plane graph $G$ that can be augmented to a cycle-tree~$G'$ by adding an edge $e=(\ell,r)$ in its outer face; see \Cref{fig:path-tree-a}. Without loss of generality, let $\ell$ occur right before $r$ in clockwise order along the outer face of $G'$; then $\ell$ and $r$ are the \emph{leftmost} and \emph{rightmost path-vertex} of $G$, respectively. The external (internal) vertices of $G'$ are \emph{path-vertices} (\emph{tree-vertices}). The tree-vertices induce a tree in $G$. We can select any tree-vertex $\rho$ incident to the unique internal face of $G'$ incident to $e$ as the \emph{root} of $G$. Then $G$ is \emph{almost-$3$-connected} if it becomes $3$-connected by adding the edges $(\rho,\ell)$, $(\rho, r)$, and $(\ell, r)$, if they are not already part of $G$. 
 
 %% FF: THIS SHOULD BE TRUE IN ANY PATH-TREE: If $G$ is almost-$3$-connected, the path-vertices induce a path~in~$G$.

\begin{figure}[t]
  \centering
  \begin{subfigure}[b]{.45\linewidth}
    \centering
    \includegraphics[page=1]{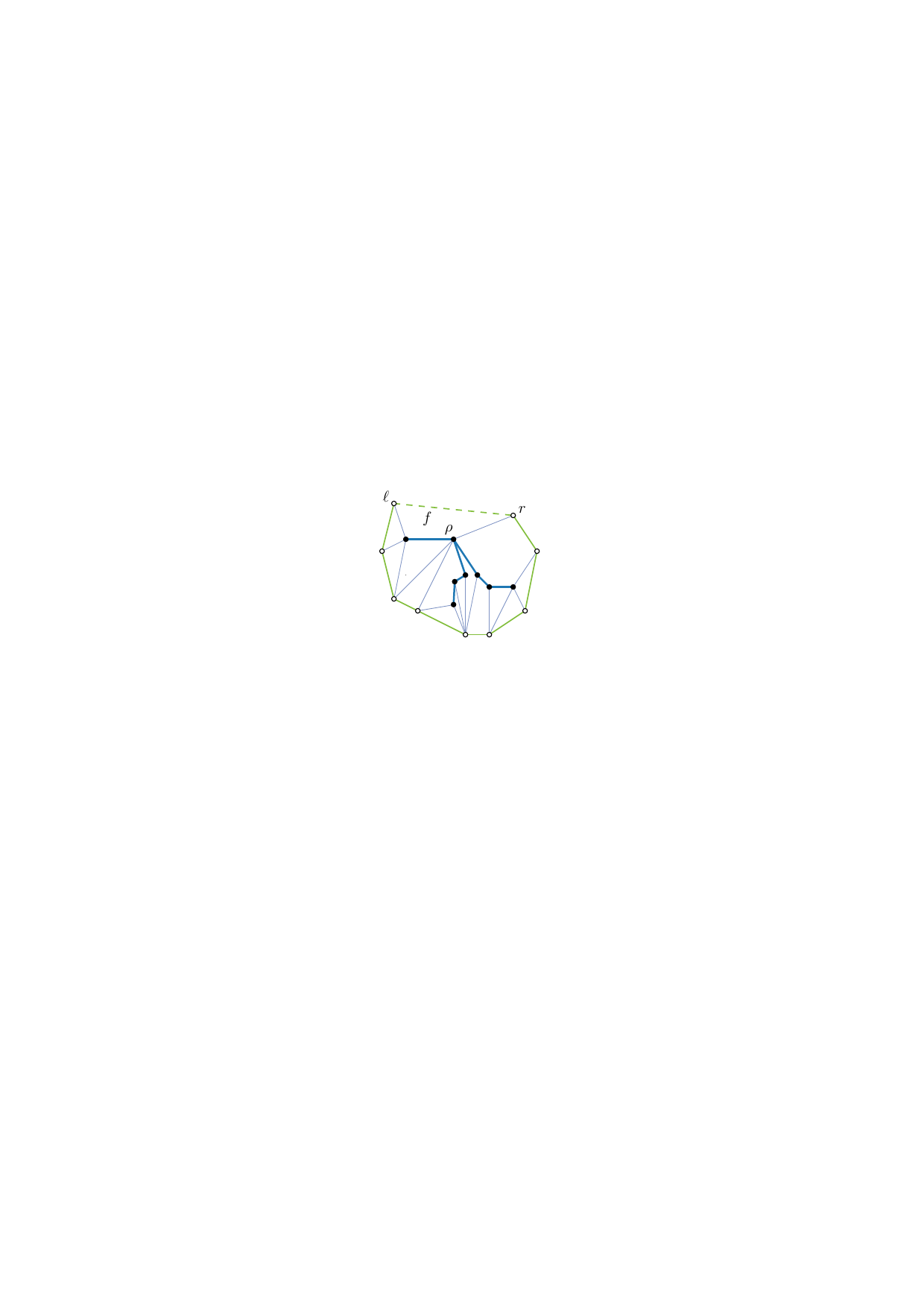}
    \subcaption{}
    \label{fig:path-tree-a}
  \end{subfigure}
  \hfill
  \begin{subfigure}[b]{.45\linewidth}
    \centering
    \includegraphics[page=2]{Figures/path-tree}
    \subcaption{}
    \label{fig:path-tree-b}
  \end{subfigure}
	\caption{(a) An almost-3-connected path-tree $G$, drawn with solid edges. The path-vertices are white and the tree-vertices are black. (b) The path-tree~$G$ with root $\rho$.}
	\label{fig:path-tree}
\end{figure}

\subparagraph{SPQ-decomposition of path-trees.} Let $G$ be an almost-$3$-connected path-tree with root~$\rho$, leftmost path-vertex $\lambda$, and rightmost path-vertex $r$. We present the \emph{SPQ-decomposition} of~$G$, introduced in~\cite{DBLP:conf/isaac/LozzoDEJ17}, which constructs a tree $\mathcal{T}$, called the \emph{SPQ-tree} of $G$. The nodes of $\mathcal T$ are of three types: {\em S-}, {\em P-}, and {\em Q-nodes}. Each node $\mu$ of $\mathcal{T}$ corresponds to a subgraph $G_\mu$ of $G$, called the \emph{pertinent graph} of~$\mu$, which  is an almost-$3$-connected rooted path-tree. We denote by $\rho_\mu$ the root of $G_{\mu}$ (a tree-vertex), by $\lambda_\mu$ the leftmost path-vertex of $G_\mu$, and by $r_\mu$ the rightmost path-vertex of $G_\mu$. %To handle the base case of the decomposition, 
To handle the base case,
we consider as a path-tree also a graph whose path is the single edge $(\lambda,r)$ and whose tree consists of a single vertex $\rho$, possibly adjacent to only one of $\lambda$ and $r$. Also, we consider as almost-$3$-connected a path-tree such that adding $(\rho, r)$, $(\rho, \lambda)$, and $(\lambda, r)$, if missing, yields a $3$-cycle. 

We now describe~the~decomposition. 

\begin{itemize}
\item{\sc Q-node}: the pertinent graph $G_\mu$ of a \emph{Q-node} $\mu$ is an almost-$3$-connected rooted path-tree which consists of $\rho_\mu$,~$\lambda_\mu$, and~$r_\mu$. The edge $(\lambda_\mu, r_\mu)$ belongs to
$G_\mu$, while the edges $(\rho_\mu,\lambda_\mu)$ and $(\rho_\mu, r_\mu)$ may not exist; see~\cref{fig:nodes}(left).

\item{\sc S-node}: the pertinent graph $G_\mu$ of an \emph{S-node} $\mu$ is an almost-$3$-connected rooted path-tree which consists of $\rho_\mu$ and of an almost-$3$-connected path-tree $G_\nu$, where $\rho_\mu$ is adjacent to $\rho_\nu$ and, possibly, to $\lambda_\nu$ and $r_\nu$. We have that $\mu$ has a unique child in $\mathcal T$, namely a node $\nu$ whose pertinent graph is $G_\nu$. Further, we have $\lambda_\mu=\lambda_\nu$ and $r_\mu=r_\nu$; see~\cref{fig:nodes}(middle).

\item{\sc P-node}: the pertinent graph $G_\mu$ of a \emph{P-node} $\mu$ is an almost-$3$-connected rooted path-tree which consists of almost-$3$-connected rooted path-trees $G_{\nu_1},\dots,G_{\nu_k}$, with $k>1$. This composition is defined as follows. First, we have $\rho_\mu=\rho_{\nu_1}=\dots=\rho_{\nu_k}$. Second, we have  $\lambda_{\nu_i}=r_{\nu_{i-1}}$, for $i=2,\dots,k$. Third, $\mu$ has children $\nu_1,\dots,\nu_k$ (in this left-to-right order) in $\mathcal T$, where $G_{\nu_i}$ is the pertinent graph of $\nu_i$, for $i=1,\dots,k$. Finally, we have $\lambda_{\mu}=\lambda_{\nu_1}$ and $r_{\mu}=r_{\nu_k}$; see~\cref{fig:nodes}(right). 
\end{itemize}

\begin{figure}[t]
	\centering
	%\subfigure[]{
		\includegraphics[page=2,scale=.9]{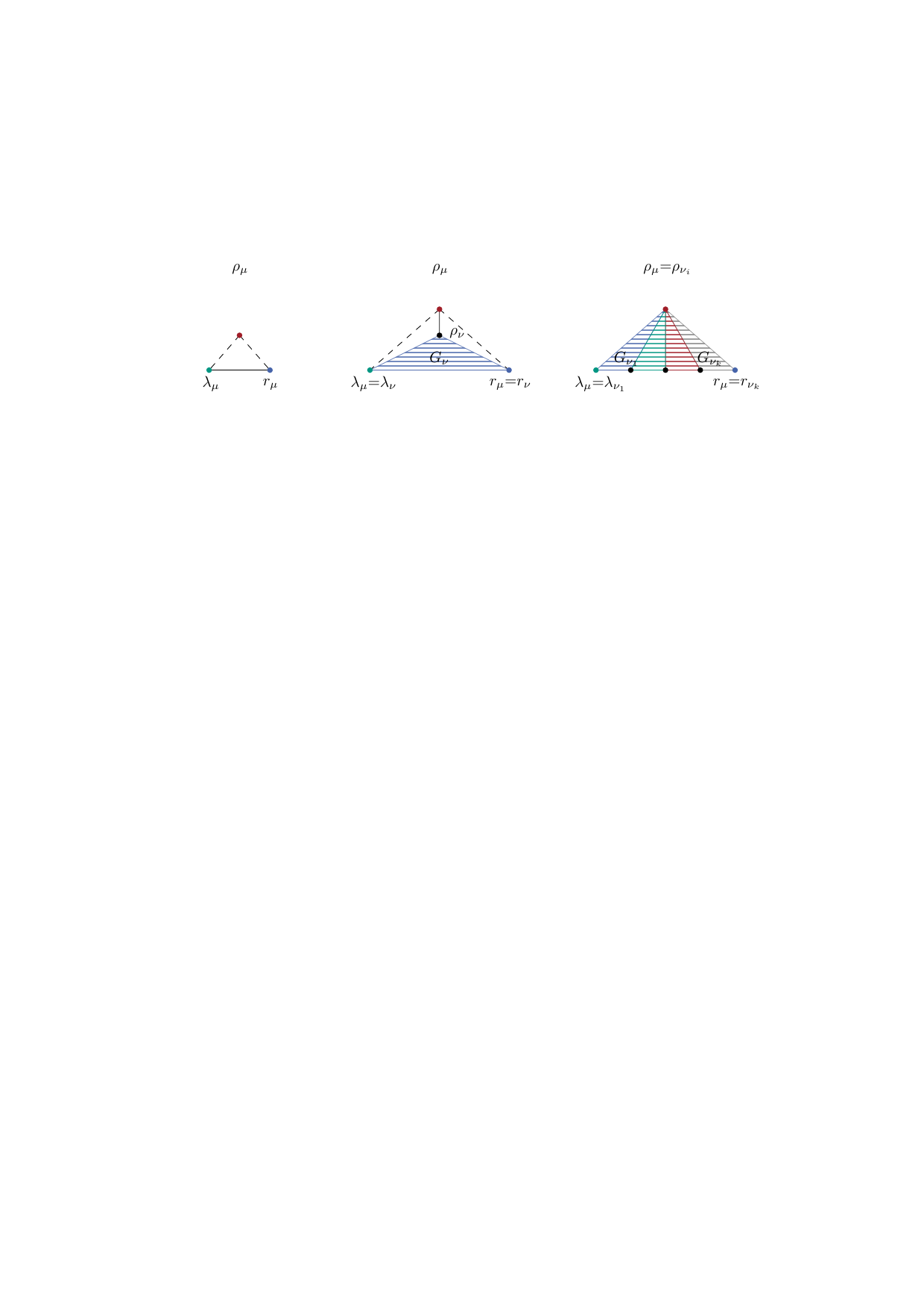}%\label{fig:nodes-a}
		%}
	\hfil
	%\subfigure[]{
		\includegraphics[page=3,scale=.9]{Figures/SPQ-path-tree}%\label{fig:nodes-b}
		%}
	\hfil
	%\subfigure[]{
		\includegraphics[page=4,scale=.9]{Figures/SPQ-path-tree}%\label{fig:nodes-c}
		%}
	\caption{Path-trees corresponding to a Q-node (left), an S-node (middle), and a P-node (right). Dashed edges may or may not belong to $G_{\mu}$. Shaded triangles represent smaller path-trees $G_{\nu_i}$.}
	\label{fig:nodes}
\end{figure}

In the following, all considered SPQ-trees are \emph{canonical}, that is, the child of every P-node is an S- or Q-node. For a given path-tree, a canonical SPQ-tree always exists~\cite{DBLP:journals/algorithmica/ChaplickLGLM24}.

\subsection{3-connected Cycle-Trees} 
%
%\todo{Define the concept of (rising/falling) $a,b$-flat drawing and of $a,b$-roof drawing. Add Figures.} 
%
% \begin{figure}
%     \centering
%     \begin{subfigure}{\textwidth}
%         \includegraphics[width = 0.9\textwidth, page=1]{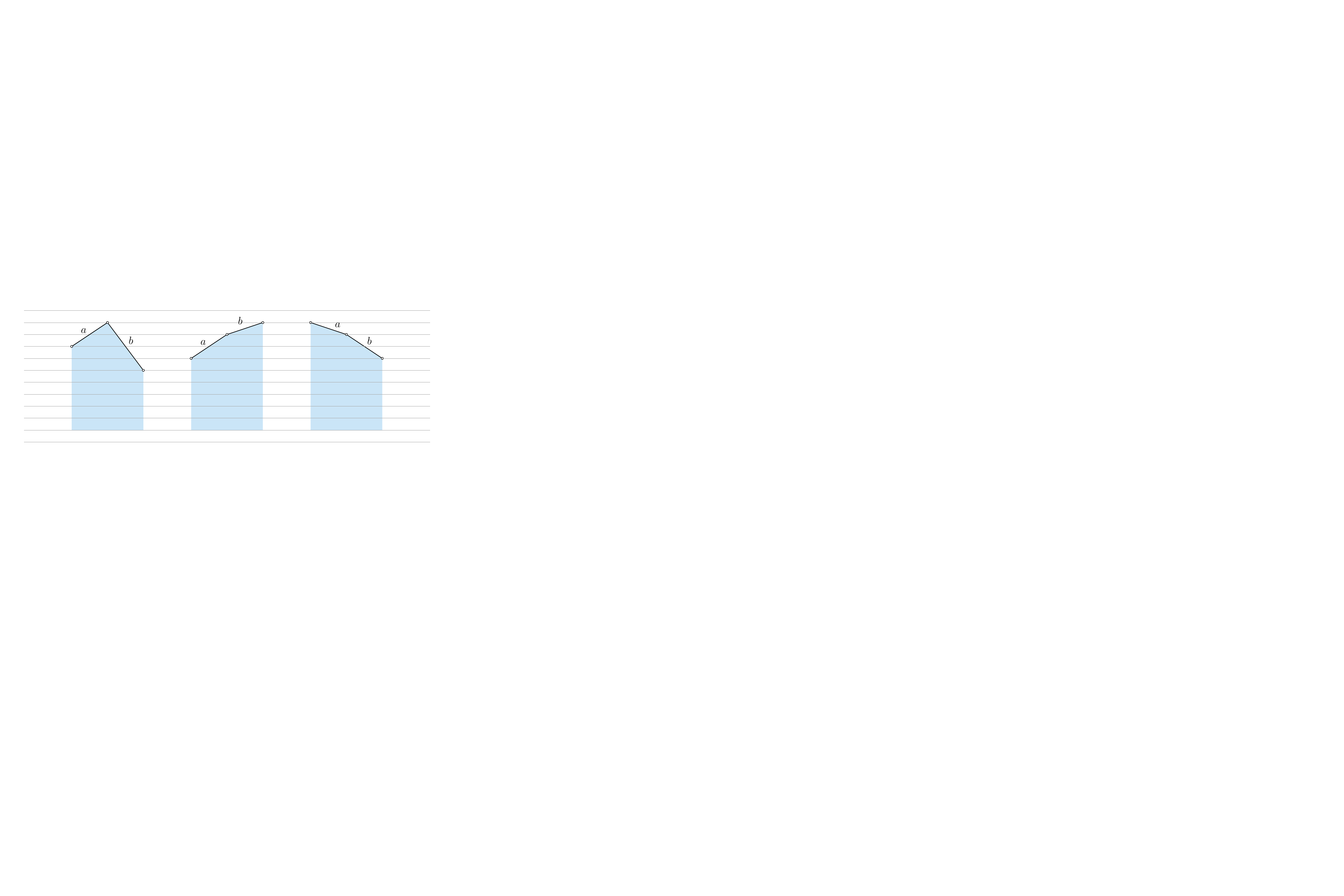}
%         \caption{}
%         \label{fig:npHardnessA}
%     \end{subfigure}
% \end{figure}
%
%In order to present our result for $3$-connected cycle-trees, we need to introduce some definitions.
%
%We need to introduce some definitions.
Let $G$ be a plane graph with three consecutive vertices $u,v,w$ encountered in this order when walking in clockwise direction along the boundary of the outer face of $G$.  A leveling of~$G$ is \emph{single-sink with respect to $(u,v,w)$} if all vertices of $G$ have a neighbor on a higher level, except for exactly one of $\{u,v,w\}$. A single-sink leveling $\ell$ with respect to $(u,v,w)$ is \emph{flat} if $\ell(u) < \ell(v) < \ell(w)$ or $\ell(w) < \ell(v) < \ell(u)$; $\ell$ is a \emph{roof} if $\ell(v) > \ell(u)$ and  $\ell(v) > \ell(w)$.  Note that a single-sink leveling is necessarily either roof or flat.

Given a single-sink leveling $\ell$ of $G$ with respect to $(u,v,w)$, 
%
%Let $\Gamma_0$ be any straight-line drawing of the path $(u,v,w)$ such that: (i) $y(u)=\ell(u)$, $y(v)=\ell(v)$, and $y(w)=\ell(w)$; (ii) if $\ell$ is flat, $\Gamma_0$ is $x$-monotone. 
a \emph{good} weakly leveled planar drawing $\Gamma$ of $(G,\ell)$ is one with the following properties: 
\begin{enumerate}
\item $\Gamma$ respects the planar embedding of $G$;
\item it holds that $x(u) < x(w)$ in  $\Gamma$; and
\item all vertices of $V(G) \setminus \{u,v,w\}$ are contained in the interior of the bounded region~$R_{uvw}$ defined by the path~$(u,v,w)$, by the vertical rays starting at~$u$ and~$w$, and by the horizontal line $y:=\min_{z \in V(G)}y(z)$.
\end{enumerate}

Let $a$ and $b$ be two non-zero integers. A good weakly leveled planar drawing $\Gamma$ of $(G,\ell)$ is an \emph{(a,b)-flat drawing} if $\ell$ is flat, $a = \ell(v) - \ell(u)$, and $b = \ell(w) - \ell(v)$; it is an \emph{(a,b)-roof drawing} if $\ell$ is roof, $a = \ell(v) - \ell(u)$, and  $b = \ell(w) - \ell(v)$. Note that, by definition, in an $(a,b)$-flat drawing we have that $a$ and $b$ are either both positive or both negative, while in an $(a,b)$-roof drawing $a$ is positive and $b$ is negative.
%
%\todo{it would be nice to revise/add a picture to explain good drawing and to revise figure 6 as we now allow non monotone $\Gamma_0$}
%

In order to prove an upper bound on the span of weakly leveled planar drawings of almost-$3$-connected path-trees, we will construct a drawing such that the non-horizontal edges connecting vertices on different levels are $y$-monotone polygonal chains whose consecutive segments meet~at~a~level. We are going to exploit the following.

\begin{lemma}\label{lem:geometric-realizability}
   Let $G$ be a plane graph and $u,v,w$ be three vertices that are consecutive in clockwise order along the outer face of $G$. Let~$\ell$ be a single-sink leveling of $G$ with respect to $(u,v,w)$ such that $(G,\ell)$ admits a weakly leveled planar drawing that respects the embedding of~$G$. Let $\Gamma_0$ be any crossing-free $y$-monotone polyline drawing of the path $(u,v,w)$ such that $y(u)=\ell(u)$, $y(v)=\ell(v)$, $y(w)=\ell(w)$, and $x(u) < x(w)$. 
   
   Then $(G,\ell)$ admits a polyline good weakly leveled planar drawing contained in $R_{uvw}$, in which the path $(u,v,w)$ is represented by $\Gamma_0$.
\end{lemma}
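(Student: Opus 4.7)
The plan is to reshape an existing weakly leveled planar drawing of $(G,\ell)$ via a level-preserving homeomorphism of the plane so that the path $(u,v,w)$ is realized exactly by $\Gamma_0$ and the rest of the graph is squeezed inside the region $R_{uvw}$.

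First, I would invoke the hypothesis to obtain a weakly leveled planar drawing $\Gamma^*$ of $(G,\ell)$ that respects the given planar embedding of $G$. Since $u$, $v$, $w$ appear consecutively along the outer face of the embedding, I would take the outer face of $\Gamma^*$ to be this same face, so that the path $(u,v,w)$ lies on its boundary; up to reflecting across a vertical line, I would also assume $x(u)<x(w)$ in $\Gamma^*$. Because $\ell$ is single-sink with respect to $(u,v,w)$, every vertex except the unique sink in $\{u,v,w\}$ has a strictly higher neighbor, and so the sink occupies the topmost level of $\Gamma^*$; combined with the fact that $u,v,w$ are consecutive on the outer boundary, this forces all vertices of $V(G)\setminus\{u,v,w\}$ to lie weakly below the image of the path $(u,v,w)$ and inside the bounded region delimited by this path together with the remainder of the outer boundary of $\Gamma^*$.

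Next, I would construct a homeomorphism $\phi\colon\mathbb{R}^2\to\mathbb{R}^2$ that preserves every horizontal line setwise, that sends the drawing of $(u,v)$ and $(v,w)$ in $\Gamma^*$ to the two polygonal chains of $\Gamma_0$, and that maps the rest of $\Gamma^*$ into $R_{uvw}$. I would build $\phi$ level by level: on each horizontal line, $\phi$ acts as a piecewise-linear bijection that carries the intersection of the line with the path in $\Gamma^*$ to the corresponding intersection with $\Gamma_0$, and that compresses the remaining intersections of $\Gamma^*$ with the line into the sub-segment determined by $R_{uvw}$ at that height. These level-wise bijections glue continuously because both $\Gamma^*$ and $\Gamma_0$ are $y$-monotone polylines realizing the same combinatorial path at the same $y$-coordinates. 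Since $\phi$ preserves $y$-coordinates, it sends weakly $y$-monotone curves to weakly $y$-monotone curves and horizontal segments to horizontal segments; after replacing each transformed edge by a polyline visiting the same sequence of levels at the same horizontal locations, I would obtain a polyline drawing.

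Finally, I would verify the three defining properties of a good weakly leveled planar drawing for the image of $\Gamma^*$ under $\phi$: the planar embedding is preserved because $\phi$ is orientation-preserving, the inequality $x(u)<x(w)$ holds by the choice of $\Gamma_0$, and every vertex of $V(G)\setminus\{u,v,w\}$ lies in the interior of $R_{uvw}$ by construction. The main obstacle is the explicit definition of $\phi$: one must simultaneously ensure bijectivity on each level, continuity across levels, and containment of the compressed drawing inside $R_{uvw}$. This is feasible because the portion of $\Gamma^*$ lying below the path $(u,v,w)$ at each level is a compact set that can be rescaled into the corresponding horizontal cross-section of $R_{uvw}$, and two $y$-monotone polylines realizing the same combinatorial path at matching heights are always ambient-isotopic via a level-preserving isotopy.
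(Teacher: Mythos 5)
Your proposal is correct and rests on the same underlying principle as the paper's proof: a weakly leveled planar drawing can be deformed at will as long as the level of every vertex and the left-to-right order within each level are preserved, so the given drawing can be reshaped level by level until the path $(u,v,w)$ coincides with $\Gamma_0$ and everything else is squeezed into $R_{uvw}$. The difference is in how this deformation principle is justified. The paper first subdivides every non-horizontal edge at each level it crosses, obtaining a drawing of a subdivision in which every edge has span $0$ or $1$, and then invokes \cref{obs:gioproperty}: since planarity of such a drawing is characterized purely by the within-level orders, any repositioning that keeps those orders is automatically planar. You instead build an explicit level-preserving ambient homeomorphism $\phi$ and must verify its bijectivity on each horizontal line, its continuity across lines, and the containment of the image in $R_{uvw}$ --- precisely the verification you identify as the main obstacle; the paper's subdivision trick reduces all of this to an already-available combinatorial characterization, which is why its write-up is shorter. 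Two small remarks on your version: the reflection step is both unnecessary and slightly off --- reflecting across a vertical line reverses the clockwise orders and hence produces the mirror embedding, so it is not a harmless normalization; but no normalization is needed, because $u$ and $w$ generally lie on different levels and a level-preserving deformation can set their $x$-coordinates independently. Likewise, the opening paragraph about the sink occupying the topmost level, while true, is not actually used: what places the remaining vertices on the correct side of the path (and hence inside $R_{uvw}$ after the deformation) is that $(u,v,w)$ lies on the outer face of the respected embedding, not the single-sink property.
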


\begin{proof}
    Let~$\Gamma$ be a weakly leveled planar drawing of $(G,\ell)$ that respects the given embedding of $G$.  Let~$\Gamma'$ be obtained by subdividing each non-horizontal edge of $\Gamma$ at each level it crosses and replacing the curves connecting consecutive vertices by straight-line segments. 
    Note that $\Gamma'$ is a weakly leveled planar drawing such that each edge has span either $0$ or $1$ of a graph~$(G',\ell')$, where $G'$ is a subdivision of $G$. 
    By \cref{obs:gioproperty}, it follows that moving the vertices of $G'$ within each level of $\ell'$ neither changes the embedding of $\Gamma'$ nor violates its planarity as long as we keep the order of the vertices within each level.  
    Hence, we can move the vertices of $G'$ in such a way that the drawing of the path~$(u,v,w)$ is~$\Gamma_0$ and all other vertices are in the interior of the region $R_{uvw}$. The obtained drawing corresponds to a good weakly leveled planar drawing of $G$ contained in $R_{uvw}$, in which the path $(u,v,w)$ is represented by $\Gamma_0$.
\end{proof}

\noindent We exploit \cref{lem:geometric-realizability} to prove the following lemma.

%    the graph remains an almost-$3$-connected path-tree, hence we assume that these edges are in $G$. Also, since $G$ can always be augmented with edges connecting a path-vertex and an internal vertex to an internally-triangulated almost-$3$-connected path-tree having the same outerface as $G$, we  assume that $G$ is internally triangulated.  That is, we prove the statement when $G$ is a maximal almost-$3$-connected path-tree.

\begin{lemma}\label{lem:triconnected-cycle-paths}
Let $G$ be an almost $3$-connected path-tree, let $\mathcal{T}$ be its SPQ-tree, let~$\mu$ be the root of $\mathcal{T}$, and let $u=\lambda_\mu$, $v=\rho_\mu$, and $w = r_\mu$. Then $G$ has a weakly leveled planar drawing with span at most $4$. Also, if the edges $(u,v)$ and $(v,w)$ are in $G$, then $G$ has a leveling~$\ell_\mu$ with $\spn(\ell_\mu) \leq 4$ such that $(G_\mu,\ell_\mu)$ admits a $(1,1)$-flat weakly leveled planar drawing. 
\end{lemma}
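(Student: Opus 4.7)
The plan is to prove the statement by structural induction on the SPQ-tree $\mathcal{T}$ of $G$. To make the induction go through, we actually prove a stronger claim for every node $\nu$ of $\mathcal{T}$: depending on which subset of the edges $(\lambda_\nu,\rho_\nu)$ and $(\rho_\nu,r_\nu)$ is present in $G_\nu$, the pertinent graph $G_\nu$ admits good weakly leveled planar drawings of several types $(a,b)$-flat and/or $(a,b)$-roof, where $|a|,|b|$ are small constants and in every case $\max|a|+\max|b|\le 4$. In particular, whenever both edges $(\lambda_\nu,\rho_\nu)$ and $(\rho_\nu,r_\nu)$ are present in $G_\nu$, the graph $G_\nu$ admits a $(1,1)$-flat drawing. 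Applying this claim to the root $\mu$ of $\mathcal{T}$ and, if needed, reflecting vertically in the case the required edges are absent, yields the statement.

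For the base case, $\nu$ is a Q-node, so $G_\nu$ contains only the edge $(\lambda_\nu,r_\nu)$ and possibly $(\rho_\nu,\lambda_\nu)$ and $(\rho_\nu,r_\nu)$. In each subcase we realize the required flat or roof drawing by hand; for instance, when both attachment edges of $\rho_\nu$ are present, placing $\lambda_\nu,\rho_\nu,r_\nu$ on three consecutive levels in increasing $x$-order gives a $(1,1)$-flat drawing in which $(\lambda_\nu,r_\nu)$ is realized as a $y$-monotone curve of span $2$.

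For an S-node $\mu$ with only child $\nu$, we apply the inductive hypothesis to $G_\nu$ to obtain a good drawing inside the region $R_{\lambda_\nu\rho_\nu r_\nu}$, then add $\rho_\mu$ above $\rho_\nu$ at a level chosen so that the resulting leveling of $\{\lambda_\mu,\rho_\mu,r_\mu\}$ matches the required target type. The new edges incident to $\rho_\mu$ (to $\rho_\nu$, and possibly to $\lambda_\mu=\lambda_\nu$ and $r_\mu=r_\nu$) are routed as $y$-monotone polygonal chains outside $R_{\lambda_\nu\rho_\nu r_\nu}$; the existence of a concrete planar embedding with the prescribed leveling is then guaranteed by \cref{lem:geometric-realizability}. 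A careful case analysis (on which edges from $\rho_\mu$ to its path-neighbors are present, and which inductive drawing type of $G_\nu$ is invoked) shows that every edge has span at most $4$ and the requested target type is achieved.

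For a P-node $\mu$ with children $\nu_1,\dots,\nu_k$, which by canonicality are all S- or Q-nodes, the delicate point is that consecutive children share a path-vertex ($r_{\nu_i}=\lambda_{\nu_{i+1}}$) and all share $\rho_\mu=\rho_{\nu_i}$, so the inductive drawings must be consistent on these shared vertices. We fix the level of $\rho_\mu$ and, reading $\nu_1,\dots,\nu_k$ left to right, select for each $G_{\nu_i}$ a drawing type whose $(a,b)$ values force $r_{\nu_i}$ and $\lambda_{\nu_{i+1}}$ to land on the same level; this is possible because the stronger inductive hypothesis guarantees both a flat and (when applicable) a roof variant for each child. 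The drawings are then juxtaposed horizontally and merged along the shared vertices, and \cref{lem:geometric-realizability} is used to realize the final leveling as a planar drawing inside $R_{\lambda_\mu\rho_\mu r_\mu}$. The resulting drawing has span at most $4$ and matches the target type prescribed by the edges present at $\mu$.

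The main obstacle will be formulating the strengthened inductive hypothesis precisely enough that the S- and (especially) the P-node compositions always have a compatible choice of child drawings, while keeping the span bounded by $4$; the P-node case is what pins the constant to $4$ (rather than $3$), since combining a flat child with a roof child along a shared path-vertex forces the root to sit two levels above the middle level of the path-vertices.
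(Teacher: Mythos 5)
Your proposal follows essentially the same route as the paper's proof: induction over the SPQ-tree with a strengthened hypothesis providing a small catalogue of flat and roof drawing types for each node, an explicit Q-node base case, an S-node step that adds the new root above the child's drawing, a P-node step that alternates/aligns child types along the shared path-vertices, and \cref{lem:geometric-realizability} to glue the pieces inside the prescribed regions. The only substantive detail left open is the exact catalogue of types (the paper uses $(\pm1,\pm1)$-, $(3,1)$-, $(-1,-3)$-flat and $(1,-3)$-, $(3,-1)$-roof, and sidesteps your edge-presence case analysis by first augmenting $G$ to an internally triangulated maximal path-tree), but your outline matches the paper's argument in structure and in where the constant $4$ is forced.
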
   
\begin{proof}
Since the removal of edges does not increase the span of a weakly leveled planar drawing, we can make a couple of assumptions on $G$ that will simplify the proof. First, we will assume that the edges $(u,v)$ and $(v,w)$ are part of $G$. Second, since $G$ can always be augmented with edges connecting path vertices with internal vertices to an internally-triangulated almost $3$-connected path-tree having the same outerface as $G$, we will assume that $G$ is internally triangulated. In other words, in the following, we will prove the statement when $G$ is a maximal almost $3$-connected path-tree.
    
The statement of the lemma is implied by the following claim, stating that $G$ admits certain types of $4$-span weakly leveled planar drawings, based on the type of $\mu$. 

\begin{claim}\label{cl:nodes-span}
   Let $\mu$ be a node of $\mathcal{T}$. Then, the following holds:
   \begin{description}
   \item[$\mu$ is a P-node.]\label{claim:p-nodes}
     Then $G=G_\mu$ has levelings $\ell^i_\mu$ for $i=1,\dots,4$, with $\spn(\ell^i_\mu) \leq 4$, such that $(G_\mu,\ell^i_\mu)$ admits a $p_i$-flat weakly leveled planar drawing with $p_1 = (-1,-1)$, $p_2 = (1,1)$, $p_3 = (-1,-3)$, and $p_4 = (3,1)$.
   \item[$\mu$ is a Q- or an S-node.]\label{claim:qs-nodes}
      Then $G_\mu$ has levelings $\ell^i_\mu$ for $i=1,\dots,4$, with $\spn(\ell^i_\mu) \leq 4$, such that $(G_\mu,\ell^i_\mu)$ admits a $q_i$-flat weakly leveled planar drawing with $q_1 = (1,1)$, $q_2 = (-1,-1)$, $q_3 = (3,1)$, and $q_4 = (-1,-3)$, and levelings $\ell^j_\mu$ for $j=5,6$, with $\spn(\ell^j_\mu) \leq 4$, such that $(G_\mu,\ell^j_\mu)$ admits a $q_j$-roof weakly leveled planar drawing
        with $q_5 = (1,-3)$ and $q_6 = (3,-1)$.
    \end{description}
   \end{claim}
   \begin{nestedproof}
    We prove the claim by induction on the height of~$\mathcal{T}$.  

    In the base case, the height of~$\mathcal{T}$ is~$0$, i.e., $\mu$ is the unique node of $\mathcal{T}$. Clearly, $\mu$ is a Q-node. It is immediate to see that $G_\mu$ admits levelings $\ell^i_\mu$, for $i=1,\dots,4$ and $\ell^j_\mu$, for $j=5,6$,  with span at most $4$, as required by the statement.  \cref{fig:Q-nodes} shows corresponding examples of weakly leveled planar drawings of $(G_\mu,\ell^i_\mu)$ and of $(G_\mu,\ell^j_\mu)$. Note that only the $(3,1)$-flat and $(-1,3)$-flat weakly leveled planar drawings contain an edge with span $4$ (namely, the edge $(u,w)$).

    \begin{figure}[t]
    \centering
        \includegraphics[width = \textwidth, page=2]{Figures/drawing_types.pdf}
        \caption{Illustrations for the proof of \cref{claim:qs-nodes}, when $\mu$ is a Q-node and the height of $\mathcal{T}$ is $0$.}
    \label{fig:Q-nodes}
    \end{figure}
    
    Now assume that the height of $\mathcal{T}$ is greater than~$0$, which implies that $\mu$ is either a P-node or an S-node.  Let~$\nu_1,\dots,\nu_k$ denote the children of~$\mu$, with $k \geq 1$ in the left-to-right order in which they appear in $\mathcal{T}$. By inductive hypothesis, we can assume that each graph $G_{\nu_i}$ has levelings admitting weakly leveled planar drawings as required by the statement. We distinguish two cases based on the type of~$\mu$.

    \begin{figure}[t]
    \begin{minipage}{\textwidth}
    \centering
        \includegraphics[width=0.9\textwidth,page=3]{Figures/drawing_types.pdf}
        \subcaption{$(1,1)$-flat}
    \label{fig:P-node-drawing-11}
    \end{minipage}

    \medskip
    \begin{minipage}{\textwidth}
    \centering
        \includegraphics[width=0.9\textwidth,page=4]{Figures/drawing_types.pdf}
        \subcaption{$(3,1)$-flat}
    \label{fig:P-node-drawing-31}
    \end{minipage}
    \caption{Illustrations for the construction of a flat weakly leveled planar drawing of the pertinent graph of a P-node, with odd number (left) and even number (right) of children.}
    \label{fig:P-node-drawing}
    \end{figure}

    \noindent {\bf $\mu$ is a P-node.} Since~$\mu$ is a P-node, each $\nu_i$ is an S- or a Q-node ($1 \leq i \leq k$). Note that $u=\lambda_{\mu}=\lambda_{\nu_1}$, $v=\rho_\mu=\rho_{\nu_1}=\ldots=\rho_{\nu_k}$, $w=r_{\mu}=r_{\nu_k}$ and that $r_{\nu_{i-1}} = \lambda_{\nu_i}$ for $1 < i \leq k$. 
    
    We first show how to construct a $(1,1)$-flat weakly leveled planar drawing of $G_\mu$; refer to \cref{fig:P-node-drawing-11}. To this aim, 
    we draw $u$ at point $(1,-1)$, $v$ at point $(k+1,0)$, $w$ at point $(2k+3,1)$, and the edges $(u,v)$ and $(v,w)$ as straight-line segments. For any odd $i\in [k]$, we place $\lambda_{\nu_i}$ at coordinate $(i,-1)$ and draw the edge $(v,\lambda_{\nu_i})$ as a straight-line segment. For any even $i\in [k]$, we place $\lambda_{\nu_i}$ at coordinate $(i,-3)$ and draw the edge $(v,\lambda_{\nu_i})$ with one bend at coordinate $(i,-1)$. Note that $r_{\nu_i}=\lambda_{\nu_{i+1}}$, for $i=1,\dots,k-1$, and $r_{\nu_k}=w$ have been drawn.

   This construction defines, for every child $\nu_i$ of $\mu$, a drawing of the path $(\lambda_{\nu_i},\rho_{\nu_i},r_{\nu_i})$, where $\rho_{\nu_i}$ coincides with $v$. 
    By induction, we have that 
    \begin{enumerate*}[label=(\roman*)]
        \item for every odd $i\in [k-1]$, $G_{\nu_i}$ has a leveling $\ell_i$ that admits a $(1,-3)$-roof weakly leveled planar drawing;
        \item for every even $i\in [k-1]$, $G_{\nu_i}$ has a leveling $\ell_i$ that admits a $(3,-1)$-roof weakly leveled planar drawing;
        \item $G_{\nu_k}$ has a leveling that admits a $(1,1)$-flat weakly leveled planar drawing with span at most $4$ and it has a leveling that admits a $(3,1)$-flat weakly leveled planar drawing  with span at most $4$. 
    \end{enumerate*}  

    By construction, for every $i \in [k]$, we have that $y(\lambda_{\nu_i}) = \ell_i(\lambda_{\nu_i})$, $y(\rho_{\nu_i}) = \ell_i(\rho_{\nu_i}) = y(v)$, and $y(r_{\nu_i}) = \ell_i(r_{\nu_i})$. Moreover, again by construction, we have that $x(\lambda_{\nu_i}) < x(r_{\nu_i})$, for every $i \in [k]$. Therefore, we can apply \cref{lem:geometric-realizability} to obtain a crossing-free $y$-monotone polyline drawing $\Gamma_{\nu_i}$ of $(G_{\nu_i}, \ell_i)$ inside the region $R_{\lambda_{\nu_i}\rho_{\nu_i}r_{\nu_i}}$, for every $i \in [k]$. The union of these drawings is a $(1,1)$-flat weakly leveled planar drawing of $G_\mu$ with span at most $4$. Note that the $(1,1)$-flat weakly leveled planar drawing of $G_{\nu_k}$ is used when $k$ is odd and the $(3,1)$-flat weakly leveled planar drawing is used when $k$ is even.   

    The construction of a $(3,1)$-flat weakly leveled planar drawing of $G_\mu$ is similar (refer to \cref{fig:P-node-drawing-31}), 
    except that we place $u$ at point $(1,-3)$, draw the edge $(u,v)$ as a 1-bend polyline that starts at $u$, goes to point $(1,-1)$, and then to $v$, and we exchange ``odd'' and ``even'' in the construction.     
    The constructions of a $(-1,-1)$-flat and of a $(-3,-1)$-flat weakly leveled planar drawing are symmetric to the constructions of a $(1,1)$-flat and of a $(3,1)$-flat weakly leveled planar drawing, respectively.

    \noindent {\bf $\mu$ is an S-node.}
    In this case, $k=1$ holds, that is, there is a single child $\nu=\nu_1$, which is either an S-node, or a P-node, or a Q-node. Recall that $u=\lambda_{\mu}=\lambda_{\nu}$, $v=\rho_\mu$, $w=r_{\mu}=r_{\nu}$; moreover, $(\rho_\mu,\rho_\nu)$ is an edge of $G_\mu$ not in $G_\nu$, and hence $\rho_\mu \neq \rho_\nu$.    

    \begin{figure}[t]
    \centering
    \begin{minipage}{.32\textwidth}
        \centering
        \includegraphics[width=0.9\textwidth,page=7]{Figures/drawing_types.pdf}
        \subcaption{$(1,1)$-flat}
        \label{fig:S-node-drawing-11}
    \end{minipage}
    \begin{minipage}{.32\textwidth}
        \centering
        \includegraphics[width=0.9\textwidth,page=10]{Figures/drawing_types.pdf}
        \subcaption{$(3,1)$-flat}
        \label{fig:S-node-drawing-31}
    \end{minipage}
    \begin{minipage}{.32\textwidth}
        \centering
        \includegraphics[width=0.9\textwidth,page=8]{Figures/drawing_types.pdf}
        \subcaption{$(3,-1)$-roof}
        \label{fig:S-node-drawing-roof}
    \end{minipage}
    \caption{Illustrations for the construction of a weakly leveled planar drawing of the pertinent graph of an S-node.}
    \label{fig:S-node-drawing}
    \end{figure}
    
    We first show how to construct a $(1,1)$-flat weakly leveled planar drawing of $G_\mu$ (see \cref{fig:S-node-drawing-11}); the construction of a $(-1,-1)$-flat weakly leveled planar drawing of $G_\mu$ is symmetric.
    We draw $u$ at point $(1,-1)$, $v$ at point $(2,0)$, $w$ at point $(6,1)$,  $\rho_\nu$ at point $(3,0)$,
    and all edges between $u$, $v$, $w$ and $\rho_\nu$ as straight-line segments. 
    Since $\nu$ is either an S-, P- or Q-node, by induction, we have that $G_\nu$ has a leveling $\ell_\nu$ such that $(G_\nu,\ell_\nu)$ admits a $(1,1)$-flat weakly leveled planar drawing with span at most $4$. By construction, we have that $y(\lambda_\nu)= \ell_\nu(\lambda_\nu) = y(u)$, $y(\rho_\nu) = \ell_\nu(\rho_\nu)$, and $y(r_\nu)=\ell_\nu(r_\nu) = y(w)$. Moreover, again by construction, we have that $x(\lambda_\nu) < x(r_\nu)$. Therefore, we can apply \cref{lem:geometric-realizability} to obtain a crossing-free $y$-monotone polyline drawing $\Gamma_{\nu}$ of $G_\nu$ inside the region $R_{\lambda_{\nu}\rho_{\nu}r_{\nu}}$. The union of~$\Gamma_{\nu}$ and of the drawing of the subgraph of $G_\mu$ induced by $u$, $v= \lambda_\nu$, $w=r_\nu$, and $\rho_\nu$ described above is a $(1,1)$-flat weakly leveled planar drawing of $G_\mu$ with span at most $4$.
    
    We next show how to construct a $(3,1)$-flat weakly leveled planar drawing of $G_\mu$ with span at most $4$; the construction of a $(-1,-3)$-flat weakly leveled planar drawing of $G_\mu$ with span at most $4$ is symmetric. 
    We draw $u$ at point $(1,-3)$, $v$ at point $(2,0)$, $w$ at point $(6,1)$, $\rho_\nu$ at point $(3,0)$,
    and all edges between $u$, $v$, $w$, and $\rho_\nu$ as straight-line segments.
    By induction, $G_\nu$ admits a leveling $\ell_\nu$ such that $(G_\nu,\ell_\nu)$ has a $(3,1)$-flat weakly leveled planar drawing with span at most $4$. Moreover, the coordinates of the path $(\lambda_{\nu}{=}u,\rho_\nu,r_\nu{=}w)$ satisfy the conditions of \cref{lem:geometric-realizability}. Hence, 
    by \cref{lem:geometric-realizability}, we can obtain a crossing-free $y$-monotone polyline drawing of $G_\nu$ inside the region $R_{\lambda_{\nu}\rho_{\nu}r_{\nu}}$ with span at most $4$. This completes the construction of the desired $(3,1)$-flat weakly leveled planar drawing of $G_\mu$ with span at most~$4$.
    
    Finally, we show how to construct a $(3,-1)$-roof weakly leveled planar drawing of $G_\mu$ with span at most $4$ (see \cref{fig:S-node-drawing-roof}); the construction of a  $(1,-3)$-roof weakly leveled planar drawing of $G_\mu$ with span at most $4$ is symmetric.
    We draw $u$ at point $(1,-3)$, $v$ at point $(2,0)$, $w$ at point $(4,-1)$, $\rho_\nu$ at point $(2,-2)$,
    and all edges between $u$, $v$, $w$, and $\rho_\nu$ as straight-line segments.
    By induction, $G_\nu$ admits a leveling $\ell_\nu$ such that $(G_\nu,\ell_\nu)$ has a $(1,1)$-flat weakly leveled planar drawing with span at most $4$. Moreover, the coordinates of the path $(\lambda_{\nu}{=}u,\rho_\nu,r_\nu{=}w)$ satisfy the conditions of \cref{lem:geometric-realizability}. Hence, 
    by \cref{lem:geometric-realizability}, we can obtain a crossing-free $y$-monotone polyline drawing of $G_\nu$ inside the region $R_{\lambda_{\nu}\rho_{\nu}r_{\nu}}$ with span at most $4$. This completes the construction of the desired $(3,-1)$-roof weakly leveled planar drawing of $G_\mu$ with span at most $4$.
    \end{nestedproof}
    As mentioned before \cref{cl:nodes-span} completes the proof of \cref{lem:triconnected-cycle-paths}.
\end{proof}

\noindent In the following, we use \cref{lem:triconnected-cycle-paths} to get the following result on $3$-connected cycle-trees.

\begin{theorem}\label{lem:upper-cycle-tree-3conn}
Every $3$-connected cycle-tree admits a $4$-span weakly leveled planar drawing.
\end{theorem}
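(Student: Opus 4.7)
The plan is to reduce the statement to \cref{lem:triconnected-cycle-paths}. First, given a $3$-connected cycle-tree $G$, I would locate a triangular internal face with boundary $u, v, w$, where $(u, w)$ is an outer-cycle edge and $v$ is a tree-vertex adjacent to both. To find such a face, pick $v$ to be any leaf of the internal tree (or the unique tree-vertex, if the internal tree consists of a single vertex). Since $G$ is $3$-connected, $v$ has degree at least $3$ and, having at most one tree-neighbor, has at least two path-vertex neighbors. A pocket argument using $3$-connectivity shows that two such path-neighbors $u, w$ that are consecutive in the rotation around $v$ are also consecutive on the outer cycle: otherwise, a path-vertex trapped between $u$ and $w$ in the pocket could not achieve degree $\geq 3$, because any tree-vertex witnessing its degree-$3$ condition would be confined to the pocket, yet the tree-path from such a tree-vertex to $v$ would have to exit the pocket via $v$'s unique tree-edge---which lies outside the pocket---a contradiction.

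Next, set $G' := G - (u, w)$ and designate $v$ as its root. Then $G'$ is a path-tree with $u$ and $w$ as its leftmost and rightmost path-vertices (up to the labeling convention), and $v$ lies on the outer face of $G'$ since the triangular internal face $(u, v, w)$ of $G$ merges with the outer face when $(u, w)$ is removed. Moreover, augmenting $G'$ with $(v, u), (v, w), (u, w)$ yields a supergraph of the $3$-connected graph $G$, so $G'$ is almost $3$-connected. Since $(v, u)$ and $(v, w)$ are edges of $G'$ by construction, \cref{lem:triconnected-cycle-paths} provides a leveling $\ell_\mu$ of $G'$ with $\spn(\ell_\mu) \leq 4$ such that $(G', \ell_\mu)$ admits a $(1, 1)$-flat weakly leveled planar drawing $\Gamma'$. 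In $\Gamma'$, the vertices $u, v, w$ lie on three consecutive levels with $\ell_\mu(w) - \ell_\mu(u) = 2$, and since $\Gamma'$ is a \emph{good} drawing, all vertices of $V(G') \setminus \{u, v, w\}$ lie in the interior of the region $R_{uvw}$.

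Finally, I would extend $\Gamma'$ to a weakly leveled planar drawing of $G$ by adding the edge $(u, w)$ as a strictly $y$-monotone polyline curve on the side of the path $(u, v, w)$ opposite to $R_{uvw}$. Since no vertex or edge of $\Gamma'$ lies on that side, no crossings are introduced; the span of the added edge is $|\ell_\mu(w) - \ell_\mu(u)| = 2 \leq 4$, so the resulting drawing has span at most $4$. The main technical obstacle is to verify the feasibility of such a $y$-monotone routing while respecting the planar embedding of $G$: the curve must strictly ascend through the level of $v$ and hence pass around $v$ on the empty side of the path, and the new face it creates together with $(u, v, w)$ must correspond to the unique internal face of $G$ incident to $(u, w)$. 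Because the outer region at the relevant levels is bounded only by the path $(u, v, w)$ itself, a simple one-bend $y$-monotone polyline suffices.
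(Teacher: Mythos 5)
Your proposal is correct and follows essentially the same route as the paper: delete an outer-cycle edge $(u,w)$ whose incident internal face is a triangle with a tree-vertex $v$, apply \cref{lem:triconnected-cycle-paths} to the resulting almost-$3$-connected path-tree rooted at $v$ to get a $(1,1)$-flat $4$-span drawing, and reinsert $(u,w)$ as a span-$2$ $y$-monotone curve bending at the level of $v$. The only (cosmetic) difference is how the triangle is found: the paper augments $G$ to an internally triangulated $3$-connected cycle-tree and takes any outer edge, whereas you locate a leaf of the internal tree and argue via $3$-connectivity that two of its rotation-consecutive path-neighbors are consecutive on the outer cycle (where the cleanest justification is simply that a nonempty pocket would make $\{u,w\}$ a separating pair).
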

\begin{proof}
Let $G$ be a $3$-connected cycle-tree and let $(u,w)$ be an edge on the outer face of~$G$. Since $G$ can be augmented with edges connecting external with internal vertices to an internally-triangulated $3$-connected cycle-tree with the same outer face as $G$, we can assume that $G$ is internally triangulated.
Let $v$ be the internal vertex on the internal face incident to $(u,w)$. 
We obtain an almost-$3$-connected path-tree $G'$ by removing the edge $(u,w)$.

Let $\mathcal T$ be the SPQ-tree of $G'$ with root node $\mu$ such that $u=\lambda_\mu$, $v=\rho_\mu$, and $w=r_\mu$. By \cref{lem:triconnected-cycle-paths}, we have that $G'$ admits a (1,1)-flat 4-span weakly leveled planar drawing $\Gamma'$. We construct a 4-span weakly leveled planar drawing $\Gamma$ of $G$ by reinserting the edge $(u,w)$ with span 2 in $\Gamma'$ by introducing a bend at the level of $v$.
\end{proof}

\begin{figure}
    \centering
    \includegraphics[page=1]{Figures/LB3ConnCycleTree.pdf}
    \caption{An $n$-vertex $3$-connected cycle-tree requiring span $4$ in every weakly leveled planar drawing.}
    \label{fig:lower-cycle-tree-3conn}
\end{figure}

The approach in the proof of \cref{lem:upper-cycle-tree-3conn} can be implemented in quadratic time. To get linear time, we can maintain only the order of the vertices on their levels and calculate the exact coordinates at the end of the algorithm. Our upper bound is matched by a lower bound, showing that some $3$-connected cycle-trees require span at least $4$; see~\cref{fig:lower-cycle-tree-3conn}.

\begin{theorem} \label{lem:lower-cycle-tree-3conn}
For all $n\geq 43$, there exists an $n$-vertex $3$-connected cycle-tree $G$ such that every weakly leveled planar drawing of $G$ has span greater than or equal to $4$.
\end{theorem}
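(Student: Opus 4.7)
The witness graph $G$ is the 3-connected cycle-tree depicted in \cref{fig:lower-cycle-tree-3conn}; it has $43$ vertices and the internal tree has a distinguished central vertex~$v$ of high degree. For $n>43$ we augment $G$ by subdividing an external edge (or, if needed, by inserting further leaves of the internal tree attached to the external cycle) in a way that preserves 3-connectivity and the cycle-tree structure; since such augmentations only add vertices, it suffices to prove the span lower bound for the base case $n=43$.

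Assume for contradiction that $G$ admits a weakly leveled planar drawing $\Gamma$ with corresponding leveling $\ell$ of span at most $3$. I would first establish a Property-P1/P2 style dichotomy for $v$, inspired by the arguments used in the proof of \cref{cl:uv-long-span}: using the fact that $v$ is contained in many triangles of $G$ (because $G$ is internally triangulated and $v$ has large degree in the internal tree), I would argue that (i) on each level intersected by the neighborhood $N(v)$ only a bounded number of neighbors of $v$ can be placed without creating an edge overlap or a crossing, and (ii) the span constraint $\spn(\Gamma)\le 3$ restricts the levels available for $N(v)$ to a window of at most $7$ consecutive levels centered at $\ell(v)$.

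The heart of the argument is then a planarity-plus-pigeonhole count. The external cycle of $G$ together with the internal tree creates a rotation system around $v$ in which consecutive external vertices (together with their attachments to the internal tree) must be placed in cyclic order around $v$ in $\Gamma$. Combining (i)--(ii) with this rotational constraint, one obtains an upper bound on the number of external vertices that can be accommodated without forcing some edge to span strictly more than $3$ levels. Choosing the parameters of $G$ (degree of $v$, length of the external cycle, and branching of the internal tree) so that this upper bound is strictly less than the number of external vertices of $G$ yields the desired contradiction.

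\textbf{Main obstacle.} The delicate part is making the counting argument tight enough to rule out span $3$ while remaining consistent with the matching upper bound of span $4$ from \cref{lem:upper-cycle-tree-3conn}. In particular, one must argue that every admissible placement of $N(v)$ in a $7$-level window, compatible with the planar embedding inherited from the cycle-tree structure, leaves at least one external vertex or one internal branch that cannot be realized without an edge crossing $4$ levels. This requires an explicit enumeration (or parity-style argument) of how the neighbors of $v$ can split between the levels above and below $\ell(v)$, analogous to the case analysis used for $K^+_{2,2i+1}$ in the proof of Property P2, but adapted to the cycle-tree setting where the external cycle adds extra rigidity on top of the $K^+_{2,\alpha}$-like gadgets.
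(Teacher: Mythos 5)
Your proposal has two genuine problems, one structural and one of completeness. Structurally, the mechanism you centre the argument on --- a single high-degree ``central vertex $v$'' of the internal tree, with a counting argument on how many external neighbours fit in a $7$-level window around $\ell(v)$ --- cannot yield the bound. A wheel is a $3$-connected cycle-tree whose internal tree is a single hub of arbitrarily high degree lying in arbitrarily many triangles, yet wheels are Halin graphs and admit $1$-span weakly leveled planar drawings; so high degree and many incident triangles at one vertex impose no span lower bound at all. The rigidity you correctly recall from the $K^+_{2,2i+1}$ analysis in \cref{cl:uv-long-span} comes from \emph{two} vertices sharing many common neighbours, i.e.\ a $K_{2,\alpha}$, not from a $K_{1,\alpha}$ star. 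The paper's witness accordingly has no central hub: it is a $K_4$ on $\{v_1,v_2,v_3,v_4\}$ with three fans of $13$ (resp.\ $n-30$) vertices, each fan being the common neighbourhood of one of the pairs $\{v_1,v_3\}$, $\{v_1,v_4\}$, $\{v_3,v_4\}$. Each fan forces, by a pigeonhole on $13 = 2\cdot 7 - 1$ common neighbours and the fact that at most two of them can share a level, that either its pair already has an incident edge of span $\ge 4$ or the two pole vertices sit at level-distance at least $2$. The step that actually produces the constant $4$ --- and that is entirely absent from your plan --- is then the additivity along the triangle: if all three of $v_1v_3$, $v_1v_4$, $v_3v_4$ have span at least $2$, the extremal pair satisfies $\spn(v_sv_t)=\spn(v_sv_m)+\spn(v_mv_t)\ge 4$. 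Without some device that combines several span-$\ge 2$ constraints, a single gadget only gets you to span $2$, and your deferred ``delicate'' enumeration is exactly the part of the proof that does not close under your setup.

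Two smaller points. First, your reduction of general $n$ to $n=43$ by subdividing an external edge is unsound: a lower bound for $G$ does not transfer to a subdivision of $G$ (subdividing generally makes drawing easier, and a drawing of the subdivision does not restrict to a drawing of $G$). The paper instead makes one of the three fans have length $n-30\ge 13$, so the identical argument runs for every $n\ge 43$. Second, a proof plan that explicitly postpones ``making the counting argument tight enough to rule out span $3$'' is postponing the theorem itself; as written, nothing in the proposal certifies that span $3$ is impossible for any concrete graph.
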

\begin{proof}
The $n$-vertex cycle-tree $G$ proving the lower bound is constructed as follows; see~\cref{fig:lower-cycle-tree-3conn}. Starting from a complete graph on four vertices $\{v_1, v_2, v_3, v_4\}$, insert three paths $(u_1,\dots,u_{13})$, $(w_1,\dots,w_{13})$, and $(z_1,\dots,z_k)$, where $k=n-30\geq 13$. Connect $u_1$ to $v_1$, $v_2$, and $v_3$; for $i=2,\dots,13$ connect $u_i$ to $v_1$ and $v_3$; connect $w_1$ to $v_1$, $v_2$, and $v_4$; for $i=2,\dots,13$ connect $w_i$ to $v_1$ and $v_4$;  connect $z_1$ to $v_2$, $v_3$, and $v_4$; and for $i=2,\dots,k$ connect $z_i$ to $v_3$ and $v_4$.

Consider any weakly leveled planar drawing $\Gamma$ of $G$ and let $\ell$ be its associated leveling.  Suppose first that one of the edges $v_1v_3$, $v_1v_4$, $v_3v_4$ has span~$0$ or $1$.  Without loss of generality, we may assume $\spn_\ell(v_1v_3) \in \{0, 1\}$ and $\ell(v_1)\leq \ell(v_3)$, as the case $\ell(v_3)<\ell(v_1)$ is analogous. Let $A$ and $B$ be the subsets of vertices among $\{u_1,\dots,u_{13}\}$ whose level is not smaller than $\ell(v_3)$ and not larger than $\ell(v_1)$, respectively. Note that $|A|\geq 7$ or $|B|\geq 7$. Suppose the former, as the other case is analogous. At most two vertices in $A$ lie on the same level, as edges from $v_1$ and $v_3$ to three vertices in $A$ on the same level would define a crossing. It follows that the vertices in $A$ occupy at least four levels, not lower than the level of $v_3$. Then the edge from $v_1$ to the vertex in $A$ on the highest level has span at least $4$. %An analogous proof shows that, if the span of one of the edges $v_1v_4$ and $v_3v_4$ is $0$ or $1$, then an edge of $G$ has span at least $4$ in $\Gamma$.

We can thus assume that the edges $v_1v_3$, $v_1v_4$, and $v_3v_4$ all have span at least $2$ in $\Gamma$. Let $v_s$ and $v_t$ be the vertices among $v_1$, $v_3$, and $v_4$ with the smallest and largest level, respectively; let $v_m$ be the vertex among $v_1$, $v_3$, and $v_4$ distinct from $v_s$ and $v_t$. Then the span of the edge~$v_sv_t$ in $\Gamma$ is 
$\spn_\ell(v_sv_t) = \spn_\ell(v_sv_m) + \spn_\ell(v_sv_t)$, hence it is at least $4$.
\end{proof}

Similar to~\cite[Lemma~14]{DBLP:journals/dmtcs/DujmovicPW04}, one can prove that $s$-span weakly leveled planar graphs have queue number at most $s+1$ as in~the~following.

%We exploit \cref{thm:triconnected-cycle-trees} and the following (easy to derive from the literature~\cite{DBLP:journals/dmtcs/DujmovicPW04}) lemma to obtain an upper bound on the queue number~\cite{DBLP:journals/siamcomp/HeathR92} of $3$-connected cycle-trees.

\begin{lemma}\label{lem:span-queue}
Weakly leveled planar graphs with span $s$ have queue number at most $s+1$.
\end{lemma}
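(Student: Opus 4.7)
The plan is to exhibit an explicit queue layout of $G$ using $s+1$ queues, which directly bounds the queue number. First, fix a weakly leveled planar drawing $\Gamma$ of $G$ with span at most $s$, and let $\ell$ be its associated leveling. I would take the linear order $\sigma$ on $V(G)$ that ranks vertices by decreasing level, breaking ties on each level by the left-to-right order induced by $\Gamma$. The edges would be partitioned into $s+1$ queues $Q_0, Q_1, \ldots, Q_s$, where $Q_k := \{(u,v) \in E(G) : |\ell(u)-\ell(v)| = k\}$. The remaining task is to check that no two edges in the same class $Q_k$ nest under $\sigma$.

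For the class $Q_0$ of horizontal edges, I would invoke \cref{obs:gioproperty}(i): each such edge's endpoints are consecutive on their common level. Two edges of $Q_0$ then either sit on distinct levels (and thus occupy disjoint intervals in $\sigma$), share an endpoint, or form disjoint consecutive pairs on the same level; none of these configurations yields a nested pair.

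For $k \geq 1$, I would take two distinct edges $e_1 = (a,b)$ and $e_2 = (c,d)$ in $Q_k$, orienting them so that $\ell(a) > \ell(b)$ and $\ell(c) > \ell(d)$ (so $a <_\sigma b$ and $c <_\sigma d$), and argue by contradiction assuming $a <_\sigma c <_\sigma d <_\sigma b$, with all four vertices distinct (since nesting requires four distinct positions in $\sigma$). The inequality $a <_\sigma c$ splits into two cases. If $\ell(a) > \ell(c)$, then the span equality forces $\ell(b) = \ell(a)-k > \ell(c)-k = \ell(d)$, so $b <_\sigma d$, contradicting $d <_\sigma b$. Otherwise $\ell(a) = \ell(c)$ (with $a$ left of $c$) and $\ell(b) = \ell(d)$; here a topological argument on the two $y$-monotone arcs representing $e_1$ and $e_2$ in $\Gamma$ shows that their horizontal order at the top level is preserved at every intermediate level and in particular at the bottom level, so $b$ lies left of $d$, giving $b <_\sigma d$ and again a contradiction.

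The main obstacle will be this last topological step: extending the local non-crossing principle \cref{obs:gioproperty}(ii), which is stated only for edges between consecutive levels, to $y$-monotone edges that span $k$ levels at once. It should follow from the fact that each $y$-monotone arc meets every intermediate level in exactly one point, and two non-crossing such arcs that start in a given horizontal order cannot swap it; the only delicate point is to rule out a coincidence of endpoints, which is automatic since a nested pair must involve four pairwise distinct positions of $\sigma$.
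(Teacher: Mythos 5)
Your proof is correct and follows essentially the same approach as the paper: the paper assigns the horizontal edges to one extra queue exactly as you do, and for the edges of positive span it simply cites Lemma~14 of Dujmovi\'c, P\'or, and Wood~\cite{DBLP:journals/dmtcs/DujmovicPW04}, whose content is precisely your ``one queue per span value'' argument under the level-by-level, left-to-right vertex order. The only difference is that you re-derive that cited lemma from scratch (including the $y$-monotone order-preservation step, which indeed goes through since non-crossing strictly $y$-monotone arcs with distinct endpoints cannot swap their horizontal order), so there is no gap.
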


\begin{proof}
Let $G$ be a weakly leveled planar graph with span at most $s$, and let $\Gamma$ be a corresponding weakly leveled planar drawing of $G$. It is known that every leveled planar graph with span at most $s$ admits a queue layout with $s$ queues
\cite[Lemma~14]{DBLP:journals/dmtcs/DujmovicPW04}, in which vertices of the same level appear consecutively in the queue layout in the same left to right order as in the leveled drawing. This implies that all edges of $\Gamma$ with span greater than zero fit in $s$ queues. The remaining ones can be accommodated in a single additional queue, as they cannot be nested in the underlying linear order, completing the proof.
\end{proof}

\noindent By \cref{lem:upper-cycle-tree-3conn,lem:span-queue}, we obtain the following.

\begin{corollary}
The queue number of $3$-connected cycle-trees is at most $5$. 
\end{corollary}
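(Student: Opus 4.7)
The plan is to combine the two results cited immediately before the corollary in a single short deduction, since both the structural drawing bound and the drawing-to-queue conversion are already available. First, I would invoke \cref{lem:upper-cycle-tree-3conn} to obtain, for any $3$-connected cycle-tree $G$, a weakly leveled planar drawing of $G$ whose span is at most $4$. This establishes that $G$ belongs to the family of weakly leveled planar graphs with span $s \le 4$.

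Next, I would apply \cref{lem:span-queue} to such a drawing. That lemma asserts that every weakly leveled planar graph with span $s$ admits a queue layout using at most $s+1$ queues: $s$ queues accommodate the edges with positive span (inherited from the known queue layout of leveled planar graphs with bounded span~\cite[Lemma~14]{DBLP:journals/dmtcs/DujmovicPW04}, using the left-to-right order within levels as the underlying vertex order), and one additional queue handles the horizontal (span-zero) edges, which cannot nest in that order. Instantiating this with $s = 4$ yields a queue layout of $G$ with at most $5$ queues.

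There is no real obstacle here: the statement is a direct corollary of the two earlier results, and the only thing to verify is that the leveling witnessing span $4$ in \cref{lem:upper-cycle-tree-3conn} is the same object fed into \cref{lem:span-queue}, which it is by definition. Hence the queue number of every $3$-connected cycle-tree is at most $5$, as claimed.
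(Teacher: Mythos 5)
Your proposal is correct and matches the paper's proof exactly: the corollary is obtained by combining \cref{lem:upper-cycle-tree-3conn} (span at most $4$) with \cref{lem:span-queue} (queue number at most $s+1$), giving $4+1=5$.
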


The \emph{edge-length ratio} of a straight-line graph drawing is the maximum ratio between the Euclidean lengths of $e_1$ and $e_2$, over all edge pairs $(e_1,e_2)$. The \emph{planar edge-length ratio} of a planar graph $G$ is the infimum edge-length ratio of $\Gamma$, over all planar straight-line drawings $\Gamma$ of $G$. 
Constant upper bounds on the planar edge-length ratio are known for outerplanar graphs~\cite{DBLP:journals/tcs/LazardLL19} and for Halin graphs~\cite{digiacomo2023new}.
%We are going to use the following result, linking span and edge-length ratio.
%
%\begin{lemma}[{\cite[Lemma 3]{digiacomo2023new}}]\label{le:level-edgeratio}
%Any graph that admits an $s$-span leveled planar drawing has planar edge-length ratio at most $s$.
%\end{lemma}
%
%\begin{lemma}[{\cite[Lemma 4]{digiacomo2023new}}]\label{le:level-beppe}
We exploit the property that graphs that admit $s$-span weakly leveled planar drawings have planar edge-length ratio at most $2s+1$~\cite[Lemma~4]{digiacomo2023new} to
%\end{lemma}
%
obtain a constant upper bound on the edge-length ratio of 3-connected cycle trees.% by \cref{thm:triconnected-cycle-trees}. 

%the following result about the planar edge-length ratio of 3-connected cycle trees. A constant upper bound on the planar edge-length ratio is known for outerplanar graphs~\cite{DBLP:journals/tcs/LazardLL19} and for certain subclasses of $2$-outerplanar graphs, including Halin graphs~\cite{digiacomo2023new}. Note that 3-connected cycle trees include the Halin graphs as a subfamily; however, the edge-length ratio of Halin graphs is known to be at most $3$. 

\begin{corollary}\label{co:edge-length-cycle-tree-3conn}
The planar edge-length ratio of $3$-connected cycle-trees is at most $9$. 
\end{corollary}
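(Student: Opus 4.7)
The proof will be a direct combination of two ingredients already established in the excerpt, so the plan is essentially a one-line deduction with a brief justification. The key ingredients are \cref{lem:upper-cycle-tree-3conn}, which guarantees that every $3$-connected cycle-tree admits a weakly leveled planar drawing with span at most $4$, and the result from~\cite[Lemma~4]{digiacomo2023new}, cited in the text, which states that any graph admitting an $s$-span weakly leveled planar drawing has planar edge-length ratio at most $2s+1$.

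The plan is to take an arbitrary $3$-connected cycle-tree $G$, apply \cref{lem:upper-cycle-tree-3conn} to obtain a weakly leveled planar drawing of $G$ with span $s \le 4$, and then invoke the cited lemma to conclude that $G$ admits a planar straight-line drawing whose edge-length ratio is at most $2s+1 \le 2\cdot 4 + 1 = 9$. Since $G$ was arbitrary, this bounds the planar edge-length ratio of the entire class. There are no obstacles: both ingredients are already in hand, and no further combinatorial or geometric work is required. The only thing to check is that the span bound from \cref{lem:upper-cycle-tree-3conn} is exactly the input required by~\cite[Lemma~4]{digiacomo2023new}, which it is, since the latter applies to any graph admitting a weakly leveled planar drawing of bounded span and returns a straight-line planar drawing realizing the edge-length ratio bound.
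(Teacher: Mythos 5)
Your proposal is correct and matches the paper's reasoning exactly: the corollary is obtained by combining the span-$4$ bound of \cref{lem:upper-cycle-tree-3conn} with the cited result that an $s$-span weakly leveled planar drawing yields planar edge-length ratio at most $2s+1$, giving $2\cdot 4+1=9$. No further work is needed.
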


% \subparagraph{General Cycle-Trees.}  %\label{se:cycletree}

\subsection{General Cycle-Trees} 
We now discuss general cycle-trees, for which we can prove a $\Theta(\log n)$ bound on the span of their weakly leveled planar drawings. We start with the upper bound.

\begin{theorem}\label{lem:upper-general-cycle-trees}
Every $n$-vertex cycle-tree $G$ has an $s$-span weakly leveled planar drawing such that $s \in O(\log n)$.
\end{theorem}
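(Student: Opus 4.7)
The plan is to prove the bound by strong induction on $n$, establishing that there exists a constant $c$ such that every $n$-vertex cycle-tree $G$ admits a weakly leveled planar drawing of span at most $c\lceil\log_2 n\rceil$. To make the induction go through, I would strengthen the statement to an auxiliary ``boundary'' condition (in the spirit of \cref{lem:2-att-drawing} and of the good weakly leveled planar drawings of \cref{lem:geometric-realizability}): the cycle-tree is presented together with a distinguished edge $(u,w)$ on its outer face, and the drawing places $u$ and $w$ on the topmost level with $u$ immediately to the left of $w$ and the remaining vertices strictly below them. This way, each sub-drawing produced by the induction can be ``plugged'' back into a larger drawing via its topmost level.

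The core step is a balanced decomposition of the internal tree $T$ of $G$. First, I would root $T$ at the tree-vertex incident to the internal face of $G$ bounded in part by $(u,w)$. Second, I would select a weighted centroid $v^*$ of $T$, where each tree-vertex is weighted by one plus the number of its path-vertex neighbors; this exists by a standard tree-centroid argument and splits the weighted tree into pieces, each of weight at most $n/2$. Third, I would observe that because $T$ is internal and the path vertices lie on the outer cycle, removing $v^*$ partitions the outer cycle into contiguous arcs, one per subtree of $T-v^*$, together with some ``orphan'' path-vertices whose only tree-neighbor was $v^*$. Each arc, joined with its corresponding subtree and attached edges, forms a smaller cycle-tree $G_i$ of at most $n/2$ vertices, equipped with a natural distinguished edge (either an original edge of $G$ or the induced edge between the two arc endpoints). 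Fourth, by induction, each $G_i$ has a drawing $\Gamma_i$ of span at most $c\lceil\log_2(n/2)\rceil = c\lceil\log_2 n\rceil - c$ satisfying the boundary condition. Fifth, I would combine the $\Gamma_i$'s by placing them side by side in the cyclic order inherited from the outer face of $G$, inserting $v^*$ and the orphan path-vertices on a constant number of new levels above the $\Gamma_i$'s, and finally drawing the distinguished edge $(u,w)$ of $G$ and the edges from $v^*$ to the tops of each $\Gamma_i$ as weakly $y$-monotone polylines. This adds only a constant number of levels, yielding the recurrence $f(n)\le f(n/2)+c$ and hence $f(n)\in O(\log n)$.

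The hard part will be the combination step, and specifically the interaction between the topological constraints of the 2-outerplane embedding and the leveled geometry. The distinguished edges of the sub-drawings $\Gamma_i$ must match the way their arcs glue into the outer cycle of $G$, which may force flipping some of the $\Gamma_i$'s (reflecting a drawing preserves weak level planarity and span); the orphans adjacent only to $v^*$ must be placed between the correct pair of sub-drawings and connected to $v^*$ on the newly added levels; and the edges from $v^*$ to path-vertex neighbors that lie at arc endpoints must be routed without crossing either the cycle arcs or the edges to the roots of the $T_i$'s. These constraints can be met by applying \cref{obs:gioproperty} to realize a suitable linear ordering on each new level, and by leveraging \cref{lem:geometric-realizability} to reroute each sub-drawing so that its distinguished edge sits horizontally on top. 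Verifying that these local adjustments can always be carried out with $O(1)$ additional levels, independent of the degree of $v^*$ and the number of orphans, is what ultimately drives the constant $c$ in the recurrence and completes the $O(\log n)$ bound.
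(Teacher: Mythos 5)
Your approach is genuinely different from the paper's, which first prunes $G$ down to a $3$-connected cycle-tree, draws it with \emph{constant} span via \cref{lem:upper-cycle-tree-3conn}, inserts $O(\log n)$ levels between consecutive levels, and re-inserts the pruned pieces (outerplanar gadgets outside the central cycle, and subtrees drawn with the classical $1+\log_2 n$ height bound inside the span of a single edge). The centroid recursion you propose instead has a genuine gap in the combination step: the boundary invariant is not maintained through the recursion. Exactly one piece of the decomposition---the one containing the subtree of $T-v^*$ on the root side of $v^*$---has as its territory the arc of the outer cycle that contains the distinguished edge $(u,w)$. Its own distinguished edge is therefore the virtual edge $(a_0,b_0)$ joining the two extreme path-vertices adjacent to the child subtrees, so the inductive drawing of this piece places $a_0$ and $b_0$ on its topmost level and buries $u$ and $w$ strictly below. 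Placing all pieces side by side under $v^*$ then produces a drawing of $G$ in which $u$ and $w$ are \emph{not} on the topmost level, which is precisely the invariant needed one level up in the recursion.

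The natural repairs do not rescue the recurrence. Stacking the root-side piece above $v^*$ and the child pieces below makes the heights add, giving $f(n)\le f(n_0)+f(n/2)+c$ with $n_0\le n/2$ and hence $f(n)=O(n)$. Inserting the child pieces into the root-side drawing at the position of $(a_0,b_0)$ requires either that this virtual edge already spans $\Omega(\log n)$ levels there (nothing guarantees this; it may have span $1$) or that you multiply the number of levels of the root-side drawing by $\Theta(\log n)$, which applied recursively yields a super-logarithmic bound. This is exactly why the paper's reduction to the $3$-connected case matters: since the base drawing has constant span, a single non-recursive insertion of $O(\log n)$ intermediate levels leaves every edge with span $O(\log n)$ and creates enough room for all re-insertions at once. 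Two smaller issues in your write-up: with $u$ and $w$ both on the topmost level and the other arc endpoints between them, the edge $(u,w)$ cannot be drawn, since a weakly $y$-monotone edge between same-level vertices must be a horizontal segment with nothing between its endpoints; and weighting tree-vertices by their path-neighbours ignores path-vertices with no tree-neighbour, so a piece of the decomposition may in fact exceed $n/2$ vertices.
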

\begin{proof}
We can assume that $G$ is connected, as distinct connected components of $G$ can be laid out independently on the same levels. Then $G$ has a plane embedding $\mathcal E$ in which the outer face is delimited by a walk $W$, so that by removing the vertices of $W$ and their incident edges from $G$ one gets a tree~$T$; see~\cref{fig:cycle-2conn-a}. We now add the maximum number of edges in the outer face of $\mathcal E$, while preserving planarity, simplicity, and the property that every vertex of $W$ is incident to the outer face. Further, we add the maximum number of edges inside the internal faces of $\mathcal E$, connecting vertices of $W$ with vertices of $W$ and with vertices of $T$. This latter augmentation is performed until no edge can be added as described while preserving planarity and simplicity. Clearly, this maintains the property that $G$ is a cycle-tree; see~\cref{fig:cycle-2conn-b}.

\begin{figure}
    \centering
    \begin{subfigure}{0.48\textwidth}
    \centering
        \includegraphics[page=1]{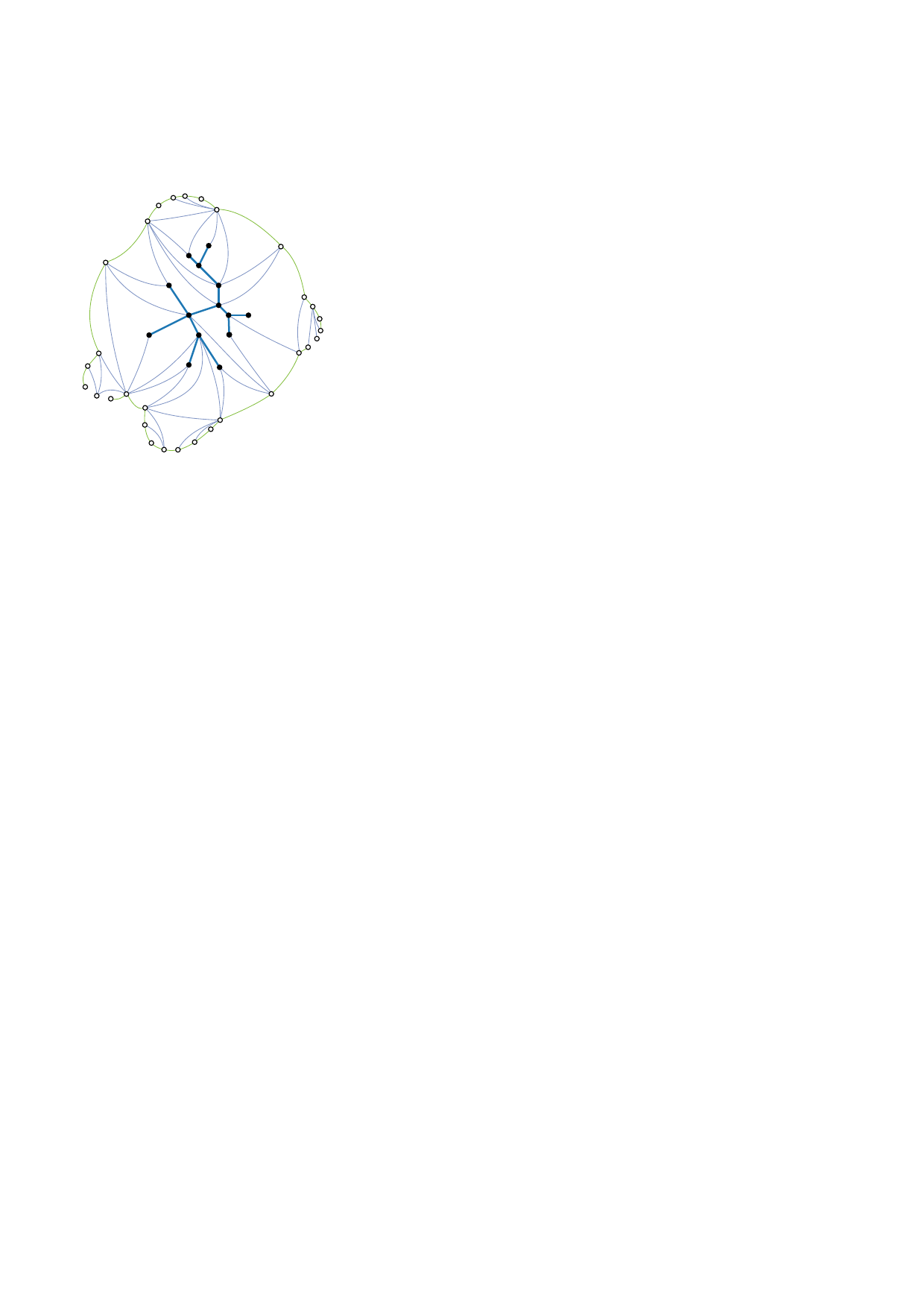}
        \subcaption{}
        \label{fig:cycle-2conn-a}
    \end{subfigure}
    \begin{subfigure}{0.48\textwidth}
    \centering
        \includegraphics[page=2]{Figures/CycleTrees-2Conn.pdf}
        \subcaption{}
        \label{fig:cycle-2conn-b}
    \end{subfigure}
    \hfill
    \caption{(a) An embedding $\mathcal E$ of a cycle-tree $G$ such that removing the vertices of the walk $W$ delimiting the outer face one gets a tree $T$, which is represented by bold lines. (b) The augmentation of $G$; the face $f$ of $\mathcal E_{G_W}$ is shaded gray.}
    \label{fig:cycle-2conn}
\end{figure}

Now the plan is as follows. We are going to remove some parts of $G$, so that $G$ turns into a $3$-connected cycle-tree $H$. We then apply~\cref{lem:upper-cycle-tree-3conn} to construct a weakly leveled planar drawing $\Lambda$ of $H$ with $O(1)$ span. We next insert $O(\log n)$ levels between any two consecutive levels of $\Lambda$. We use such levels in order to re-introduce the parts of $G$ that were previously removed, thus obtaining a  weakly leveled planar drawing $\Gamma$ of $G$ with $O(\log n)$ span. 

First, note that the subgraph $G_W$ of $G$ induced by the vertices of $W$ is outerplanar (and $2$-connected, because of the described augmentation). Let $\mathcal E_{G_W}$ be the restriction of $\mathcal E$ to $G_W$. Since $T$ is connected, it lies inside a single internal face $f$ of $\mathcal E_{G_W}$; let $C$ be the cycle delimiting $f$. We remove from $G$ the vertices of $G_W$ not in $C$ and their incident edges. The removed vertices induced connected components $H^1_W,\dots,H^k_W$ of $G_W$. For $i=1,\dots,k$, we call \emph{component of $G$ outside $C$} the subgraph $G^i_W$ of $G$ induced by the vertices of $H^i_W$ and the two vertices of $C$ that are neighbors of vertices in $H^i_W$. Note that $G^i_W$ is a maximal outerplanar graph, because of the initial augmentation. \cref{fig:cycle-2conn-removed-a} shows the result of the described removal on the graph from~\cref{fig:cycle-2conn-b}.

\begin{figure}[htb]
    \centering
    \begin{subfigure}{0.48\textwidth}
    \centering
        \includegraphics[page=3]{Figures/CycleTrees-2Conn.pdf}
        \subcaption{}
        \label{fig:cycle-2conn-removed-a}
    \end{subfigure}
    \begin{subfigure}{0.48\textwidth}
    \centering
        \includegraphics[page=4]{Figures/CycleTrees-2Conn.pdf}
        \subcaption{}
        \label{fig:cycle-2conn-removed-b}
    \end{subfigure}
    \hfill
    \caption{(a) Removing the components of $G$ outside $C$ (except for the vertices and edges of $C$). (b) Removing the components of $G$ inside $C$.}
    \label{fig:cycle-2conn-removed}
\end{figure}
Second, we remove from $G$ all the vertices of $T$ that have at most one neighbor in $C$, as well as their incident edges. \cref{fig:cycle-2conn-removed-b} shows the result of this removal on the graph from~\cref{fig:cycle-2conn-removed-a}. Let $H$ be resulting graph and let  $\mathcal E_H$ be the restriction of $\mathcal E$ to $H$. Note that the outer face of $\mathcal E_H$ is delimited by $C$. We now prove that $H$ is a cycle-tree, that is, by removing $C$ from $H$ one is left with a tree $Y$. This is equivalent to saying that the removal of the vertices of $T$ actually removed entire subtrees of $T$, which we call \emph{components of $G$ inside $C$}. 

Suppose, for a contradiction, that $Y$ has multiple connected components; refer to~\cref{fig:cycle-is3conn-statements-a}. Consider two vertices $u$ and $v$ that belong to distinct connected components of $Y$ and that are incident to a common face $f_H$ of $\mathcal E_H$. Choose $u$ and $v$ so that their shortest path distance in $T$ is minimum. Since each of $u$ and $v$ has more than one neighbor in $C$, it follows that there exist four edges $(u,w_1)$, $(u,w_2)$, $(v,z_1)$, and $(v,z_2)$ that belong to the boundary of $f_H$; suppose, without loss of generality, that $w_1$, $w_2$, $z_2$, and $z_1$ appear in this clockwise order along $C$. By planarity, the path $P_{uv}$ that connects $u$ and $v$ in $T$ lies inside $f_H$. By the minimality of the shortest path distance between $u$ and $v$, all the internal vertices of~$P_{uv}$ do not belong to~$Y$. Let~$u'$ be the vertex of~$P_{uv}$ adjacent to~$u$. Then~$u'$ is incident to (at least) two faces of $\mathcal E$, one for each ``side of $P_{uv}$''. Formally, consider the bounded region $R_1$ of the plane delimited by the path $(w_1,u)\cup P_{uv}\cup (v,z_1)$ and by the path~$P_1$ in $C$ connecting $z_1$ with~$w_1$ and not passing through $w_2$; then $u'$ is incident to a face $f_1$ of $\mathcal E$ inside $R_1$. Analogously, consider the bounded region $R_2$ of the plane delimited by the path $(w_2,u)\cup P_{uv}\cup (v,z_2)$ and by the path~$P_{2}$ in $C$ connecting $z_2$ with $w_2$ and not passing through $w_1$; then $u'$ is incident to a face $f_2$ of $\mathcal E$ inside $R_2$. Since $T$ is a tree, for $i=1,2$, the boundary of $f_i$ contains at least one vertex $x_i$ of $C$ (possibly $x_i=w_i$ and/or $x_i=z_i$). Then $u'$ is adjacent to $x_i$, as otherwise the edge $(u,x_i)$ could be inserted inside $f_i$, contradicting the maximality of the initial augmentation of $G$. However, this implies that $u'$ is adjacent to two vertices of $C$, namely $x_1$ and $x_2$, thus contradicting the fact that no internal vertex of $P_{uv}$ belongs to $Y$.  
\begin{figure}
    \centering
    \begin{minipage}{0.24\textwidth}
    \centering
        \includegraphics[page=5]{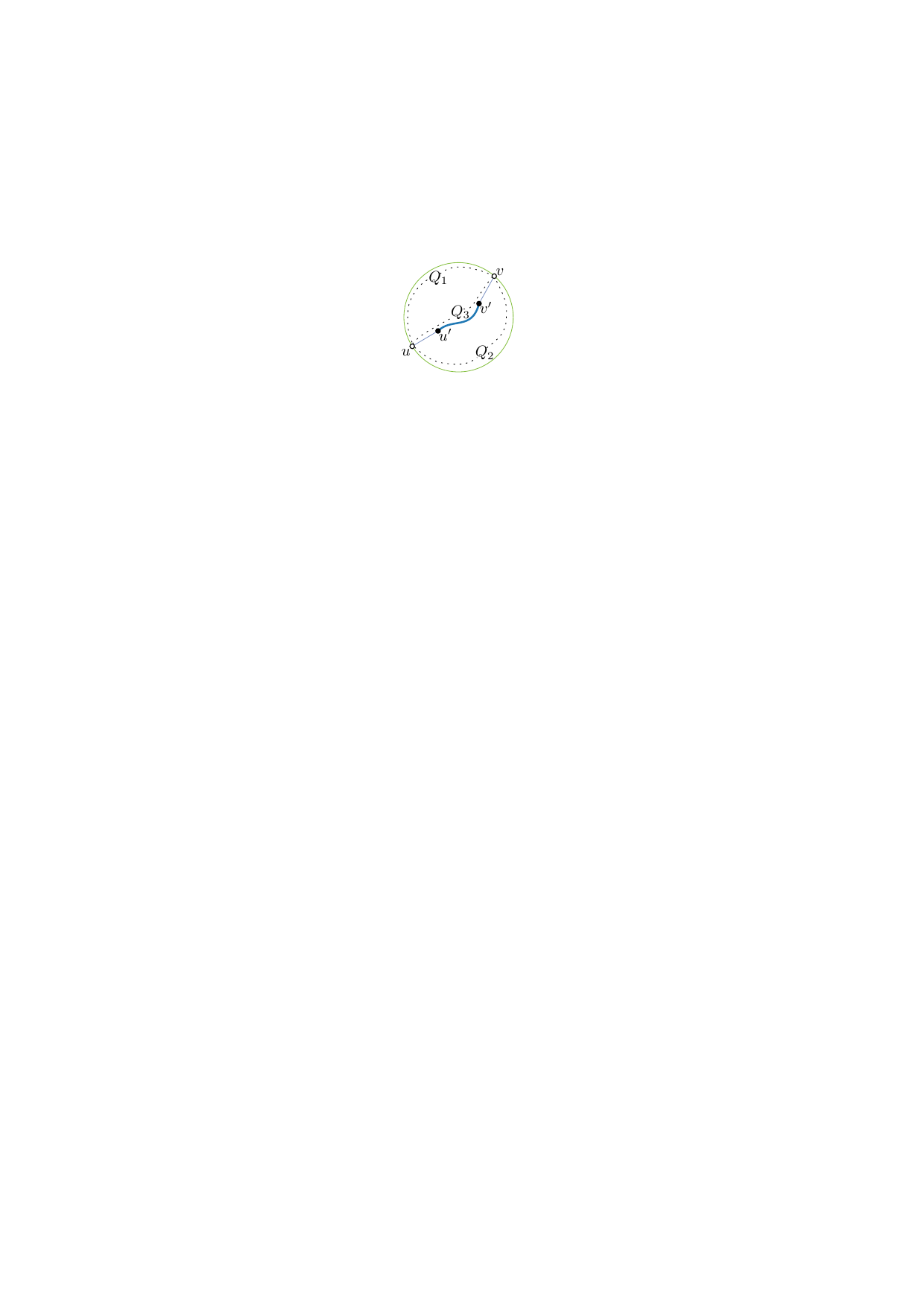}
        \phantomsubcaption{}
        \label{fig:cycle-is3conn-statements-a}
    \end{minipage}
    \hfil
    \begin{minipage}{0.24\textwidth}
    \centering
        \includegraphics[page=6]{Figures/CycleTrees-Is3Conn.pdf}
        \phantomsubcaption{}
        \label{fig:cycle-is3conn-statements-b}
    \end{minipage}
    \hfill
    \caption{(a) Illustration for the proof that $Y$ is a tree; the face $f_H$ of $\mathcal E_H$ is shaded gray. (b) Illustration for the proof that every vertex $u$ of $C$ has a neighbor in $Y$; the face $g_H$ is shaded gray.}
    \label{fig:cycle-is3conn-statements}
\end{figure}
%
%edge $(u,v)$ of $T$ such that: (i) $u$ and $v$ are neighbors of the same vertex $z$ of $C$; and (ii) the $3$-cycle $C_{uvz}=(u,v,z)$ contains a non-empty set $S_{uvz}$ of vertices in its interior in $\mathcal E$. We remove from $G$ the vertices in $S_{uvz}$ and their incident edges. The set $S_{uvz}\cup \{u,v,z\}$ induces a subgraph $G_{uvz}$ of $G$, which we call \emph{component of $G$ inside $C_{uvz}$}. Note that $z$ is a neighbor of every vertex in $G_{uvz}$ and the vertices of $G_{uvz}$ different from $z$ induce a tree, as they all belong to $T$.  cycle-tree. Indeed, we prove that, for any two vertices $u$ and $v$ of $H$, there exist three paths between $u$ and $v$ that are vertex-disjoint, except at $u$ and $v$. In order to do that

We now prove that $H$ is $3$-connected. In order to do that, we need the following two auxiliary statements.
\begin{itemize}
    \item Statement (S1): Every vertex of $C$ has a neighbor in $Y$. For a contradiction, suppose that a vertex $u$ of $C$ has no neighbor in~$Y$; refer to~\cref{fig:cycle-is3conn-statements-b}. Since $C$ is an induced cycle, it follows that $u$ is incident to a single internal face $g_H$ of $\mathcal E_H$ and at least one vertex~$v$ of~$Y$ is incident to $g_H$. Then the edge $(u,v)$ can be inserted inside $g_H$, contradicting the fact that no edge can be added connecting a vertex of $W$ with a vertex of $T$ inside an internal face of $\mathcal E$. 
    \item Statement (S2): Every vertex of $Y$ has at least two neighbors in $C$. This directly comes from the fact that vertices of $T$ with at most one neighbor in $C$ were removed from $G$.
\end{itemize}   

We are now ready to prove that, for any two vertices $u$ and $v$ of $H$, there exist three paths $Q_1$, $Q_2$, and $Q_3$ between $u$ and $v$ that are vertex-disjoint, except at $u$ and $v$. We distinguish three cases. 
\begin{figure}
    \centering
    \begin{minipage}[t]{0.24\textwidth}
    \centering
        \includegraphics[page=1]{Figures/CycleTrees-Is3Conn.pdf}
        \phantomsubcaption{}
        \label{fig:cycle-is3conn-a}
    \end{minipage}
    \begin{minipage}[t]{0.24\textwidth}
    \centering
        \includegraphics[page=2]{Figures/CycleTrees-Is3Conn.pdf}
        \phantomsubcaption{}
        \label{fig:cycle-is3conn-b}
    \end{minipage}
    \begin{minipage}[t]{0.24\textwidth}
    \centering
        \includegraphics[page=3]{Figures/CycleTrees-Is3Conn.pdf}
        \phantomsubcaption{}
        \label{fig:cycle-is3conn-c}
    \end{minipage}
    \begin{minipage}[t]{0.24\textwidth}
    \centering
        \includegraphics[page=4]{Figures/CycleTrees-Is3Conn.pdf}
        \phantomsubcaption{}
        \label{fig:cycle-is3conn-d}
    \end{minipage}
    \caption{Illustration for the proof that $H$ is $3$-connected. In (a) $u$ and $v$ belong to $C$; in (b)  $u$ and $v$ belong to $Y$; in (c) $u$ belongs to $C$, $v$ belongs to $Y$, and $v\neq u'$ or $u\notin \{z_1,z_2\}$; in (d) $u$ belongs to $C$, $v$ belongs to $Y$, $v=u'$, and $z_2=u$.}
    \label{fig:cycle-is3conn}
\end{figure}
\begin{itemize}
    \item Suppose first that $u$ and $v$ belong to $C$, as in~\cref{fig:cycle-is3conn-a}. Then $Q_1$ and $Q_2$ are the paths between $u$ and $v$ composing~$C$. Let $(u,u')$ and $(v,v')$ be two distinct edges such that $u'$ and $v'$ belong to $Y$; these edges exist by statement (S1). Then $Q_3$ is composed of $(u,u')$, of $(v,v')$, and of the path between $u'$ and $v'$ in $Y$.
    \item Suppose next that $u$ and $v$ belong to $Y$, as in~\cref{fig:cycle-is3conn-b}. Then $Q_1$ is the path in~$Y$ between $u$ and $v$. Let $(u,w_1)$ and $(u,w_2)$ be two distinct edges such that $w_1$ and $w_2$ belong to $C$, and let $(v,z_1)$ and $(v,z_2)$ be two distinct edges such that $z_1$ and $z_2$ belong to $C$; these edges exist by statement (S2). We can assume, without loss of generality, up to renaming $w_1$ with $w_2$, that $C$ contains a path $P_1$ from $w_1$ and $z_1$ not passing through $w_2$ and $z_2$ and a path $P_2$ from $w_2$ and $z_2$ not passing through $w_1$ and $z_1$. Then $Q_2$ is the path $(u,w_1)\cup P_1 \cup (v,z_1)$ and $Q_3$ is the path $(u,w_2)\cup P_2 \cup (v,z_2)$. 
    \item Suppose finally that $u$ belongs to $C$ and $v$ to $Y$, as in~\cref{fig:cycle-is3conn-c,fig:cycle-is3conn-d}. Let $(u,u')$ be an edge such that $u'$ belongs to $Y$; this edge exists by statement (S1). Then $Q_1$ consists of $(u,u')$ and of the path between $u'$ and $v$ in $T$. Let  $(v,z_1)$ and $(v,z_2)$ be two distinct edges such that $z_1$ and $z_2$ belong to $C$; these edges exist by statement (S2). Then $C$ contains a path $P_1$ from $z_1$ to $u$ not passing through $z_2$ (however $z_2$ might be an end-vertex of this path if $z_2=u$) and a path $P_2$ from $z_2$ to $u$ not passing through $z_1$ (however $z_1$ might be an end-vertex of this path if $z_1=u$). If $v\neq u'$ or if $u\notin \{z_1,z_2\}$, as in~\cref{fig:cycle-is3conn-c}, then $Q_2$ is the path $(v,z_1) \cup P_1$ and $Q_3$ is the path $(v,z_2) \cup P_2$. Otherwise, we have $v=u'$ and, say, $z_2=u$, as in~\cref{fig:cycle-is3conn-d}. Consider any vertex $w$ of $C$ different from $u$ and $z_1$.  Let $(w,w')$ be an edge such that $w'$ belongs to $Y$; this edge exists by statement (S1). Then $C$ contains a path $P'_1$ from $z_1$ to $u$ not passing through $w$ and a path $P'_2$ from $w$ to $u$ not passing through $z_1$. We have that $Q_2$ is the path $(v,z_1) \cup P'_1$ and $Q_3$ is the path composed of $P'_2 \cup (w,w')$ and of the path in $Y$ between $w'$ and $v$.
\end{itemize}

This concludes the proof that $H$ is $3$-connected. By~\cref{lem:upper-cycle-tree-3conn}, we have that $H$ admits a weakly leveled planar drawing with  span at most $4$. By~\cref{le:weak-nonweak}, we have that $H$ admits a leveled planar drawing $\Lambda$ with span at most $9$. We now insert $2+2\lceil \log_2 n\rceil$ levels between any two consecutive levels of $\Lambda$, thus turning $\Lambda$ into a  leveled planar drawing with span $O(\log n)$ such that every edge has span greater than or equal to $3+2\lceil \log_2 n\rceil$. We next show how to insert the vertices of the components of $G$ outside and inside $C$ in the levels of $\Lambda$, thus turning $\Lambda$ into a leveled planar drawing $\Gamma$ of $G$ with span $O(\log n)$.

We start with the components of $G$ outside $C$. Consider each of such components, $G^i_W$, and recall that $G^i_W$ is a maximal outerplanar graph such that an edge $(u_i,v_i)$ incident to the outer face of the outerplane embedding of $G^i_W$ is already drawn in $\Lambda$, while all the other vertices and edges of $G^i_W$ need to be added to $\Lambda$. Let $h_i\geq 3+2\lceil \log_2 n\rceil$ be the span of $(u_i,v_i)$ in $\Lambda$. Assume that $y(u_i)<y(v_i)$ and that the edge $(u_i,v_i)$ has the outer face of $\Lambda$ on the right, as the other cases are analogous. It follows from results by Biedl~\cite{DBLP:journals/dcg/Biedl11,DBLP:conf/gd/Biedl14} that $G^i_W$ admits a planar $y$-monotone poly-line grid drawing $\Gamma_i$ such that: 
\begin{itemize}
\item the height of $\Gamma_i$ (i.e., the number of grid rows intersected by $\Gamma_i$) is $h_i+1$; 
\item the $y$-coordinate of $u_i$ (resp.\ of $v_i$) is strictly smaller (resp.\ larger) than the $y$-coordinate of every other vertex of $G^i_W$; and 
\item the edge $(u_i,v_i)$ has the rest of $\Gamma_i$ to the right. 
\end{itemize}
Then interpreting the grid rows of $\Gamma_i$ as levels, we can plug $\Gamma_i$ into $\Lambda$, without crossing the rest of  $\Lambda$.

We next deal with the components of $G$ inside $C$. As proved previously, the vertices of $T$ with two neighbors in~$C$ induce a tree~$Y$, hence the components of~$G$ inside~$C$ are subtrees of~$T$. We are going to insert such components in $\Lambda$ one at a time, in any order. We show how to insert in $\Lambda$ any component $X$ of $G$ inside $C$. By construction, all the vertices of $X$ have at most one neighbor in $C$. This property can be refined as in the following two claims.
\begin{claim} \label{cl:one-to-one}
Every vertex of $X$ is adjacent to exactly one vertex of $C$.   
\end{claim}
\begin{nestedproof}
By construction, every vertex of $X$ is adjacent to at most one vertex of $C$. Suppose, for a contradiction, that a vertex $v$ of $X$ is not adjacent to any vertex of $C$. Let $f_X$ be any face of $\mathcal E$ vertex $v$ is incident to. Since $T$ is a tree, at least one vertex $v'$ of $C$ is incident to $f_X$. Then the edge $(v,v')$ can be inserted inside $f_X$, contradicting the fact that no edge can be added connecting a vertex of $W$ with a vertex of $T$ inside an internal face of $\mathcal E$. 
\end{nestedproof}
\begin{claim} \label{cl:all-the-same}
All the vertices of $X$ are adjacent to the same vertex $u$ of $C$.   
\end{claim}
\begin{figure}
    \centering
    \begin{subfigure}{0.48\textwidth}
    \centering
        \includegraphics[page=1]{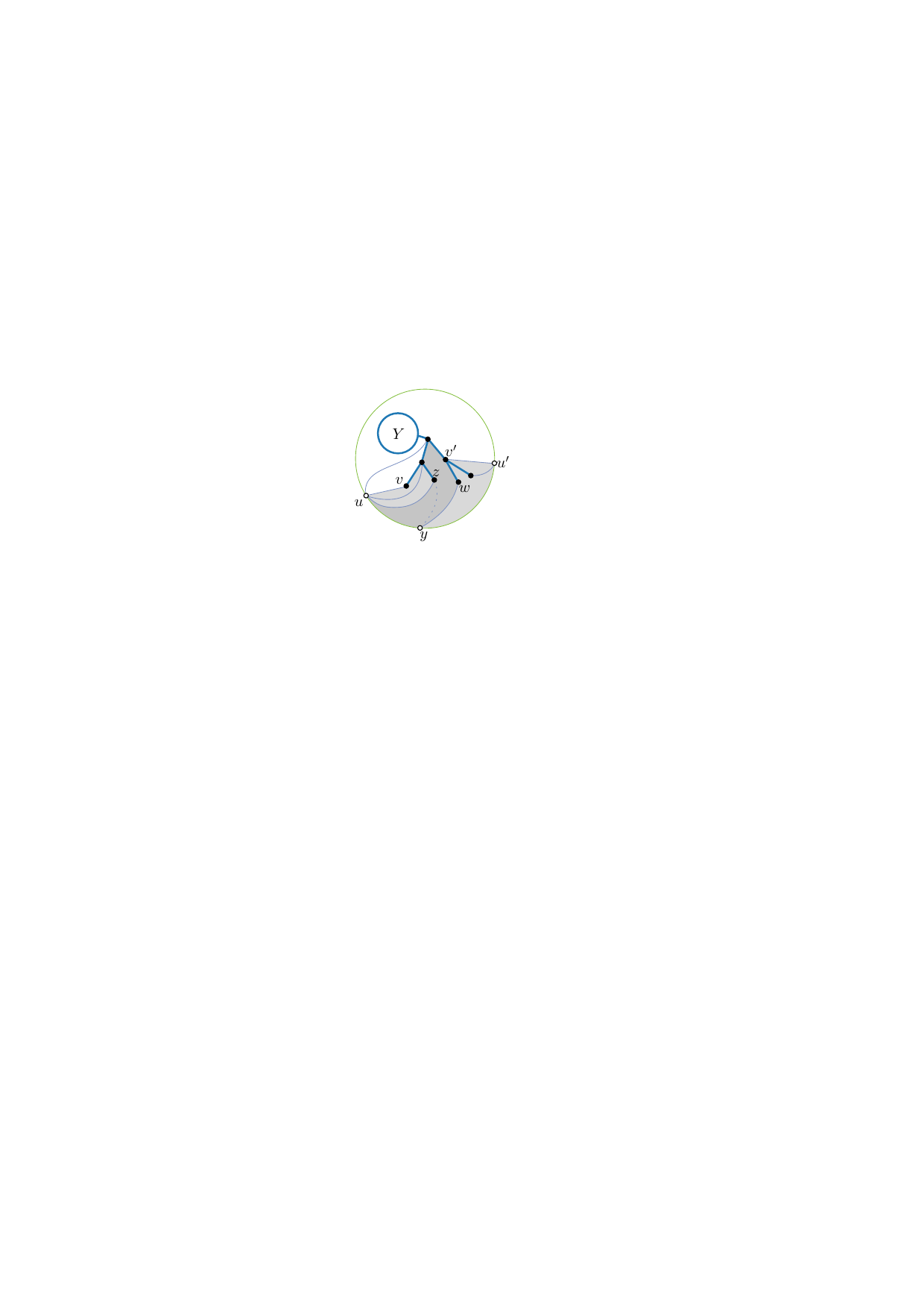}
        \subcaption{}
        \label{fig:insertion-a}
    \end{subfigure}
    \begin{subfigure}{0.48\textwidth}
    \centering
        \includegraphics[page=2]{Figures/Reinsertion.pdf}
        \subcaption{}
        \label{fig:insertion-b}
    \end{subfigure}
    \hfill
    \caption{(a) Illustration for \cref{cl:all-the-same}. The tree $X$ is represented by thick lines, the region $R$ is shaded gray, the face $f_X$ is shaded dark gray. (b) If the edge between the reference vertex $x$ and the base vertex $u$ of a component $X$ did not belong to $G$, then it could be added inside an internal face of $\mathcal E$. }
    \label{fig:insertion}
\end{figure}
\begin{nestedproof}
By \cref{cl:one-to-one}, each vertex of $X$ is adjacent to exactly one vertex of $C$. Suppose, for a contradiction, that two vertices $v$ and $v'$ of $X$ are neighbors of distinct vertices $u$ and $u'$ of $C$, respectively. Refer to \cref{fig:insertion-a}. Let $P$ be the path composed of the edges $(u,v)$, $(u',v')$, and of the path in $X$ between $v$ and $v'$. Consider the two bounded regions of the plane delimited by $C\cup P$. One of them, say $R$, does not contain vertices of $Y$. Let $(u,y)$ be the edge of $C$ on the boundary of $R$ and let $f$ be the face of $\mathcal E$ that is incident to $(u,y)$ and lies inside $R$. Since no vertex of $X$ is neighbor of more than one vertex of $C$, it follows that $R$ has at least two vertices of $X$ on its boundary, say $w$ and $z$. Assume, without loss of generality, that $u$, $y$, $w$ and $z$ appear in this circular order along the boundary of $f$. By the planarity of $\mathcal E$, at least one of the edges $(u,w)$ and $(y,z)$ does not belong to $G$, contradicting the maximality of the initial augmentation of $G$.
\end{nestedproof}
The vertex $u$ of \cref{cl:all-the-same} is called \emph{base vertex} of the component $X$. The next claim is about the relationship of $X$ with the rest of $T$.
\begin{claim} \label{cl:subtree-to-rest}
There exists a unique vertex $x$ in $T$ which does not belong to $X$ and that is a neighbor of a vertex in $X$. Moreover, $x$ belongs to $Y$.
\end{claim}   
\begin{nestedproof}
First, $T$ does not coincide with $X$ since the tree $Y$ contains vertices, by statement~(S1). Since $X$ is a subtree of $T$, there exists a unique vertex $x$ of $T$ that does not belong to~$X$ and that is a neighbor of a vertex in~$X$. If $x$ had at most one neighbor in $C$, it would belong to~$T$, hence it has at least two neighbors in $C$, and so it does belong to $Y$.
\end{nestedproof}
The vertex $x$ of \cref{cl:subtree-to-rest} is called \emph{reference vertex} of the component $X$. Further, the neighbor of $x$ in $X$ is called the \emph{attachment vertex} of $X$. Recall that $u$ is the base vertex of $X$. We have the following.
\begin{claim} \label{cl:connecting-edge}
The edge $(x,u)$ belongs to $G$.
\end{claim}   
\begin{nestedproof}
If the edge $(x,u)$ did not belong to $G$, then it could be added to $G$, right outside the subgraph of $G$ induced by $V(X)\cup \{x,u\}$; see \cref{fig:insertion-b}. However, this would contradict the fact that no edge can be added connecting a vertex of $W$ with a vertex of $T$ inside an internal face of $\mathcal E$. 
\end{nestedproof}

%%%%%% OLD STRONGER CLAIM
%\begin{claim} \label{}
%The edge $(x,u)$ belongs to $G$. Furthermore, let $S$ be the edge set that comprises $(x,u)$ and every edge that is incident to $x$ and to a vertex of component of $G$ inside $C$ whose base vertex is $u$. Then the edges in $S$ appear consecutively in the circular order of the edges incident to $x$ in $\mathcal E$. 
%\end{claim}   
%For the first part of the statement, if the edge $(x,u)$ did not belong to $G$, then it could be added to $G$, right outside the subgraph of $G$ induced by $V(X)\cup \{x,u\}$; see \cref{fig:insertion-b}. However, this would contradict the fact that no edge can be added connecting a vertex of $W$ with a vertex of $T$ inside an internal face of $\mathcal E$. 

%For the second part of the statement, suppose for a contradiction, that two edges $e_1=(x,y_1)$ and $e_2=(x,y_2)$ not in $S$ alternate in circular order around $x$ with two edges $e=(x,z_1)$ and $e'=(x,z_2)$ in $S$. If $z_1$ and $z_2$ are both different from $u$, then one of $e_1$ and $e_2$, say $e_1$, lies inside the $4$-cycle $(x,z_1,u,z_2)$. Analogously, if say $z_1=u$, then one of $e_1$ and $e_2$, say $e_1$, lies inside the $3$-cycle $(x,u,z_2)$. In both cases, it follows that the only neighbor in $C$ of $y_1$ is $u$. However, this would imply that $e_1$ belongs to $S$, a contradiction.

We are now ready to insert $X$ in $\Lambda$. Refer to \cref{fig:comp-x-a}. We root $X$ at its attachment vertex $a$. By \cref{cl:connecting-edge}, the edge $(u,x)$ is in $G$, hence it already belongs to $\Lambda$. Let $\ell$ be the leveling associated with $\Lambda$ and assume that $\ell(u)<\ell(x)$, as the other case is symmetric.  Recall that the span of $(u,x)$ is at least $3+2\lceil \log_2 n\rceil$, hence there are more than $1+\lceil \log_2 n\rceil$ levels larger than $\ell(u)$ and smaller than $\ell(x)$. We label $\ell_1,\dots,\ell_k$ such levels from the highest (the level $\ell(x)-1$) to the lowest (the level $\ell(u)+1$). We draw a $y$-monotone curve $\gamma_{xu}$ from $x$ to $u$ immediately to the right of the edge $(u,x)$. That is, the intersections of $\gamma_{xu}$ with $\ell_1,\dots,\ell_k$ are immediately to the right of the intersections of the edge $(u,x)$ with such lines; further, $\gamma_{xu}$ enters $u$ and $x$ immediately to the right of the edge $(u,x)$. Let $\mathcal R_{xu}$ be the region delimited by the edge $(u,x)$ and by $\gamma_{xu}$. The drawing of $X$ and of the edges connecting the vertices of $X$ with $u$ and $x$ will be contained in the interior of $\mathcal R_{xu}$. This, together with the planarity of such a drawing, implies the planarity of $\Lambda$ after the insertion of $X$. 

We insert $X$ in $\Lambda$ by means of an algorithm derived from a well-known algorithm for constructing planar straight-line grid drawings of trees with height at most $1+\lfloor \log_2 n\rfloor$ \cite{DBLP:journals/dcg/Chan20,DBLP:journals/comgeo/CrescenziBP92,s-lpag-76}.
\begin{figure}
    \centering
    \begin{subfigure}{0.48\textwidth}
    \centering
        \includegraphics[scale=.7,page=3]{Figures/Reinsertion.pdf}
        \subcaption{}
        \label{fig:comp-x-a}
    \end{subfigure}
    \begin{subfigure}{0.48\textwidth}
    \centering
        \includegraphics[scale=.7,page=4]{Figures/Reinsertion.pdf}
        \subcaption{}
        \label{fig:comp-x-b}
    \end{subfigure}
    \hfill
    \caption{(a) Start of the construction of the drawing of $X$. Region $\mathcal R_{xu}$ is shaded light and dark gray, region $\mathcal R_{au}$ is shaded dark gray only. (b) Construction of the drawing of $Z$. Region $\mathcal R_{zu}$ is shaded light and dark gray, regions $\mathcal R_{z_iu}$ are shaded dark gray only.}
    \label{fig:comp-x}
\end{figure}
We start by placing $a$ on $\ell_1$ in the interior of $\mathcal R_{xu}$. We draw the edges $(a,x)$ and  $(a,u)$ monotonically in the interior of $\mathcal R_{xu}$. We also draw a curve $\gamma_{au}$ from $a$ to $u$ as follows. The curve $\gamma_{au}$ leaves $a$ horizontally towards the right, up to a point still inside $\mathcal R_{xu}$. Then it proceeds monotonically to $u$ inside $\mathcal R_{xu}$ to the right of the edge $(a,u)$. Let $\mathcal R_{au}$ be the region delimited by the edge $(a,u)$ and by $\gamma_{au}$. The drawing of $X$ and of the edges connecting the vertices of $X$ with $u$ will be contained in (the interior or on the boundary of) $\mathcal R_{au}$. This ensures that the drawing of $X$ and of the edges connecting the vertices of $X$ with $u$ and $x$ is contained in the interior of $\mathcal R_{xu}$.

The algorithm now works recursively. Assume that some tree $Z$ has to be drawn in $\Lambda$, so that the root $z$ of $Z$ is already drawn in $\Lambda$, together with the edge $(z,u)$. Assume also that a region $\mathcal R_{zu}$ has been defined that is delimited by the edge $(z,u)$ on the left and whose remaining boundary consists of a horizontal segment $s_z$ that leaves $z$ towards the right and of a $y$-monotone curve between $u$ and the endpoint of $s_z$ different from $z$. These hypotheses are initially satisfied with $Z=X$, $z=a$, and $\mathcal R_{zu}=\mathcal R_{au}$. We show how to draw $Z$ together with the edges connecting the vertices of $Z$ with $u$ inside or on the boundary of $\mathcal R_{zu}$.

If $z$ is a leaf, then there is nothing to be drawn. Otherwise, let $z_1,\dots,z_p$ be children of $z$ and let $Z_1,\dots,Z_p$ be the subtrees of $Z$ rooted at $z_1,\dots,z_p$, respectively. Assume, without loss of generality, that the number of vertices in $Z_p$ is not smaller than the number of vertices in $Z_i$, for $i=1,\dots,p-1$. We proceed as follows; refer to \cref{fig:comp-x-b}. 

We place $z_1$ on the level $\ell(z)-1$ in the interior of $\mathcal R_{zu}$. Then we draw the edges $(z,z_1)$ 
and $(z_1,u)$ monotonically. We also draw a curve $\gamma_{z_1u}$ from $z_1$ to $u$ as follows. The curve~$\gamma_{z_1u}$ leaves $z_1$ horizontally towards the right, up to a point still inside $\mathcal R_{zu}$. Then it proceeds monotonically to $u$ inside $\mathcal R_{zu}$ to the right of the edge $(z_1,u)$. Let $\mathcal R_{z_1u}$ be the region delimited by the edge $(z_1,u)$ and by $\gamma_{z_1u}$. A drawing of $Z_1$ and of the edges connecting the vertices of $Z_1$ with $u$ is recursively constructed in $\mathcal R_{z_1u}$. This ensures that such a drawing is contained in the interior of $\mathcal R_{zu}$. 
 
For $i=2,\dots,p-1$, we place $z_i$ on the level $\ell(z)-1$, in the interior of $\mathcal R_{zu}$, to the right of~$\gamma_{z_{i-1}u}$. Then we draw the edge $(z,z_i)$ monotonically, in the interior of $\mathcal R_{zu}$, and to the right of the edge $(z,z_{i-1})$. We also draw the edge $(z_i,u)$ monotonically, in the interior of $\mathcal R_{zu}$, and to the right of $\gamma_{z_{i-1}u}$. We then draw a curve $\gamma_{z_iu}$ from $z_i$ to $u$, similarly as described previously for $\gamma_{z_1u}$. Let $\mathcal R_{z_iu}$ be the region delimited by the edge $(z_i,u)$ and by $\gamma_{z_iu}$. A drawing of $Z_i$ and of the edges connecting the vertices of $Z_i$ with $u$ is recursively constructed in $\mathcal R_{z_iu}$. This ensures that such a drawing is contained in the interior of $\mathcal R_{zu}$. 

Finally, we place $z_p$ on the level $\ell(z)$, in the interior of the segment $s_z$. We draw the edge $(z,z_p)$ as a horizontal segment that is part of $s_z$. We also draw the edge $(z_p,u)$ monotonically, in the interior of $\mathcal R_{zu}$, and to the right of $\gamma_{z_{p-1}u}$ (the last condition is vacuous if $p=1$), similarly as done for the edges $(z_i,u)$ with $i=1,\dots,p-1$. We also draw a curve $\gamma_{z_pu}$ from~$z_p$ to $u$ as follows. The curve $\gamma_{z_pu}$ leaves~$z_p$ horizontally towards the right, up to a point still in the interior of $s_z$. Then it proceeds monotonically to $u$ inside $\mathcal R_{zu}$ to the right of the edge $(z_p,u)$. Let $\mathcal R_{z_pu}$ be the region delimited by the edge $(z_p,u)$ and by $\gamma_{z_pu}$. A drawing of~$Z_p$ and of the edges connecting the vertices of~$Z_p$ with $u$ is recursively constructed in the interior or on the boundary of $\mathcal R_{z_pu}$. This ensures that such a drawing is contained in the interior of $\mathcal R_{zu}$. 

Notice that the regions $\mathcal R_{z_1u},\dots,\mathcal R_{z_pu}$ are disjoint, except at $u$, by construction, hence drawings of distinct subtrees of $z$ do not cross one another. It remains to observe that the levels $\ell_1,\dots\ell_k$ are sufficiently many for the recursion to work. This comes from an analysis similar to the one of the height of the cited planar straight-line grid drawings of trees \cite{DBLP:journals/dcg/Chan20,DBLP:journals/comgeo/CrescenziBP92,s-lpag-76}. Indeed, adopting the previous notation, the number of levels required for drawing $Z$ is the maximum between the number of levels required for drawing $Z_p$, which contains at least one vertex less than $Z$, and $1$ plus the number of levels required for drawing any of $Z_1,\dots,Z_{p-1}$. However, because $Z_p$ contains at least as many vertices as any of $Z_1,\dots,Z_{p-1}$, each of such subtrees has less than half of the vertices of $Z$. Hence the number $l(n)$ of levels required for drawing an $n$-vertex tree satisfies the recurrence relationship $l(n)=\max\{l(n-1),1+l(\frac{n-1}{2})\}$, which solves to $l(n)\leq 1+\log_2(n)$. The proof is concluded by recalling that there are more than $1+\lceil \log_2 n\rceil$ levels larger than $\ell(u)$ and smaller than~$\ell(x)$.
\end{proof}

\begin{figure}
    \centering
     \includegraphics[]{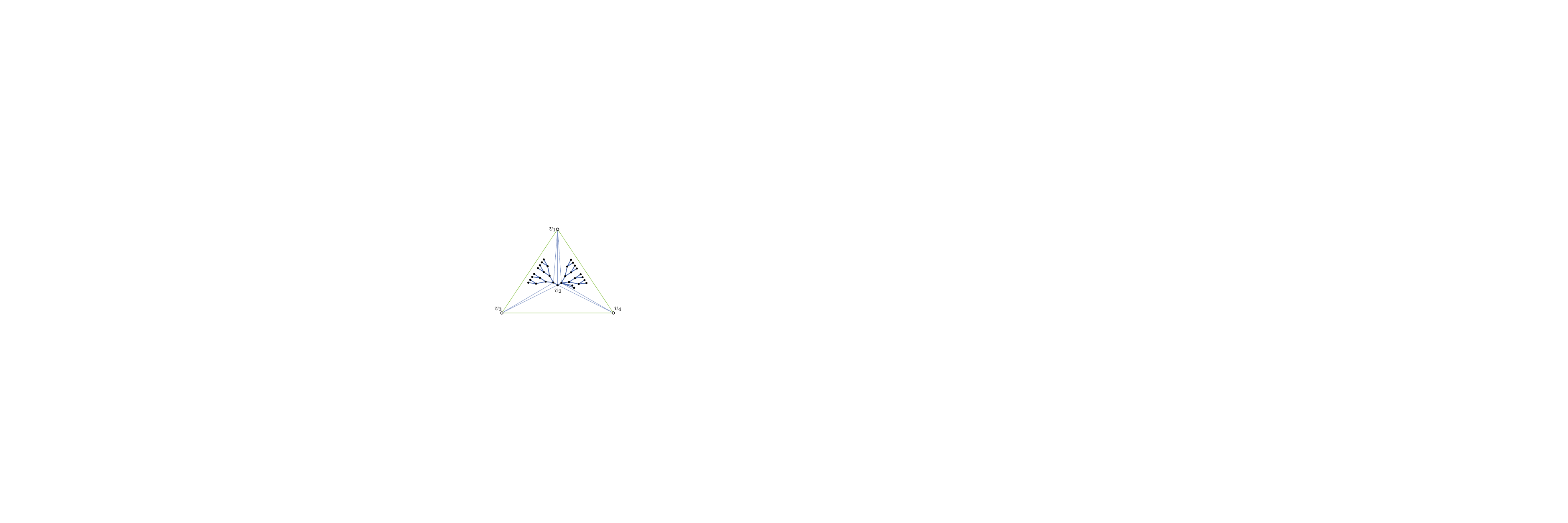}
    \caption{An $n$-vertex cycle-tree requiring $\Omega(\log n)$ span in every weakly leveled planar drawing. The graph resulting from the removal of the vertices incident to the outer face is drawn bold.}
    \label{fig:lower-cycle-tree}
\end{figure}

\begin{theorem} \label{lem:lower-cycle-tree-general}
There exists an $n$-vertex cycle-tree $G$ such that every weakly leveled planar drawing of $G$ has span in $\Omega(\log n)$.
\end{theorem}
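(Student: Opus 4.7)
The plan is to exhibit a family of cycle-trees $\{G_d\}_{d\ge 1}$ where $G_d$ has $\Theta(2^d)$ vertices and show, by induction on $d$, that every weakly leveled planar drawing of $G_d$ has span at least $d$. Setting $d=\Theta(\log n)$ then gives the claim.

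For the construction, I would take the inner tree of $G_d$ to be a complete binary tree $T_d$ of depth $d$ with root $r_d$, and build the outer cycle $C_d$ so that the cyclic order of its vertices around $T_d$ is forced (in every planar embedding of $G_d$) to be the DFS order around $T_d$. Concretely, for every internal node $v$ of $T_d$ with children $v_1,v_2$ I add an outer ``separator'' vertex $c_v$ adjacent to $v$, inserted between the outer arc associated with $T(v_1)$ and the outer arc associated with $T(v_2)$; and for every leaf $\ell$ of $T_d$ I add a pair of consecutive outer vertices both adjacent to $\ell$. These attachment edges, together with the cycle $C_d$, pin down the planar embedding up to reflection, so that $r_d$ together with $c_{r_d}$ acts as a separating pair whose removal splits $G_d$ into two subgraphs, each isomorphic to a copy of~$G_{d-1}$ (sharing the analogous root-plus-separator pair with the parent).

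For the lower bound, the induction base $d=1$ is a direct inspection of a constant-size graph whose span is at least~$1$. For the inductive step, fix any weakly leveled planar drawing $\Gamma$ of $G_d$ with leveling $\ell$. Using the separating pair $\{r_d,c_{r_d}\}$ and \Cref{lem:split-component-between}, the two $G_{d-1}$-subgraphs $G^1_{d-1}$ and $G^2_{d-1}$ hanging at $\{r_d,c_{r_d}\}$ are drawn on opposite sides of that pair in $\Gamma$, and each of their restrictions is itself a weakly leveled planar drawing of $G_{d-1}$. By induction, both of these restricted drawings have span at least~$d-1$. The crux of the step is to show that their level ranges cannot coincide perfectly: the edges from $r_d$ and $c_{r_d}$ into the two sides, together with the forced embedding, imply that at least one of the two sub-drawings must use a level strictly above or strictly below some edge on the other side, boosting the overall span of $\Gamma$ by at least one beyond $\mathrm{span}(\Gamma|_{G^i_{d-1}})\ge d-1$. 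Combining these ingredients yields $\mathrm{span}(\Gamma)\ge d$.

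The main obstacle is the last step: ruling out a drawing in which the two copies of $G_{d-1}$ share precisely the same range of levels and thus fail to contribute the extra $+1$. I would handle this by exploiting the rigidity of the outer cycle $C_d$: because $c_{r_d}$ must lie on $C_d$ and $C_d$ bounds the outer face, the two sub-arcs of $C_d$ corresponding to $G^1_{d-1}$ and $G^2_{d-1}$ cannot both be ``flat'' across $\ell(r_d)$; at least one of them must escape the level strip of the other, forcing the extra level. A secondary obstacle is verifying that my construction is indeed a cycle-tree (i.e., that removing the outer vertices leaves a tree and that it embeds as a $2$-outerplane graph); this should follow directly from the inductive definition and the placement of the separator vertices $c_v$.
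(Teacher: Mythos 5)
There is a genuine gap in the inductive step, and it is precisely the obstacle you flag yourself. Your induction is on the \emph{span}, but span---unlike height---does not telescope across nested structures: even if the restricted drawings of $G^1_{d-1}$ and $G^2_{d-1}$ each contain an edge of span $d-1$, nothing forces any \emph{single} edge of $G_d$ to touch $d+1$ levels. Showing that one sub-drawing ``escapes the level strip'' of the other only increases the \emph{height} of $\Gamma$; to increase the span you would need an edge that is forced to stretch across the entire enlarged strip, and your construction has no such edge: the outer cycle $C_d$ has $\Theta(2^d)$ vertices and can climb one level per edge, the attachment edges $(v,c_v)$ can be short, and the pair $\{r_d,c_{r_d}\}$ is connected to the two copies only through long paths. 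Relatedly, your appeal to \Cref{lem:split-component-between} is vacuous here: that lemma needs $8t$ split components at the pair to conclude anything with parameter $t$, and $\{r_d,c_{r_d}\}$ has only two or three split components, so it yields $t=0$.

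The paper avoids any such induction by building in a constant-length enclosing cycle that performs the height-to-span conversion in one shot. Its graph is a $K_4$ on $\{v_1,\dots,v_4\}$ with a complete binary tree attached by its root to each of the triangles $(v_1,v_2,v_3)$ and $(v_1,v_2,v_4)$; in any planar drawing at least one of the trees lies inside its triangle. Any weakly leveled planar drawing of a complete binary tree on $k$ nodes has height $\Omega(\log k)$: by \Cref{le:weak-nonweak} it yields a leveled planar drawing of comparable height, which upper-bounds the pathwidth, and complete binary trees have pathwidth $\Omega(\log k)$. A triangle enclosing a drawing of height $h$ must itself extend over at least $h$ levels, and a triangle always has an edge whose span equals its full vertical extent, so an edge of span $\Omega(\log n)$ follows. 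To salvage your construction you would need either to add such short enclosing cycles, or to strengthen the inductive invariant to a quantity that genuinely accumulates (e.g.\ the vertical extent between two designated adjacent vertices), converting to span only at the very end.
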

\begin{proof}
The $n$-vertex cycle-tree $G$ proving the lower bound is constructed as follows; see~\cref{fig:lower-cycle-tree}. Starting from a complete graph on four vertices $\{v_1, v_2, v_3, v_4\}$, insert two $k$-vertex complete binary trees $T$ and $T'$, where $k+1$ is the largest power of two such that $k\leq \frac{n-4}{2}$. Add $n-2k-4$ further leaves adjacent to the root of $T'$, so that $G$ has $n$ vertices. Finally, connect the root of $T$ to $v_1$, $v_2$, and $v_3$, and the root of $T'$ to $v_1$, $v_2$, and $v_4$. 

In any planar drawing of $G$, we have that $T$ lies inside the cycle $(v_1,v_2,v_3)$, or $T'$ lies inside the cycle $(v_1,v_2,v_4)$, or both. Hence, it suffices to prove that any weakly leveled planar drawing of $T$ (and $T'$) has {\em height} $\Omega(\log n)$, in order to prove that any weakly leveled planar drawing of $G$ has {\em span} $\Omega(\log n)$; indeed, if, say, $T$ lies inside the cycle $(v_1,v_2,v_3)$, then at least two edges of this cycle have span $\Omega(\log n)$.

Suppose that $T$ has a weakly leveled planar drawing with height $h$, for some $h>0$. Then it admits a leveled planar drawing with height at most $2h+1$, by \cref{le:weak-nonweak}. From this, it follows that $T$ has pathwith at most $2h+1$~\cite{DBLP:journals/algorithmica/DujmovicFKLMNRRWW08}. However, the pathwidth of $T$ is in $\Omega(\log n)$, since $T$ has $k\in \Omega(n)$ nodes and a complete binary tree with $k$ nodes has pathwidth $\Omega(\log k)$; see~\cite{DBLP:journals/tcs/Bodlaender98,scheffler}. It follows that $h\in \Omega(\log n)$, as desired. 
\end{proof}

    \subsection{Planar Graphs with Treewidth 2}
In this section, we show that sub-linear span can be achieved for planar graphs with treewidth~2. 
Note that this is not possible for planar graphs of larger treewidth, as the graph in \cref{fig:lower-2Outerplanar} has treewidth three and requires span $\Omega(n)$.

\begin{theorem}\label{lem:series-parallel-upper-bound}
Let $G$ be an $n$-vertex planar graph with treewidth 2. Then $G$ has an $s$-span weakly leveled planar drawing such that $s \in O(\sqrt n)$.
\end{theorem}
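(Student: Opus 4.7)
The plan is to decompose $G$ into its biconnected components, draw each component (which is a $2$-connected series-parallel graph) with span $O(\sqrt{m})$ where $m$ is the component's size, and combine them at cut vertices with only modest overhead using techniques analogous to those in the proof of \cref{lem:upper-general-cycle-trees}. Since biconnected components of a treewidth-$2$ graph are themselves treewidth-$2$ and $2$-connected, they are exactly $2$-connected series-parallel graphs, so the bulk of the work is showing an $O(\sqrt n)$-span drawing for this restricted class.

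For a $2$-connected series-parallel graph $H$ with $n$ vertices and poles $s, t$, I would prove by induction on the SPQ-tree of $H$ the following strengthened invariant: $H$ admits a weakly leveled planar drawing with $\ell(s) = 0$, $\ell(t) = 1$, span at most $C\sqrt{n}$ for some universal constant $C$, and all vertices at levels in $[0, C\sqrt{n}]$ (symmetrically, the reflected drawing has all vertices in $[-C\sqrt{n}+1, 1]$). The base case is a Q-node (single edge). For a P-composition $H = H_1 \parallel \cdots \parallel H_k$, I would apply induction to each $H_i$ with poles at levels $0$ and $1$, and place the drawings side-by-side with the two shared poles identified; since $n_i \leq n$, all vertices remain in $[0, C\sqrt{n}]$.

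The main challenge lies in the S-node case: $H$ is the series composition of $H_1, \ldots, H_k$ along a spine $s = v_0, v_1, \ldots, v_k = t$. The key idea is to place the spine vertices at alternating levels $0, 1, 0, 1, \ldots$, so that each $H_i$ has its poles on two adjacent levels. By induction, each $H_i$ admits a drawing of span at most $C\sqrt{n_i}$ with its poles at levels $0$ and $1$ and all other vertices in $[0, C\sqrt{n_i}]$. By possibly applying the reflected form of the inductive drawing, we arrange all $H_i$'s to bulge on the same side of the zig-zag spine, so that each contributes vertices only to $[0, C\sqrt{n_i}] \subseteq [0, C\sqrt{n}]$; since consecutive subgraphs meet only at a spine cut vertex and occupy disjoint horizontal regions separated by spine segments, planarity is preserved.

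The main obstacle will be verifying that the planar embedding of $H$ is compatible with placing every $H_i$ on the same side of the spine, and that the horizontal coordinates of the $H_i$'s drawings can be compressed into the strip between $v_{i-1}$ and $v_i$ without violating \cref{obs:gioproperty}. Since we are free to choose among the planar embeddings compatible with the SPQ-tree, and since each $H_i$'s inductive drawing can be reflected vertically, we can arrange the sides as needed; the horizontal compressibility follows from the fact that weakly level planar drawings admit arbitrary $x$-coordinate rescalings within each level. Finally, combining biconnected blocks at cut vertices introduces only constant overhead per block using the insertion techniques from \cref{lem:upper-general-cycle-trees}, so the overall bound of $O(\sqrt{n})$ on the span is preserved for~$G$.
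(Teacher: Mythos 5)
Your plan is not a correct alternative route; it has a fatal gap in the core induction. For reference, the paper proves this theorem in one line, by invoking Biedl's result that every $n$-vertex planar graph of treewidth~$2$ has a planar $y$-monotone poly-line grid drawing of height $O(\sqrt n)$ and reading the grid rows as levels. You instead try to reprove the height bound from scratch by induction on the SPQ-tree, and your strengthened invariant --- poles always on \emph{adjacent} levels, with all remaining vertices bulging to one side --- is false. Take $H=K_{2,k}$ with poles $s,t$, i.e., a P-node with $k$ children. If $\ell(s)=0$ and $\ell(t)=1$, then no level can carry three of the $k$ degree-$2$ vertices (edges from $s$ and $t$ to three vertices on a common level force a crossing, exactly as in the paper's Property~P2 argument inside \cref{cl:uv-long-span}), and each such vertex must lie within the span bound of both poles, so a drawing of span $\sigma$ accommodates at most $O(\sigma)$ of them; hence $\sigma=\Omega(k)=\Omega(n)$ once the poles are pinned one level apart, even though $K_{2,k}$ itself is $1$-span leveled planar when the poles are allowed to sit two levels apart. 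Concretely, your P-step ``place the drawings side-by-side with the shared poles identified'' is not planar for $k\ge 3$: with the poles on adjacent levels, only the leftmost and rightmost children can escape past the poles' band; every other child is trapped between two siblings and must be nested inside them, which is precisely what makes the height grow.

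A sanity check exposes the same problem from the other direction: in your recursion both the S-step and the P-step produce a drawing whose set of occupied levels is the union (hence the maximum) of those of the children, and the base case occupies two levels, so the induction, were it valid, would prove a \emph{constant} span bound for every $2$-connected series-parallel graph. That contradicts the paper's own lower bound (\cref{lem:series-parallel-lower-bound}) and the $2^{\Omega(\sqrt{\log n})}$ height lower bound of Frati on which it rests. The entire difficulty of the theorem is deciding how many levels to place between the two poles of each P-node so that its many, possibly unbalanced, children can be nested within one another while keeping the total height at $O(\sqrt n)$; this global budgeting is exactly what Biedl's cited construction does and exactly what your invariant forbids. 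The outer reduction to biconnected components and the appeal to the insertion machinery of \cref{lem:upper-general-cycle-trees} are also only sketched (that machinery is tailored to reinserting trees and outerplanar pieces into cycle-trees, and it costs logarithmically, not constantly, many extra levels), but these issues are secondary to the failure of the core induction.
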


\begin{proof}
Biedl~\cite{DBLP:journals/dcg/Biedl11} proved that every $n$-vertex planar graph $G$ with treewidth 2 admits a planar $y$-monotone grid drawing $\Gamma$ with $O(\sqrt n)$ {\em height}, that is, the drawing touches $O(\sqrt n)$ horizontal grid lines. Interpreting the placement of the vertices along these lines as a leveling shows that $G$ admits a leveled planar drawing $\Gamma$ with height, and hence span, $O(\sqrt n)$.
\end{proof}

Since graphs that admit $s$-span weakly leveled planar drawings have planar edge-length ratio at most $2s+1$~\cite[Lemma~4]{digiacomo2023new}, we obtain the following result a corollary of~\cref{lem:series-parallel-upper-bound}, improving upon a previous $O(n^{0.695})$ bound by Borrazzo and Frati~\cite{DBLP:journals/jocg/BorrazzoF20}

\begin{corollary} \label{cor:series-edgelength}
Treewidth-2 graphs with $n$ vertices have planar edge-length ratio $O(\sqrt n)$.  
\end{corollary}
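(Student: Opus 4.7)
The plan is to chain two ingredients that are already in hand. By \cref{lem:series-parallel-upper-bound}, any $n$-vertex planar graph $G$ of treewidth $2$ admits a weakly leveled planar drawing whose span is $s\in O(\sqrt n)$. By Lemma~4 of~\cite{digiacomo2023new} (which the paper has already cited for this purpose), every graph that admits an $s$-span weakly leveled planar drawing has planar edge-length ratio at most $2s+1$. Composing these two facts gives a planar straight-line drawing of $G$ whose edge-length ratio is at most $2s+1 = O(\sqrt n)$, which is exactly the claim.

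The only thing to be careful about is that the ingredient from~\cite{digiacomo2023new} outputs a planar straight-line drawing, so no further realizability argument is needed; the $y$-monotone polyline drawing produced by \cref{lem:series-parallel-upper-bound} is merely the input to that result, not the final drawing being measured. Hence the proof is a one-line composition, and there is no genuine obstacle; the only thing worth mentioning is the numerical substitution $s\in O(\sqrt n)\Rightarrow 2s+1\in O(\sqrt n)$ and the comparison against the previous $O(n^{0.695})$ bound of~\cite{DBLP:journals/jocg/BorrazzoF20} to highlight the improvement. Accordingly, I would write the proof as a two-sentence deduction: invoke \cref{lem:series-parallel-upper-bound} to get a weakly leveled planar drawing of span $O(\sqrt n)$, then invoke \cite[Lemma~4]{digiacomo2023new} to convert this span bound into an edge-length ratio bound of the same asymptotic order.
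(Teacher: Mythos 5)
Your proposal is correct and is exactly the paper's argument: the paper obtains \cref{cor:series-edgelength} by combining the $O(\sqrt n)$ span bound of \cref{lem:series-parallel-upper-bound} with \cite[Lemma~4]{digiacomo2023new}, which converts an $s$-span weakly leveled planar drawing into a planar straight-line drawing of edge-length ratio at most $2s+1$. Nothing further is needed.
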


The following lower bound uses a construction by Frati~\cite{DBLP:conf/gd/Frati10,DBLP:journals/dmtcs/Frati10}. Note that $2^{\Omega(\sqrt{\log n})}$ is larger than any poly-logarithmic function of $n$, but smaller than any polynomial function of~$n$.

\begin{theorem} \label{lem:series-parallel-lower-bound}
There exists an $n$-vertex planar graph $G$ with treewidth 2 such that every weakly leveled planar drawing of $G$ has span in $2^{\Omega(\sqrt{\log n})}$.
\end{theorem}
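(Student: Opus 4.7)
The plan is to exploit the recursive $2$-tree construction of Frati~\cite{DBLP:conf/gd/Frati10,DBLP:journals/dmtcs/Frati10}, which yields a family $\{F_k\}_{k\ge 0}$ of $2$-trees with $n_k=|V(F_k)|$ growing so that $k = \Theta(\sqrt{\log n_k})$, and with the property that in every planar embedding of $F_k$ there are $\Omega(k)$ strictly nested cycles $C_1 \supset C_2 \supset \cdots \supset C_k$, each $C_{i+1}$ enclosing a sub-copy of the subsequent recursive layer. I would adapt Frati's argument, originally used to lower-bound the area of straight-line planar drawings, so that it instead produces a span lower bound for weakly leveled planar drawings.

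The core of the proof would be an inductive claim: there is an absolute constant $c>1$ such that every weakly leveled planar drawing of $F_k$ has span at least $c^k$. The inductive step is inspired by the reasoning used in the proof of \cref{thm:2-outerplanar}, namely that an enclosing cycle must span strictly more levels than any edge it properly encloses. However, naively that argument only adds one to the span per nesting level and would yield an $\Omega(k) = \Omega(\sqrt{\log n})$ span lower bound rather than the desired $2^{\Omega(\sqrt{\log n})}$ one. To amplify the additive gain into a multiplicative one, I would use the fact that Frati's construction places many parallel copies of $F_{k-1}$ inside each layer: a pigeonhole argument over these parallel copies picks one whose drawing is confined to a small sub-range of levels, so that the cycle enclosing that copy must span a constant factor more levels than the sub-range, not merely one level more.

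Unrolling the recurrence yields span at least $c^k = 2^{\Omega(k)} = 2^{\Omega(\sqrt{\log n})}$, matching the statement. The main obstacle I anticipate is making the multiplicative step rigorous: one must show not only that some parallel sub-copy of $F_{k-1}$ uses only an $O(1/c)$ fraction of the available span budget (the pigeonhole selection), but also that the restriction of the weakly leveled planar drawing to that sub-copy is itself a valid weakly leveled planar drawing of $F_{k-1}$ which inherits the structural features needed to re-apply the inductive hypothesis. The precise way the nested layers of Frati's construction attach along shared edges will determine how cleanly this transfer goes through, and is where the most delicate combinatorial work will lie.
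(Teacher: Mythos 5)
You have correctly identified Frati's series-parallel construction as the engine of the lower bound, but your plan diverges from the paper's proof in a way that leaves the hardest step unproven. The paper does not adapt Frati's induction at all: it uses his theorem as a black box --- there is an $m$-vertex treewidth-$2$ planar graph $H_m$ whose every planar $y$-monotone polyline grid drawing has \emph{height} $2^{\Omega(\sqrt{\log m})}$ --- and converts height into span with a small wrapper gadget. Concretely, $G$ is a $K_{2,3}$ with parts $\{u_1,u_2\}$ and $\{v_1,v_2,v_3\}$, with a copy of $H_k$ glued to each $v_i$; in any planar drawing one of the three copies is enclosed by a $4$-cycle $(u_1,v_j,u_2,v_\ell)$, so an edge of that cycle must span as many levels as the enclosed copy occupies, and \cref{le:weak-nonweak} bridges from weakly leveled drawings to the $y$-monotone grid drawings for which Frati's bound is stated, losing only a constant factor in height. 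This is the same scheme used for \cref{lem:lower-cycle-tree-general}, and it reduces the whole theorem to a few lines.

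The genuine gap in your proposal is the multiplicative amplification step, and the sketch you give of it does not go through as stated. Your inductive hypothesis controls the \emph{span} of each sub-copy of $F_{k-1}$; since restricting a weakly leveled planar drawing to a subgraph never increases the span, this by itself yields only $\mathrm{span}(F_k)\ge c^{k-1}$, so the extra factor $c$ must come entirely from the interaction between the parallel copies. But those copies all share the two poles of the composition, so their level-ranges are nested (all containing the poles' levels) rather than disjoint; consequently ``some copy is confined to an $O(1/c)$ fraction of the available levels'' is not a pigeonhole consequence, and the enclosing-cycle argument you borrow from \cref{thm:2-outerplanar} gains only one level per nesting --- an additive $+1$, not a multiplicative $c$. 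Turning this additive gain into a multiplicative one is precisely the content of Frati's height lower bound, which your plan would in effect have to re-prove in the weakly leveled setting; you explicitly defer this, so the proof as proposed is incomplete. The black-box reduction above sidesteps the difficulty entirely.
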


\begin{proof}
Frati~\cite{DBLP:conf/gd/Frati10,DBLP:journals/dmtcs/Frati10} proved that there is an $m$-vertex planar graph $H_m$ with treewidth 2 requiring $2^{\Omega(\sqrt{\log m})}$ height in every planar $y$-monotone (in fact, in every poly-line) grid drawing. Our lower bound graph $G$ is constructed from a complete bipartite graph $K_{2,3}$ with vertex families $\{u_1,u_2\}$ and $\{v_1,v_2,v_3\}$, by inserting three copies $H^1_k$, $H^2_k$, and $H^3_k$ of $H_k$, with $k=\lfloor \frac{n-2}{3} \rfloor$, so that a vertex of $H^i_k$ is identified with $v_i$, for $i=1,2,3$. If $n\not\equiv 2 \bmod 3$, one or two further vertices can be made adjacent to any vertex of $G$, so that it has $n$ vertices.

In any weakly leveled planar drawing $\Gamma$ of $G$, for some $i,j,\ell$ such that $\{i,j,\ell\}=\{1,2,3\}$, we have that $H^i_k$ lies inside the $4$-cycle $(u_1,v_j,u_2,v_\ell)$. Hence, it suffices to prove that the drawing $\Gamma^i$ of $H^i_k$ in $\Gamma$ has {\em height} $2^{\Omega(\sqrt{\log n})}$, in order to prove that $\Gamma$ has {\em span} $2^{\Omega(\sqrt{\log n})}$, as the former implies that one of the edges of the $4$-cycle $(u_1,v_j,u_2,v_\ell)$ has span $2^{\Omega(\sqrt{\log n})}$. 

Suppose that $\Gamma^i$ has height $h$. By~\cref{le:weak-nonweak}, there exists a leveled planar drawing of $H^i_k$ with height at most $2h$. By the results of Biedl~\cite{DBLP:conf/gd/Biedl14,DBLP:journals/dcg/Biedl11}, we have that $H^i_k$ admits a planar $y$-monotone grid drawing with height at most $2h$. Frati's lower bound~\cite{DBLP:journals/dmtcs/Frati10} then implies that $h\in 2^{\Omega(\sqrt{\log k})}$. The proof is completed by observing that $k\in \Omega(n)$.   
\end{proof}

\section {Conclusions and Open Problems}\label{se:conclusion}
% \section {Open Problems}\label{se:conclusion}
We studied $s$-span weakly leveled planar drawings 
both 
from an algorithmic and a combinatorial perspective. 
We proved the para-NP-hardness of the \wlpF{s} problem with respect to the parameter $s$ and considered its parameterized complexity with respect to structural graph parameters, as the vertex cover number and the treedepth. 
On the way of proving our result for the treedepth, we looked at problem instances with a bounded-size modulator to bounded-size components. 
Since outerplanar and Halin graphs are known to have $1$-span weakly leveled planar drawings, we also considered graph families that have outerplanarity $2$ or treewidth $2$. We proved a linear lower bound for $2$-outerplanar graphs and tight bounds for cycle-trees. 
We also gave upper and lower bounds on the span of weakly leveled planar drawings of graphs with treewidth at most $2$.
%
% Our research raises several open problems. 
We conclude by listing natural open problems arising from~our~research: 
\begin{itemize}
    \item %It would be interesting to d
~Does \wlpF{s} have a kernel of polynomial size when parameterized by the treedepth? Is the problem FPT with respect to the treewidth?
%    \item 
\item \cref{lem:series-parallel-upper-bound,lem:series-parallel-lower-bound} show a gap between the upper and lower bounds in the span for the family of 2-trees. It would be interesting to reduce and possibly close this gap.
%    \item As shown in our combinatorial results narrow the gap between the lower bound of $\Omega(\log n)$~\cite{DBLP:journals/ijcga/BlazjFL21} and the upper bound of $O(\sqrt{n})$~of \cref{th:series-parallel-lower-bound} on the edge-length ratio of 2-trees. This is another gap that we believe can be further narrowed.
%    \item 
\item It would also be interesting to close the gap between the lower bound of $\Omega(\log n)$~\cite{DBLP:conf/gd/Blazej0L20,DBLP:journals/ijcga/BlazjFL21} and the upper bound of $O(\sqrt{n})$~of \cref{cor:series-edgelength} on the edge-length ratio of 2-trees.
\end{itemize}

%It may be worth remarking that designing FPT algorithms parameterized by structural parameters bounded by the vertex cover number, such as the feedback vertex set number, treedepth, pathwidth, and treewidth is a challenging research direction in the context of graph drawing (see, e.g., ~\cite{DBLP:journals/csr/Zehavi22,DBLP:journals/corr/abs-2309-01264}).

% \subsection{FPT Algorithm by Treedepth}
% \begin{theorem}
% \wlp{s} is \FPT parameterized by the treedepth of the input graph.
% \end{theorem}
% \begin{proof}[Sketch]
% We first prove that \wlp{s} is \FPT parameterized by the vertex cover number $vc$ and $s$ by giving a cubic kernel. We then show that $s$ is always $O(vc^3)$.

% We then prove similar for the modulator to bounded size components. We then use both these results to prove our theorem.
% \end{proof} 

%\bibliographystyle{alpha}

\bibliography{references}

%\renewcommand{\appendixbibliographystyle}{plainurl}
% \renewcommand{\appendixbibliographyprelim}{
%     \nolinenumbers
%     \setcounter{linenumber}{\value{gdlastlinecounter}+1}    
% }
% \appendix

\end{document}